\documentclass[twocolumn,superscriptaddress, floatfix,amsmath,amssymb,nofootinbib,prx]{revtex4-2}
\usepackage[english]{babel}
\usepackage{graphicx}
\usepackage{dcolumn}
\usepackage{bm}
\usepackage{verbatim}
\usepackage{mathrsfs}
\usepackage{natbib}
\usepackage{hyperref}
\usepackage{etoolbox}
\usepackage{orcidlink} 
\usepackage{amsthm}
\usepackage{bbm}
\usepackage[ruled,vlined]{algorithm2e}
\usepackage{booktabs}
\usepackage{multirow}
\usepackage{tabularx}
\usepackage{supertabular}
\usepackage[dvipsnames]{xcolor}
\usepackage[utf8]{inputenc}
\pdfoutput=1

\newcommand{\ket}[1]{\left| {#1} \right\rangle}
\newcommand{\bra}[1]{\left\langle {#1} \right|}
\newcommand{\ematrix}[3]{\left\langle {#1} \left|{#2}\right|{#3}\right\rangle}
\newcommand{\braket}[2]{\left\langle {#1}\left|{#2}\right.\right\rangle}
\newcommand{\proj}[2]{\left| {#1} \right\rangle\!\left\langle {#2} \right|}

\newcommand{\ii}{\mathrm{i}}

\newcommand{\tr}{\operatorname{Tr}}

\newcommand{\bfrac}[2]{\left(\frac{#1}{#2}\right)}

\newcommand{\dimHacc}{\rm{dim}\mathcal{H}}
\newcommand{\Hacc}{\mathcal{H}}
\newcommand{\setClass}{\Gamma}
\newcommand{\setClasses}{\ensuremath{\mathbf{\Gamma}}}
\newcommand{\setInputs}{\ensuremath{\mathcal{X}}}
\newcommand{\setOutputs}{\ensuremath{\mathcal{Y}}}
\newcommand{\intOutputs}{K}
\newcommand{\setZp}{\ensuremath{\mathbb{Z}_p}}
\newcommand{\setSetMods}{\ensuremath{\mathbf{\Xi}}}
\newcommand{\setMods}{\Xi}
\newcommand{\strInput}{\vec{w}}
\newcommand{\charInput}{x}
\newcommand{\charOutput}{y}
\newcommand{\fnClass}[1]{C({#1})}
\newcommand{\fnGuess}[1]{\hat{C}({#1})}

\newcommand{\vecOutput}{Y}
\newcommand{\intLen}{T}
\newcommand{\charEnd}{\Omega}
\newcommand{\agent}{\ensuremath{\mathcal{A}}}
\newcommand{\taskTSC}{\ensuremath{\textrm{TSC}}}
\newcommand{\taskMSC}{\textrm{BPG}($p,\setSetMods$)}

\newcommand{\taskMCC}{\textrm{MCC}($p,\setSetMods$)}

\newcommand{\setData}{\ensuremath{\mathcal{D}}}
\newcommand{\fnLoss}[1]{\ensuremath{L\left({#1}\right)}}
\newcommand{\fnELoss}[1]{\ensuremath{\mathcal{L}\left({#1}\right)}}
\newcommand{\setStatesBasis}{S_b}
\newcommand{\setpStates}{S}
\newcommand{\eRisk}{\mathcal{L}}
\newcommand{\stream}{\vec{\bf{w}}}

\newcommand{\Nmodx}[1]{N \,\,{\rm mod }\,\,{#1}}

\newcommand{\causalstate}{s}
\newcommand{\fnTransition}{\mathcal{T}}
\newcommand{\fnTransOne}{G}
\newcommand{\fnMeasure}{\mathcal{M}}
\newcommand{\tupleAgent}{(\setInputs,\setOutputs,\setpStates,\causalstate_0,\fnTransition)}

\newcommand{\setDiffs}{\Delta}
\newcommand{\tupleAgentShort}{\ensuremath{(\setpStates,\causalstate_0,(\fnTransOne,\fnMeasure))}}
\newcommand{\intPeriod}{h}

\definecolor{blue}{rgb}{0,0.2,1}

\definecolor{red}{rgb}{0.9,0,0}

\definecolor{green}{rgb}{0,1,0}

\theoremstyle{plain}
\newtheorem{definition}{Definition}
\newtheorem{theorem}{Result}
\newtheorem{problem}{Problem}

\newtheorem{lemma}{Lemma}[theorem]

\theoremstyle{definition}

\newcommand{\streamInput}{\mathbf{w}}

\newcommand{\be}{\begin{equation}}
\newcommand{\ee}{\end{equation}}
\newcommand{\SMclassical}{\ref{supp:sec:classical}}
\newcommand{\SMconstruct}{\ref{sec:construction}}
\newcommand{\SMunitary}{\ref{supp:sec:unitary}}
\newcommand{\SMoptimal}{\ref{supp:sec:optimal}}
\newcommand{\SMdist}{\ref{sec:naturalguess}}

\begin{document}
\title{Single-shot online sequence classification with unbounded quantum memory advantage   }
\author{Keith K. Ng\,\orcidlink{0000-0002-6552-8549}}
\affiliation{Nanyang Quantum Hub, School of Physical and Mathematical Sciences, Nanyang Technological University, Singapore}
\affiliation{Data Cybernetics, Vordere Mühlgasse 189, Landsberg am Lech, 86899, Germany}

\author{Haochen Jay Li\,\orcidlink{0009-0002-7025-1879}}
\affiliation{Nanyang Quantum Hub, School of Physical and Mathematical Sciences, Nanyang Technological University, Singapore}
\affiliation{Centre for Quantum Technologies, Nanyang Technological University, Singapore}
\author{Mile Gu\,\orcidlink{0000-0002-5459-4313}}
\email{mgu@quantumcomplexity.org}
\affiliation{Nanyang Quantum Hub, School of Physical and Mathematical Sciences, Nanyang Technological University, Singapore}
\affiliation{Centre for Quantum Technologies, Nanyang Technological University, Singapore}
\author{Jayne Thompson\,\orcidlink{0000-0002-3746-244X}}
\email{jayne.thompson@ntu.edu.sg}
\affiliation{Centre for Quantum Technologies, Nanyang Technological University,  Singapore}
\affiliation{College of Computing and Data Sciences, Nanyang Technological University,  Singapore}
\affiliation{School of Physical and Mathematical Sciences, Nanyang Technological University,  Singapore}

\begin{abstract}
An agent monitors a complex environment, receiving one observation at each time step and eventually deciding how to label the resulting sequence. Does the sequence indicate an anomaly, and if so, of what type? Does it signal market instability, and to what degree? This is the setting of online multi-class classification: the input arrives sequentially, the full history is never available at once, and the agent must retain any past information relevant to the eventual decision. As the environment becomes more complex, the memory needed to track this information can grow rapidly without bound. Here, we introduce families of such multi-class classification games and show that any exact classical agent requires memory that grows without bound, whereas exact quantum agents can solve all tasks in the family with bounded memory. This separation is sharp: any classical agent using less than the required memory, under suitable input distributions, performs arbitrarily close to random guessing. Moreover, our quantum constructions are provably memory minimal, allowing us to derive the exact classical and quantum memory complexities to perform such tasks. In doing so, we establish an unbounded separation between classical and quantum memory cost for online multi-class classification.
\end{abstract}

\maketitle

In online classification, an agent receives an input at each time step. Over time, these inputs form a string that the agent must classify into one of $K$ disjoint classes. Since the input arrives sequentially and the full history is never available all at once, at each time step the agent must determine which past features remain relevant for later decisions. Such settings arise whenever a device must classify streaming data on the fly, for example when monitoring sensor streams for anomalies, reacting to evolving financial data, or detecting and classifying errors from sequential measurements~\cite{angelatos2021reservoir}. In this context, it is natural to ask what and how much of the data observed so far an agent needs to retain to classify correctly. In addition to the obvious benefit of reduced memory resources during model inference, the `how much'' aligns with complexity-theoretic measures designed to identify what input-output behavior is most complex~\cite{crutchfield1989inferring,barnett2015computational},  while the former can guide more effective learning and generalization in data-scarce settings~\cite{lv2022causality,jonschkowski2015learning,banchi2021generalization}.

\emph{Can quantum mechanics fundamentally reduce this memory cost}? At first, this appears uncertain. Quantum models can achieve striking expressive power by encoding data into continuously varying quantum states~\cite{li2023quantum,grant2018hierarchical,perez2020data,dutta2022,sornsaeng2024quantum,napat2021}. However, classification is typically decided from quantum measurement expectation values - requiring many repeated runs during inference to estimate the correct label. The memory cost must account for the number of copies needed to support that estimation, so expressive power need not imply a memory advantage. By contrast, in the single-shot setting studied here, the agent must make its decision based on a single measurement~\cite{recio2025single}. If a binary feature can lead to different future classifications, optimal performance demands that an agent store different values of this feature in fully distinguishable states. This remains true even if our agents are quantum, and thus there is no obvious reason to anticipate a quantum memory advantage for lossless classification.

Here, we demonstrate otherwise: \emph{quantum agents can exhibit unbounded memory advantage in online, single-shot multi-class classification}. This is true for any number of classes. For each $K \geq 2$, we exhibit families of online classification tasks involving sequential data generated by a classical system with $p$ possible hidden configurations. To achieve perfect classification, a classical agent must maintain a perfect mental model of the system whenever $p$ is prime. The classical memory complexity of the task thus scales without bound with $p$. Moreover, any classical agent operating below the classical threshold can be forced, under suitable input distributions, to perform arbitrarily close to random guess. In contrast, we present quantum agents that achieve perfect classification using memory that is bounded above by a finite value that does not scale with $p$. Our quantum agents are provably memory-minimal and thus enable exact determination of the task's quantum memory complexity. Thus, we establish a clear unbounded separation between classical and quantum memory complexity in online data-stream classification during model inference.

\begin{figure*}[tbp]
    \centering
    \includegraphics[width=0.95\textwidth]{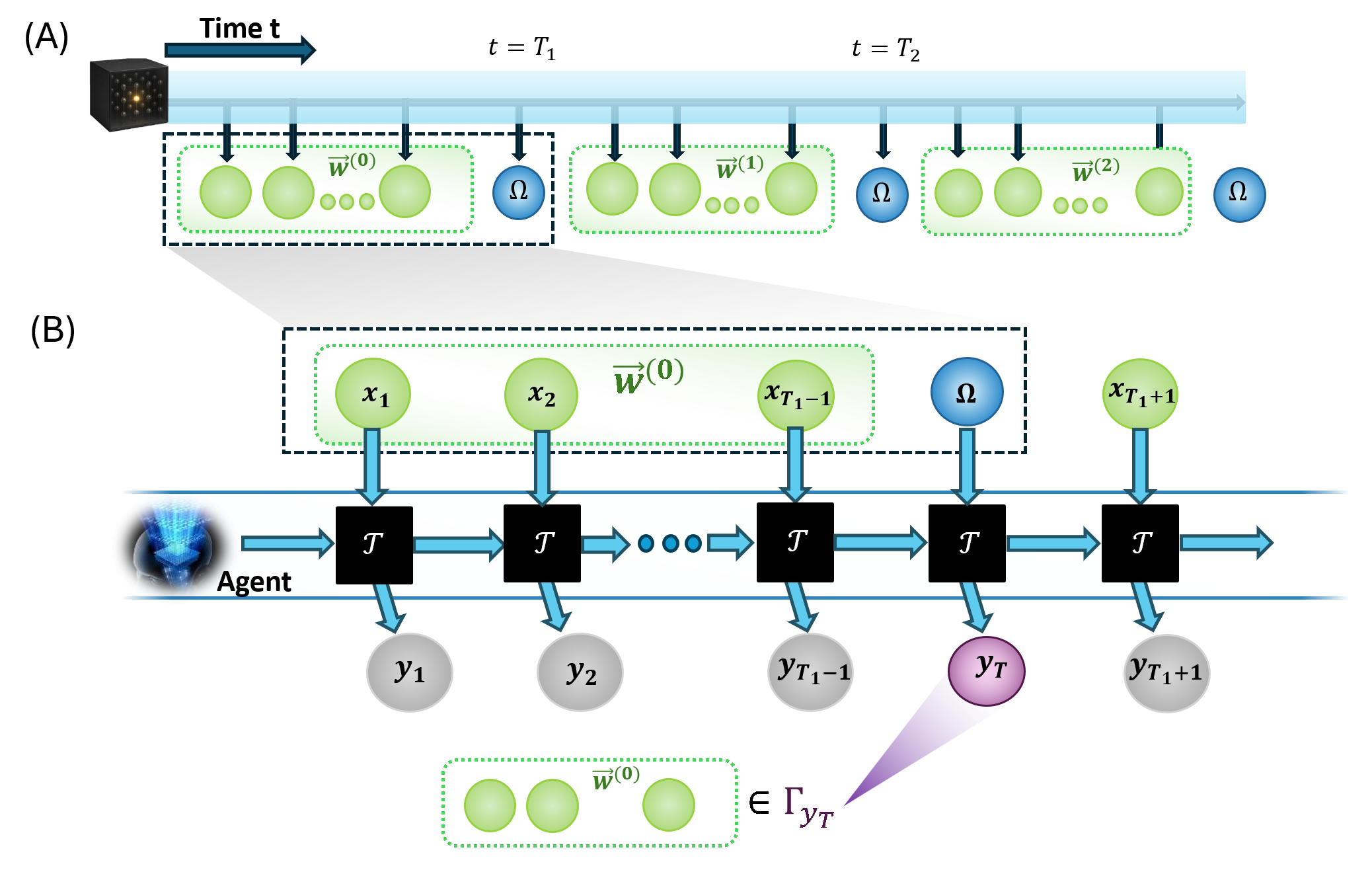}
  \caption{\textbf{Online Sequence Classification.} (A) An environmental system emits an input symbol $\charInput_t$ at each time $t$. Consecutive input symbols (green orbs) form words, which are separated by the delimiter symbol $\charEnd$ (blue orbs). (B) An online sequence-classification agent receives each $\charInput_t$ as it is emitted. To classify successfully, whenever the agent receives a delimiter $\charInput_{T_k}=\charEnd$, its output $\charOutput_{T_k}$ (purple orb) must be the label of the class containing the preceding word $\strInput^{(k-1)}=\charInput_{T_{k-1}+1}\cdots\charInput_{T_k-1}$. To do this, the agent generally requires a memory system $S$, which interacts sequentially with each input $\charInput_t$ through the corresponding operation $\mathcal{T}^{\charInput_t}$. At all other times, when $\charInput_t\neq\charEnd$, the agent may output any $\charOutput_t$ (gray orbs) without penalty.}
    \label{fig:transducercircuit}
\end{figure*}

\section{Framework}

\textbf{Online Classification}. We consider an environmental system that emits a sequence of symbols $\charInput_t \in \setInputs$ at discrete time steps $t \in \mathbb{N}$, starting from $t=0$, where the finite input alphabet $\setInputs$ contains a special string-delimitation symbol $\charEnd \in \setInputs$. Whenever $x_t=\charEnd$, the current input word terminates. At the next time step a new word begins. For $k \geq 1$, let $T_k$ denote the time of the $k^{\mathrm{th}}$ occurrence of $\charEnd$, and set $T_0=-1$. The resulting data stream $\streamInput=\strInput^{(0)}\charEnd\,\strInput^{(1)}\charEnd\,\strInput^{(2)}\charEnd\cdots$ can then be decomposed into a sequence of variable-length words $\strInput^{(k)} = \charInput_{T_k+1}\charInput_{T_k+2}\cdots \charInput_{T_{k+1}-1}$ punctuated by $\charEnd$'s.

Each specific classification task is defined by a collection of $K$ disjoint classes $\setClasses=\{\setClass_j\}_{j=1}^{\intOutputs}$, where $\setClass_j$ is the set of variable-length words associated with the class label $j$. Each word $\strInput$ belongs to at most one class, though not all words necessarily belong to a class. We then challenge an agent to determine, upon receiving each delimiter, which class the preceding word belongs to. That is, upon receiving the $k^{\mathrm{th}}$ delimiter at time $T_k$, the agent must guess the label of the preceding word $\strInput^{(k-1)}$. \emph{Optimal performance} occurs when the agent is always successful, i.e., it outputs $j$ whenever the preceding word $\strInput^{(k-1)} \in \setClass_j$.

Non-optimal play occurs when an agent can make incorrect guesses with non-zero probability. Let $P_{\mathrm{fail}}(\strInput)$ be the probability that the agent outputs some $j' \neq j$ when $\strInput$ lies in some class $\setClass_j$ and $0$ otherwise. We quantify performance by the loss $\fnELoss{\agent} = \mathbb{E}_{\strInput \sim \setData}[P_{\mathrm{fail}}(\strInput)]$, where the expectation is taken over a given distribution $\setData(\strInput)$ on input words. Clearly, an optimally performing agent has zero loss.

\textbf{Agents and Memory}. Here, we require classification to be performed \emph{online}. That is, at each time step $t$, the agent receives the current input symbol $\charInput_t \in \setInputs$, updates its internal state, and upon receiving a delimiter symbol $\charEnd$, must output a guess for the class of the preceding word in real time. Because the agent cannot revisit the past, it must instead store in memory any features of the observed history that remain relevant for eventual classification. This need for memory already appears in simple examples: if words $001$ and $101$ belong to different classes, the agent must retain the value of the first input bit.

More formally, we describe an online agent $\agent$ by a physical memory system $S$ initialized in some default state $s_{0} = s_{\mathrm{init}}$ at $t=0$. At each time step $t$, the agent's dynamics is governed by a policy: a collection of input-dependent physical operations $\mathcal{T} = \{\mathcal{T}^x\}_{x \in \mathcal{X}}$ on $\mathcal{S}$, that (i) emits some output $y \in \mathcal{Y}$ and (ii) updates the memory from state $s_t$ to $s_{t+1}$. Each agent $\mathcal{A}$ can then be formally defined by the tuple $(S, s_0, \mathcal{T}, \mathcal{X},\mathcal{Y})$.

In line with other single-shot settings, we quantify the size of a memory $S$ by its memory dimension $M$, defined as the maximum number of perfectly distinguishable memory states that can be stored~\cite{vieira2022temporal,elliott2020extreme,konig2005power,barnett2015computational}. For a quantum agent $\agent_Q$, this is the dimension of its memory Hilbert space $\mathcal{S}$; such that its memory state is some arbitrary density operator on $\mathcal{S}$. For a classical agent $\agent_C$, the memory is a finite-state register with $M$ distinct internal configurations, so that its state is a probability distribution over these configurations. Thus, in both the classical and quantum settings, $M$ measures the number of perfectly distinguishable memory states that can be stored. In particular, an $n$-bit memory and an $n$-qubit memory both have memory dimension $2^n$.

\section{Results}

Given a classification game specified by a collection of disjoint nonempty classes $\setClasses=\{\setClass_j\}_{j=1}^{\intOutputs}$ with $K \geq 2$, we define its classical and quantum memory complexities, denoted respectively by $M_C$ and $M_Q$, as the minimal memory cost required to perform the task with zero loss using classical and quantum agents. The central result of this manuscript can then be formally stated.

\begin{theorem}
\label{thm:main}
There exists an unbounded separation between the classical and quantum memory complexities $M_C$ and $M_Q$. In particular, there exists a family of classification games parametrized by a prime number $p$ for which both $M_C$ and $M_Q$ can be evaluated exactly. We show that $M_C$ grows without bound as $p \rightarrow \infty$, while $M_Q$ remains bounded for all $p$. Moreover, for every $\epsilon>0$, any classical agent with memory
size $M<M_C$ can be subjected to an input distribution under which it performs no more than $\epsilon$ better than random guessing.
\end{theorem}

We justify this result by first introducing a collection of \emph{Prize-wheel Games} encompassing families of classification games. Each family is a variant of an agent tracking a rotating wheel with $p$ possible configurations. We then proceed to (1) derive the ultimate performance limits of classical agents in this game, showing that perfect play requires memory $M_C = p$, while any smaller memory reduces performance arbitrarily close to that of random guessing among the $K$ possible labels under suitable input distributions; (2) construct quantum agents for such games whose memory remains bounded for all $p$, thereby establishing an upper bound for $M_Q$; and (3) prove that this construction is optimal, thereby determining $M_Q$ exactly.

\begin{figure}[tbp]
    \centering
    \includegraphics[width=0.38\textwidth]{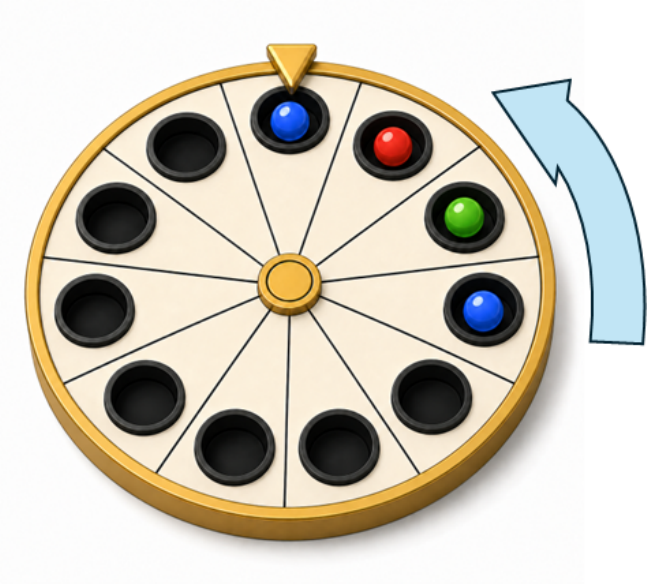}
    \caption{\textbf{Blackbox Prize-wheel Games.} Consider a prize-wheel with \(p\) visually indistinguishable partitions. Each partition contains a cup that may hold a colored marble. The initial configuration of the marbles is publicly announced as part of the game rules. At each round, the wheel is spun. Our agent can only observe the notches it has advanced.
    Upon receiving a stopping signal, the agent must name a color and is rewarded if the partition aligned with the designated marker contains a marble of that color. In the game pictured, classical agents require a memory dimension of $11$ to guarantee performance substantially better than random guessing. In contrast, a quantum agent needs a memory dimension of 5 for perfect play.}
    \label{fig:ratchet}
\end{figure}

\textbf{The Blackbox Prize-wheel Game}. Our collection of games can be understood through the spinning prize wheel of Fig.~\ref{fig:ratchet}. In full generality, the game consists of a wheel with $p$ visually identical partitions, for some prime number $p$, each of which may contain at most one marble. The marbles can take any of $K \ge 2$ possible colors, described by the color set $\setOutputs$. The wheel is fixed to an axis and can rotate relative to a prize marker. We use the phrase `rotate the wheel by $x$' as shorthand for rotating the prize wheel counterclockwise by $x$ partitions, i.e., by an angle of $2x\pi/p$.

We label the partitions by integers $0 \leq j \leq p-1$ such that partition $0$ is the one initially flagged by the prize marker, and partition $j$ is the $j^{th}$ partition from partition $0$, counting clockwise. We can then specify the initial configuration of the wheel by an ordered list $\mathbf{c}_{\mathrm{def}}=(c_0,c_1,\ldots,c_{p-1})$, where $c_j = y$ for some $y\in\setOutputs$ if partition $j$ contains a marble of color $y$ and $c_j= \emptyset$ if partition $j$ is empty. We refer to this as the \emph{default configuration}.
Equivalently, for each color $y\in\setOutputs$, we may define $\setMods_y \subseteq \mathbb{Z}_p$ as the set of partition numbers containing a marble of color $y$, i.e. $\setMods_y=\{j\in\mathbb{Z}_p : c_j=y\}$. The default configuration is then fully specified by the collection $\setSetMods = \{\setMods_y\}$.

At each time step, the game-master Victor either rotates the wheel by some $x_t \in \mathbb{Z}_p$, or issues a challenge for the agent to guess the color of the marble currently under the prize marker. The agent incurs no loss if it guesses correctly when the marked partition is non-empty, or if it makes any guess when the marked partition is empty. After each challenge, the wheel resets to the default configuration, and the procedure repeats. Thus, we can think of $p$ as the complexity of our environment, designating the minimal memory needed to simulate the world model our agent lives in~\cite{johnson1983mental}.

This collection of games fits exactly into our online classification framework. Each rotation corresponds to an input symbol $x_t \in \mathbb{Z}_p$, while each challenge corresponds to the delimiter input $\charEnd$. Thus, the agent receives a stream $\streamInput=\strInput^{(0)}\charEnd\,\strInput^{(1)}\charEnd\,\strInput^{(2)}\charEnd\cdots$, where each word $\strInput$ is a finite sequence of rotations between two successive challenges. The color of the marble is then entirely defined by the total rotation of the wheel relative to the default configuration - in other words, by properties of each word $\strInput^{(k)} = x_{T_k+1}x_{T_k+2}\dots x_{T_{k+1}-1}$. Setting $\setClass_y$ to be the class of words that lead to a marble of color $y$, we find
\begin{equation}\label{eq:MSClabel}
\strInput^{(k)}  \in \setClass_y
\qquad\textrm{iff}\qquad
\left(\sum_{i= T_k + 1 }^{T_{k+1} -1} \charInput_i\right)\!\!\!\!\pmod p \in \setMods_y .
\end{equation}
In other words, after a word $\strInput^{(k)} = x_{T_k+1}x_{T_k+2}\dots x_{T_{k+1}-1}$ of wheel rotations, the color under the prize marker depends only on the cumulative rotation modulo $p$. Each default configuration therefore defines a different classification task, which we abbreviate as $\taskMSC$.

\textbf{Classical Limits}. We begin by establishing the minimal memory needed to play $\taskMSC$ with zero loss.

\begin{theorem}
\label{thm:classical-memory}
The classical memory complexity $M_C$ of $\taskMSC$ is exactly $p$, provided that the configuration $\setSetMods$ has at least two classes.
\end{theorem}

We sketch the intuition here and leave the details to Appendix~\SMclassical. First, observe that the prize wheel has at most $p$ possible rotated configurations we can reach during any particular game. Tracking these configurations in mutually distinguishable states is sufficient for perfect classification. Thus we can set $\agent_C$ to have memory states $\mathcal{S}_C = \{s_j\}_{j=0}^{p-1}$ in one-to-one correspondence with these configurations, where $s_j$ labels the configuration obtained by rotating $\mathbf{c}_{\mathrm{def}}$ by $j$. The agent then solves this problem by (i) initializing in state $s_0$, (ii) updating $s_j$ to $s_{(j+x)\bmod p}$ on input $x$, and
(iii) upon input $\Omega$, outputting the color associated with
the current configuration $s_j$ and resetting to $s_0$. As such, $M_C \leq p$.

To see that no smaller machine can succeed, note that if $M_C<p$, then the agent's memory cannot distinguish all $p$ configurations of the prize wheel with certainty. There must therefore be two configurations, say, $s_j$ and $s_{j+\delta}$ (with all additions understood modulo $p$) the agent cannot distinguish perfectly. If the wheel is in one of these configurations and the agent subsequently receives input $1$, it can no longer determine whether the wheel has moved to $s_{j+1}$ or to $s_{j+1+\delta}$. By the data-processing inequality~\cite{beaudry2011intuitive}, this ambiguity cannot decrease under further inputs. Iterating this observation shows that the agent gradually loses track of the wheel's configuration altogether. In fact, this loss of information allows us to constrain the performance of any such memory-limited classical agent (see Appendix~\SMclassical\ and \SMdist).

\begin{theorem}
\label{thm:classical-loss}
Suppose a classical agent $\agent_C$ has memory cost $M<p$. Then, for any $\epsilon>0$, there exists a distribution $\setData_{\epsilon}(\strInput)$ such that the expected loss $\fnELoss{\agent_C}$ of the agent on $\taskMSC$ satisfies
\begin{equation}
 \fnELoss{\agent_C} \ge  1 - \frac{1}{K}-\epsilon,
\end{equation}
where $K$ is the number of classes in $\setSetMods$.
\end{theorem}

That is, an adversarial game-master Victor can always choose a distribution $\setData_{\epsilon}$ over words (wheel-spins) such that $\agent_C$ performs no more than $\epsilon$ better than random guessing.

\begin{algorithm}
\caption{Quantum Agent Construction} \label{main:alg:construction}
\SetKwInOut{Input}{input}
\SetKwInOut{Output}{output}
\SetInd{0.5em}{1.25em}
\SetAlgoHangIndent{1.25em}
\Input{prime $p > 2,\setSetMods=\lbrace \setMods_j\rbrace_{j=1}^\intOutputs$}
\Output{$\agent_Q = (\mathcal{S},s_{\mathrm{init}}, \mathcal{T}, \mathcal{X}, \mathcal{Y})$}
\Begin{
    define $\delta_0=0$\;
    set $\delta_k$ to the $k^{\mathrm{th}}$ largest element of $\mathcal{N}_{\setSetMods}$ for k = 1,2,\ldots $N_{\setSetMods}/2$\;
    let $\theta = 2\pi/p$\;
    define ($(|\mathcal{N}_{\setSetMods}|+1) \times p$) matrix $A$ such that \begin{equation*}
    A_{kj}=\begin{cases}
        \cos j\delta_{k'}\theta & k=2k' \textrm{ for } k' \le |\mathcal{N}_{\setSetMods}|/2,\\
        \sin j\delta_{k'}\theta & k=2k'-1 \textrm{ for } k' \le |\mathcal{N}_{\setSetMods}|/2
        \end{cases}
    \end{equation*}
    for $k=0,1,\dots,|\mathcal{N}_{\setSetMods}|$, $k' \in \mathbb{Z}$\;
    let $\mathbf{e}_0 \in \mathbb{R}^{|\mathcal{N}_{\setSetMods}|+1}$ be an $(|\mathcal{N}_{\setSetMods}|+1)$-dimensional vector with $1$ on its first entry and $0$ elsewhere\;
    find nonnegative solution vector $\mathbf{b}\in \mathbb{R}^p$, $\mathbf{b}\ge 0$ to matrix equation
    \begin{equation*}
    A\mathbf{b}=\mathbf{e}_0
    \end{equation*}
    with at most $|\mathcal{N}_{\setSetMods}|+1$ nonzero entries, e.g. using simplex method \cite{stone1991simplex}; by Appendix~\SMconstruct\ this is always possible\;
    let $M \le |\mathcal{N}_{\setSetMods}|+1$ be the number of nonzero entries in $\mathbf{b}$\;
    let $l_i$ be the row-number of the $i$-th nonzero element of $\mathbf{b}$ for $i = 1,2,\ldots, $\;
    let $\mathcal{H}$ be a Hilbert space with an orthonormal basis $\lbrace \ket{\phi_1},\ket{\phi_2},\dots,\ket{\phi_M}\rbrace$\;
    let $\mathcal{S}$ be the density matrices over $\mathcal{H}$\;
    let $ \ket{\sigma_0}=\sum_{i=1}^M \sqrt{b_{l_i}}\ket{\phi_i}$\;
    define $U$ such that $U \ket{\phi_k} = e^{\ii l_k \theta} \ket{\phi_k}$ for $K = 1,\ldots,M$\;
    let $\ket{\sigma_j}=U^j\ket{\sigma_0}$ for $j=0,1,2,\dots,p-1$ \;
    let $\mathcal{M}$ be the measurement according to the observable $\sum_{j=0}^K j P_j$, where $P_j$ is the projection operator onto $S_j=\mathrm{Span}_{k \in \Xi_j}\{\ket{\sigma_k}\}$ for $j>0$, and a rejection operator $P_0 = I - \sum_{j=1}^K P_j$ with output 0 is added for completeness\;
    let $\mathcal X=\mathbb Z_p\cup\{\Omega\}$ and
    $\mathcal Y=\{0,1,\ldots,K\}$\;
    let $s_{\mathrm{init}} = \ket{\sigma_0}\bra{\sigma_0}$\;
    let $\mathcal{T}^x$ for $x \neq \Omega$ be the quantum operation that (i) applies the unitary gate $U^x$ to $S$ and (ii) outputs $y = 0$\;
    let $\mathcal{T}^\Omega$ be the quantum operation that (i) applies $\mathcal{M}$ to $S$, emitting measurement outcome $j \in \mathcal{Y}$ as output, followed by (ii) resetting $S$ to state $s_{init}$\;
}
\end{algorithm}

\textbf{Designing Quantum Agents}. Quantum memory offers an additional opportunity to reduce memory costs without sacrificing performance. To characterize when this is possible, consider the case where,
for some $\Xi_y$ and $\Xi_{y'}$, we have $j\in\Xi_y$ and
$j+\delta\in\Xi_{y'}$ for some $j$ and some $y\neq y'$.
We refer to such a $\delta$ as a \emph{colliding rotation} for
$\setSetMods$. This implies that if we rotate the wheel by $\delta$, then the marble in partition $j$ will change from color $y$ to $y'$. We now define \emph{collision set} $\mathcal{N}_{\setSetMods} = \{(\xi_i - \xi_j) \mod p\,| \, \xi_i \in \setMods_i, \xi_j \in \setMods_j,  \forall \,\,\setMods_i, \setMods_j \in \setSetMods \textrm{ such that }i\neq j\}$ as set of all possible colliding rotations, and the \emph{collision degree} $N_{\setSetMods} = |\mathcal{N}_{\setSetMods}|$ as total number of colliding rotations that a prize-wheel configuration possesses. Our next result shows that the quantum agent constructed using Algorithm~\ref{main:alg:construction} has memory dimension $M = N_{\setSetMods}+1$ (see the proof in Appendix~\SMconstruct).

\begin{theorem}
\label{main:thm:quantumexists}
The quantum agent $\agent_Q$ constructed by Algorithm~\ref{main:alg:construction} solves \taskMSC, and has memory cost at most
\begin{equation}
M = N_{\setSetMods} + 1,
\end{equation}
where $N_{\setSetMods}$ is the collision degree of $\setSetMods = \{\setMods_j\}$.
\end{theorem}

  Thus, $N_{\setSetMods}$ does not necessarily scale with $p$. As illustrative examples, consider the following: (i) a configuration with a blue marble sandwiched between green marbles. This has collision degree $N_{\setSetMods} = 2$, since rotations by $\pm 1 \bmod p$ both lead to a collision. Thus $M_Q \le 3$. (ii) A configuration with three marbles of different colors in sequence (i.e., $\setMods_i = \{i\}$) has collision degree $4$ for any $p \geq 5$, and thus $M_Q \le 5$. Of course, not all configurations exhibit a quantum advantage; for example, this is not the case when there are $K$ different colors with $K > p/2$.

Operationally, the quantum agent $\agent_Q$ has a quantum memory system $S$ whose states lie within an $M = N_{\setSetMods} + 1$ dimensional Hilbert space. We associate each classical state $s_j$ with a corresponding pure memory state $\ket{\sigma_j}$, whose exact form can be determined systematically by Algorithm~\ref{main:alg:construction}. On  input $x \in \mathbb{Z}_p$, applying a unitary $U^x$, as defined in Algorithm~\ref{main:alg:construction} such that
\[
U^x \ket{\sigma_j} = \ket{\sigma_{j+x  \textrm{ mod } p}},
\]
allows the quantum agent to mimic the dynamics of the wheel. On input $\Omega$, Algorithm~\ref{main:alg:construction} then determines a projective measurement on $S$ whose outcome is guaranteed to coincide with the color of the marble to be predicted. The above result therefore establishes that the quantum memory complexity of \taskMSC\ is upper bounded by $N_{\setSetMods}+1$. One might wonder whether one could do better by associating the $s_j$ with mixed states, using non-unitary update rules, or allowing more general measurements. Our final result shows that no such variation can reduce the quantum memory below $N_{\setSetMods}+1$ (Proof in Appendix~\SMunitary\ and \SMoptimal).

\begin{theorem}
\label{thm:quantumlowerbound}
Any quantum agent $\agent_Q$ solving \taskMSC\ with zero loss has memory cost at least $M = N_{\setSetMods}+1$. In particular, the agent constructed by Algorithm~\ref{main:alg:construction} is memory-optimal.
\end{theorem}

Therefore, we have $M_Q = N_{\setSetMods}+1$. This immediately enables us to construct a plethora of different families of classification problems in which separation between classical and quantum memory complexity can scale without bound. For example, consider the subclass of classification problems BPG1($p,K$)=\taskMSC\ with $\setMods_i = \{i\}$, corresponding to the case where we have $K$ different color marbles initialized in sequence. Then $M_Q = 2K - 1$ for sufficiently large $p$ while $M_C = p$. As such, for each fixed $K$, we have a family of $K$-class classification games where quantum advantage grows without bound with increasing $p$.

\section{Discussion} Memory is a key resource in online classification, enabling agents to correctly label sequentially observed data even when the correct label depends critically on data from the distant past. Here, we introduced a collection of such classification tasks - including those with \(K\) classes for any \(K \geq 2\) - involving queries about a dynamically evolving system that can be in \(p\) possible configurations. We showed that, for prime \(p\), perfect classical play requires a memory dimension of \(p\), corresponding to exact tracking of the system's configuration. In contrast, for each fixed \(K\), we constructed quantum agents that achieve perfect classification with memory costs bounded independently of \(p\). This separation is sharp: below the classical memory threshold, suitably chosen input distributions force any classical agent to perform arbitrarily close to random guessing. Furthermore, we show that our quantum constructions are memory-minimal, and thereby establish an exact separation between classical and quantum memory complexity that can grow without bound. Our results therefore identify a new route to quantum advantage during model inference, complementary to existing computational and memory advantages in model discovery~\cite{rebensvm2014,zhao2026exponential}.

A natural next direction is to identify commonalities between quantum memory advantage in online classification and analogous advantages in other sequential tasks~\cite{ambainis98,kallaugher2025design,vieira2022temporal,Chengran2023,yang2025dimension,lumbreras2026irreduciblequantumadvantagealigning,sundar2026dimensionreductionquantumadaptive}. For example, our classification games with \(K=2\) labels encompass previously studied promise problems exhibiting a quantum memory advantage~\cite{tian2019experimental}, suggesting that quantum model optimality results could translate to such settings. Meanwhile, memory advantage has been shown to imply energetic advantage in the context of quantum stochastic models and quantum agents for replicating specific input-output behaviors~\cite{loomis2020thermal,thompson2025energetic}. Could such memory advantages also yield a quantum energetic advantage in classification? More generally, one may expect more favorable memory--performance tradeoffs once perfect play is relaxed, as already observed in timekeeping and stochastic modeling~\cite{Chengran2023,woodsclock,yang2025dimension}.


Learnability is another compelling continuation: the quantum models introduced here are hand-built, but can they also be systematically discovered? Here, memory-constrained model classes may be especially valuable, since they enjoy tighter generalization bounds, a phenomenon that persists even when data are encoded quantum mechanically~\cite{banchi2021generalization}. Our quantum memory advantage therefore suggests the possibility of a corresponding learning advantage, potentially mitigating a major concern about single-shot machine learning in unconstrained-memory settings~\cite{recio2025single}. Should we formally establish such an advantage, it would represent an exciting pathway for quantum machine-learning advantage in data-scarce environments.

\section*{Acknowledgements}
This work is supported by the National Research Foundation of Singapore through the NRF Investigatorship Program (Award No. NRF-NRFI09-0010), the National Quantum Office, hosted in A*STAR, under its Centre for Quantum Technologies Funding Initiative (S24Q2d0009), the RIE 2025 AQAS projects S25Q9D001 and S25Q9D002, the Singapore Ministry of Education Tier 1 Grant RT4/23 and RG91/25 and the RIE25 Japan-Singapore Joint Call on Quantum (Project ID H25-MRO3490).

\emph{AI Disclosure}. The authors derived all technical results and produced main manuscript drafts. OpenAI’s ChatGPT and Codex were used to assist with
proofreading, checking for notational consistency, and making the prize wheel in Figure 2 look more attractive. All content was reviewed and verified by the authors.

\bibliography{mile_promise}

\begin{appendix}
\section{Time Series Classification}

Consider an agent that is interacting with its environment. At each point in time $t \in \mathbb{N}$ the agent receives a character $\charInput_t \in \mathcal{X}$. The agent continues to collect inputs over multiple consecutive time steps, until it receives a  termination character $\charInput_\intLen = \Omega$.

We define the resulting sequence of inputs the agent has received as a string $\strInput=[\charInput_0, \charInput_1, \charInput_2, \ldots, x_{\intLen-1}]$. Strings can be of arbitrary length, so there is no way to anticipate in advance when the termination character $\charInput_T = \Omega$ will be received.

In general we expect the agent to encounter this data in an  online streaming format. Such that the agent receives inputs $x_t \in \mathcal{X}$ over sequential points in time $t \in \mathbb{Z}$, culminating in a termination character which delineates a word $\strInput^{(i)}$, and immediately on the next time step the agent receives the first character of the next string $\strInput^{(i+1)}$. Over time this results in a data stream:
\begin{equation}
    \stream = \strInput^{(0)} \Omega\strInput^{(1)}\Omega\ldots \Omega\strInput^{(k-1)}\Omega\ldots
\end{equation}
where we refer to $\strInput^{(k-1)}$ as the $k^{th}$ string.

Each string $\strInput$ in the data stream, {\it can have} an associated classification label $\fnClass{\strInput}.$ For convenience, we will also define the \textit{class} of a word as the set of strings with the same classification, i.e.
\begin{equation}
    \setClass_j = \left\{\strInput \, \middle| \,\fnClass{\strInput}=j\right\}.
\end{equation}
If a string $\strInput \in \Gamma_j$, then we say the `correct' classification label $\fnClass{\strInput} = j$.  We assume there are $K$ classes, such that the number of classes $\setClasses= \{\Gamma_j\}_{j= 1}^K$ is finite. If a string $\strInput \notin \cup_j \Gamma_j$ does not belong to any class, then we say it does not have a correct classification label.  We thus consider $\fnClass{\strInput}$ as a \textit{partial} map from inputs to outputs.

The problem of Time Series Classification (\taskTSC) can then be defined as follows:
\begin{problem}
 \label{def:TSC} {\bf Time Series Classification (\taskTSC):}
Consider a data stream $ \overrightarrow{w} $ comprised of strings $\strInput$, with `correct' classification labels $\fnClass{\strInput}$ whenever $\strInput \in \cup_j \Gamma_j$.
The agent receives an input character $\charInput_t$, at each sequential point in time $t\in \mathbb{N}$, until it sees the termination character $\charInput_\intLen = \Omega$. At this timestep, the $\agent$ must output a guess $y_T =  \fnGuess{\strInput}$, such that it obtains zero loss under the loss function
\begin{equation}
\fnLoss{\fnGuess{\strInput},\fnClass{\strInput}} = \begin{cases}
0 & \text{if }  \fnGuess{\strInput}= \fnClass{\strInput}  \text{ or } \strInput \notin \bigcup_j \Gamma_j \\
1 & \text{otherwise}
\end{cases}
\end{equation}
\end{problem}

Our loss function is chosen to award a loss of 1  if the classification is incorrect, and 0 otherwise.  We say the agent solves \taskTSC \, if it always obtains zero loss. We play this game on repeat requiring the agent to classify every word in the data stream correctly.

Supposing the agent cannot solve \taskTSC, then there will be some words for which it incurs a non-zero loss. To benchmark such an agent, a referee will choose strings in the data stream from some distribution $\setData$, and score the agent based on its loss for the tasks \taskTSC. The expected loss per word $\fnELoss{\agent}$ over $\setData$, is then equal to the probability of misclassification:

\begin{equation}
\fnELoss{\agent} = \sum_{{\strInput} \sim \setData}  \setData({\strInput}) \sum_{y\in\mathcal Y}
P(\fnGuess{\strInput} = y)
L\!\left(y,\fnClass{\strInput}\right)
\end{equation}

In this paper, we will assume that the distribution $\setData$ is \textit{not known a priori}. An agent cannot in general use foreknowledge of $\setData$, to optimize its construction.

\begin{figure*}[hbtp]
    \centering
    \includegraphics[width=0.9\textwidth]{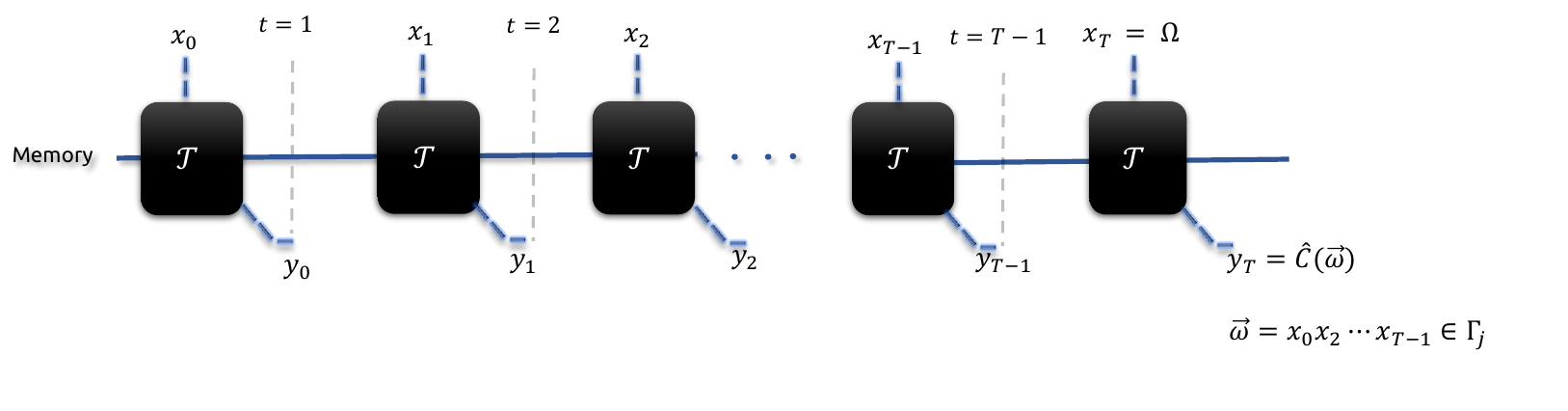}
    \caption{At each time step $t$ an agent receives an input symbol $x_t$, drawn from a stream  $\strInput= x_0 x_1 \ldots x_{\intLen-1}$. This stream is terminated by an $\Omega$ at time $t=\intLen$.  Upon receiving the termination input, the agent must respond by (probabilistically) emitting the label $y_{\intLen} = j$ corresponding to the word's classification.
    }
    \label{fig:transducer}
\end{figure*}

\section{Classical and Quantum Agents}

Suppose we designed an agent to solve this problem. This agent (shown in Fig. \ref{fig:transducer}) would accept input $\charInput_t$ at every time-step, storing information about previous inputs in memory. We will assume the agent can also emit an output $\charOutput_t$ at every timestep, although $\charOutput_t$ {\it will not be used to evaluate the loss function whenever} $\charInput_t \neq \Omega$. On receiving a termination character $\charInput_{\intLen} = \Omega$, the agent classifies the string of received inputs $\strInput$ into one of $K$ possible classes, outputting its guess as $\charOutput_\intLen = \fnGuess{\strInput} $.

Let us now define our agent model.
\begin{definition}
A \textit{classifying agent} $\mathcal{A}$
is defined by a tuple
\begin{equation}
\label{eq:agentgentuple}
    \mathcal{A} = \tupleAgent
\end{equation}
where
\begin{itemize}
    \item $\setInputs = \{x\}$ is the set of inputs it may accept,
    \item $\setOutputs =\{y\} \supseteq \{1,\ldots, K\} $ is the set of outputs it may emit,
    \item $\setpStates \subseteq D(\mathcal{H})$ is the set of memory states the agent may store in its memory, where $D(\mathcal{H})$ denotes the density operators on a finite-dimensional Hilbert space $\mathcal{H}$, which we call the agent's \textit{memory space},
    \item $\causalstate_{init}\in \setpStates$ is the initial state of the agent's memory (here often set as $s_0$ to simplify notation), and
    \item $\fnTransition$ specifies a policy map that acts jointly on the Hilbert space encoding the input $x \in \mathcal{X}$ and $\mathcal{H}$. The policy specifies how the agent selects the output $y \in \mathcal{Y}$ and updates the state of the memory $\mathcal{S}$.
    \end{itemize}
\end{definition}

At every timestep, the agent's memory state \(s_t\) is represented by a
density operator on the memory space $\mathcal{H}$. To  represent this state we introduce a set of pure states $\setStatesBasis=\{\sigma_i\}$.

When the agent is classical, we enforce these pure states to form an orthogonal basis     $\sigma_i:=\ket{i}\!\bra{i}$ such that $\{\langle  i |j\rangle  = \delta_{ij}, \,\,  \forall \sigma_i, \sigma_j \in \setStatesBasis\}$
\[
    \mathcal{S}_C
    =
    \left\{
    s=\sum_i p_i \sigma_i
    \;:\;
    p_i\ge 0,\ \sum_i p_i=1
    \right\}.
\]
Thus, in each individual run, the classical machine occupies one definite
configuration \(\sigma_i\), while the distribution \(\{p_i\}\) represents
our uncertainty about the current state, based on having seen the past history of inputs and outputs. Equivalently $p_i$ is the probability the agent will be in state $\sigma_i$. Stochastic
state updates can therefore produce nonzero probability of being in multiple different
classical configurations, but never coherent superpositions between them.

In contrast to the classical case, quantum agents can store and manipulate information in quantum memory, and thus we take the quantum agent's memory states to be general density matrices over the memory space $\mathcal{H}$.

\begin{definition} \label{def:memory}
We define the memory cost $M$ of an agent $\mathcal{A}=(\mathcal{X},\mathcal{Y},\mathcal{S}, s_{init} = s_0,\mathcal{T})$ as the smallest dimension of any physical system that can store  $\mathcal{S}$ and recover the states with 100\% accuracy by a reversible transformation (e.g. unitary).
\end{definition}

In the literature \cite{ barnett2015computational, elliott2022} this is sometimes referred to as the \textit{topological complexity}. Or simply the dimension of the smallest Hilbert space $\mathcal{H}$ which encodes the memory states. Classically it aligns with the number of states $\sigma_i = \ket{i}\bra{i}$ in $S_b$, so classically $M = |\mathcal{S}_b|$. Our goal is to find the agent $\mathcal{A}$ solving a classification problem with the lowest memory cost possible.

 We will talk about this smallest Hilbert space which stores all the agent's memory states as the accessible Hilbert space  -denoted $\mathcal{H}(\agent)$ or $\mathcal{H}$ (when it is unambiguous which construction it refers to).   We will assume henceforth that the Hilbert space we are dealing with is this accessible Hilbert space.

\begin{definition}
An \textit{optimal} agent $\agent = \tupleAgent$ solving \taskTSC \, has memory cost $M$ no larger than that of any other agent solving \taskTSC; that is, for any such agent $\agent'$ with memory cost $M',$ $M \le M'.$
\end{definition}
When clear from context, we will often simply call such an $\agent$ an `optimal agent'.


In order to correctly classify a given stream of inputs, the agent requires a dynamical map. At each time step upon receiving input $x_t \in \mathcal{X}$, execution of this map generates the output $y_t \in \cal{Y}$ and simultaneously updates the agent's memory register accordingly by an update map $\mathcal{T}^{y_t|x_t}$. This process is described via a completely positive trace preserving map (CPTP map)
\begin{equation}\label{eq:krauschanel}
    \mathcal{T}: \ket{x} \bra{x} \otimes s_i \rightarrow \sum_{y}  \ket{y}\bra{y}\otimes \mathcal{T}^{\charOutput| \charInput} (s_i)
\end{equation}
In the case of a classical agent, for each possible $x$ this map is a substochastic nonnegative transition matrix  $\mathcal{T}^{\charOutput| \charInput} (\sigma_i) = \sum_k P(y, \sigma_k | x, \sigma_i) \sigma_k $ which describes the Markov chain dynamics by which an agent initially in state $\sigma_t$ generates the output $y_t$ and next state $\sigma_{t+1}$. Note that as this map is linear, its action on classical states $s_i = \sum_j p^i_j \sigma_j$ is implicitly defined by its action on the basis $\setStatesBasis= \{\sigma_i\}$.

For quantum agents, the map can be described by a collection of Kraus operators, corresponding to a completely positive trace preserving channel. Suppose, on input $x$ and internal state $s$, our agent outputs $y$, then in Eq. \ref{eq:krauschanel} we write
\begin{equation}
\label{eq:tyx}
    T^{(y|x)}(s_i)=\sum_j K^{(y|x)}_j s_i K^{(y|x) \dagger}_j
\end{equation}
where $K^{y|x}_j$ are the Kraus operators associated with $\mathcal{T}$, conditioned on being given input $\charInput$ and output $\charOutput$. In particular for each $\charInput \in \mathcal{X}$ we have a completeness relation $\sum_y \sum_j {K^{(y|x)}_j}^{\dagger}  K^{(y|x)}_j  = \openone$. Meanwhile  we interpret the quantum map transitions analogously  to the classical through defining ${\rm Tr}\left[  \sum_j K^{(y|x)}_j s_i K^{(y|x) \dagger}_j\right] := P(y | x, s_i)$ and the resulting conditional state of the memory register as  $s' := \left[\sum_j K^{(y|x)}_j s_i K^{(y|x) \dagger}_j\right]/P(y| x, s_i)$.

\section{Black box Prize Wheel and Counting Problems}

We now define the problem that is the focus of this paper. Fix $p$ prime, and let $\setInputs = \setZp \cup \{\Omega\}$. Throughout the remainder of the paper, unless explicitly stated
otherwise, we take $p$ to be an odd prime (the $p =2$ case is fundamentally uninteresting, as both quantum and classical agent's need at least 2 internal states). Each string in the data stream $\strInput = [x_0, x_1, \ldots, x_{\intLen-1}]$  is thus comprised of integers modulo $p$.


Let $\setSetMods = \{\setMods_j\}$  be a collection of disjoint subsets $\setMods_j \subseteq \setZp.$ Our classification classes then can be defined in terms of whether the sum of the characters in the string falls in one of these subsets $\setMods_j$. In particular we define the classes implicitly by

\begin{equation}\label{sq:classesMSC}
    \strInput \in \setClass_j  \qquad \textit{iff} \qquad\left(\sum_{i=0}^{T-1} \charInput_i \right)~\mod p \in \setMods_j.
\end{equation}

An agent who solves $\taskTSC$ (time serries classification) based on these classes, is computing which subset $\setMods_j$ the string remainder belongs to. The relations it computes are captured by Fig. \ref{fig:clock}.

\begin{problem}
\label{prob:summation}
{\bf Black box prize wheel problem \taskMSC:}
Fix $p$ prime. Consider an instance of  where  $\setInputs = \setZp \cup \{\Omega\}$,  and  classes  $\setClasses = \{\Gamma_1,\ldots, \Gamma_K\}$ are implicitly defined by  Eq. \eqref{sq:classesMSC} with disjoint subsets  $\setSetMods = \{\setMods_1, \setMods_2,\dots,\setMods_K\}$ of  $\setZp$.

We call the resulting time series classification task $\taskMSC$: black box prize wheel problem mod $p$ over $\setSetMods$.
We say a classifying agent  $\agent$ \textit{solves} $\taskMSC$  if it obtains zero loss on all words under the 0-1 loss function in Problem \ref{def:TSC}.
\end{problem}

\begin{figure}[hbtp]
    \centering
    \includegraphics[trim={8cm 0cm 8cm 0cm}, clip, width=0.9\columnwidth]{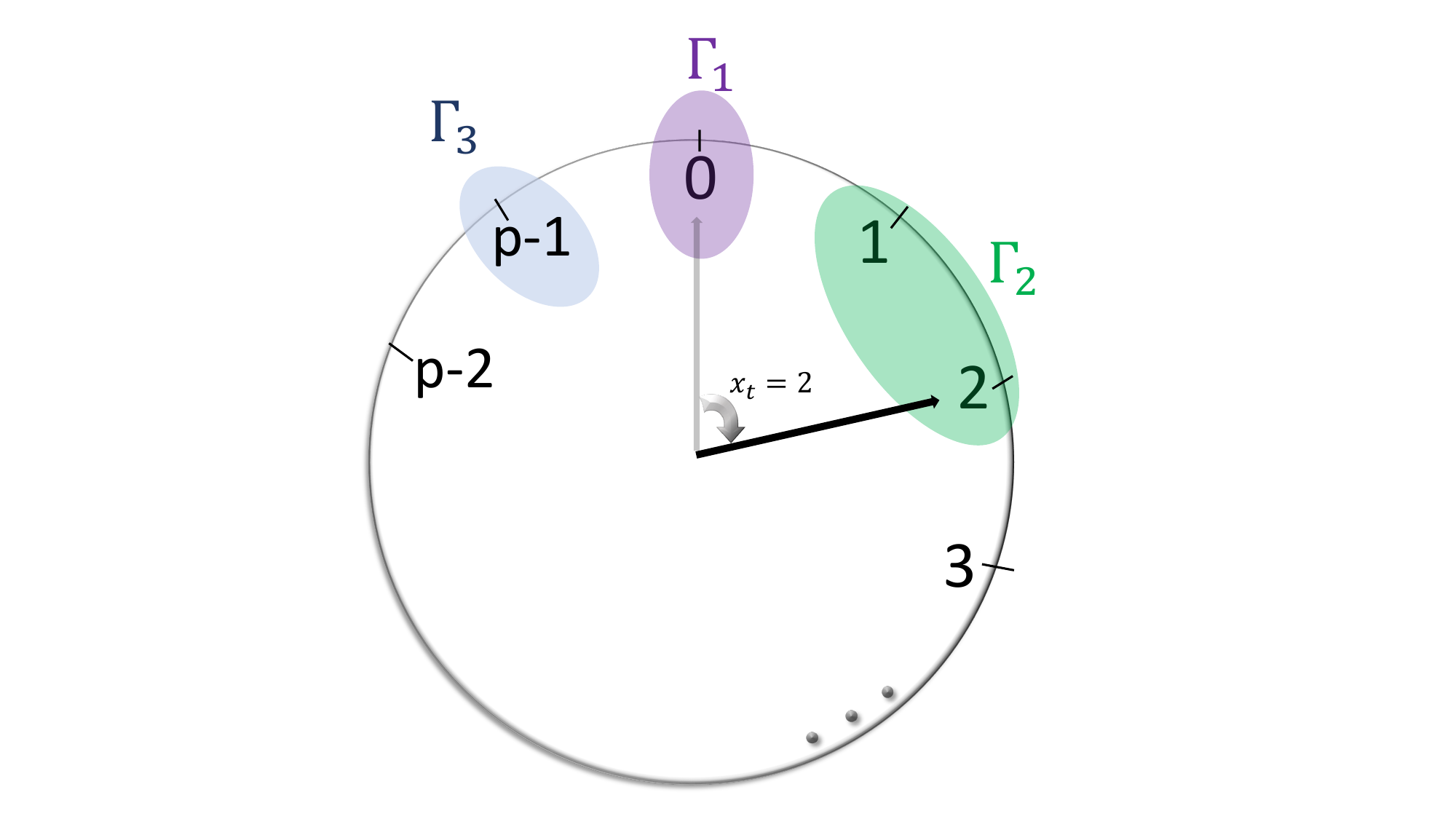}
    \caption{An example instance of a \taskMSC \, problem for some prime $p$. An agent must keep track of where on the prize wheel for the current input word $\strInput$ -- as illustrated by the effect of receiving an input of $x_t = 2$ on top of a running total which is congruent to $0 \textrm{ mod } p$. Upon receiving the termination symbol $\Omega$ on (\textit{a priori} unknown) timestep $\intLen$, the agent needs to output the label $j$ of the class $\setClass_j$ such that $\sum_{t=0}^{T-1} \charInput_t \textrm{ mod } p \in \setMods_{j}$.  If the word does not belong to any class (i.e. $(\vec{w}\notin \bigcup_j \Gamma_j$)), then there is no promise on the output behaviour. In this particular instance, while any classical agent must have memory cost $p$, a quantum agent can solve this problem with memory cost $7$, \textit{regardless} of how large $p$ is. See Section \ref{sec:construction} for details.}
    \label{fig:clock}
\end{figure}

We also show that, to solve this problem, it is sufficient to solve a simpler version of this problem where $\setInputs=\{0, 1, \Omega\}$; in this case, the class of each timeseries $\strInput$ is determined by its sum.

\begin{problem}
\label{prob:counting}
{\bf Modular Counting Classification Problem $\taskMCC$:}
Fix $p$ prime.  Consider an instance $\taskTSC$  where $\setInputs = \{0, 1,\Omega\}$, and  classes  $\setClasses = \{\Gamma_1,\ldots, \Gamma_K\}$ are implicitly defined by  Eq. \eqref{sq:classesMSC} with disjoint subsets  $\setSetMods = \{\setMods_1, \setMods_2,\dots,\setMods_K\}$ of  $\setZp$.

We call the resulting time series classification task $\taskMCC$: the modular counting classification problem mod $p$ over $\setSetMods$.
We say a classifying agent  $\agent$ \textit{solves} $\taskMCC$  if it obtains zero loss on all words under the 0-1 loss function in Problem \ref{def:TSC}.
\end{problem}

\section{Simple Results on General Classification Problems}
At this point, we present a few simple lemmas regarding the implementation of optimal (i.e. minimal memory dimension agents that solve $\taskTSC$\, with zero loss) quantum and classical agents $\agent = \tupleAgent.$ \eqref{eq:agentgentuple}. In this discussion, an ``optimal classical agent'' is a classical agent which solves $\taskTSC$ with zero loss, and with minimal memory among all such \textit{classical} agents. By previous definitions, of course, any classical agent is also a quantum agent, and so optimal quantum agents must use no more memory than optimal classical agents. We can also prove some simple assertions about agents solving $\taskTSC$.

\begin{lemma}
\label{lemma:reset}
If a classical or quantum agent $\agent$ has zero loss for the first string $\strInput^{(0)}$ in any data stream $ \stream$, then it can be used to solve $\taskTSC$. Furthermore, this choice does not increase the memory cost  $M =\dimHacc$ of the agent  $\agent$.
\end{lemma}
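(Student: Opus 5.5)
The plan is to modify $\agent$ so that its behaviour on the delimiter $\Omega$ includes a reset of the memory to $s_{\mathrm{init}}$. Every later word $\strInput^{(k)}$ is then processed exactly as if it were the first word of a fresh stream. Write $\agent=\tupleAgent$ with update maps $\mathcal{T}^{y|x}$. Define a new agent $\agent'$ with the same $\setInputs$, $\setOutputs$, $\mathcal{H}$ and $s_{\mathrm{init}}$, and with $\mathcal{T}'^{y|x}=\mathcal{T}^{y|x}$ for all $x\neq\Omega$. On input $\Omega$ we set
\[
\mathcal{T}'^{y|\Omega}(s)=\tr\!\left[\mathcal{T}^{y|\Omega}(s)\right] s_{\mathrm{init}}.
\]
In words, $\agent'$ first applies the original $\Omega$-channel, records its output $y$, and then discards the post-measurement memory and prepares $s_{\mathrm{init}}$.

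First I will check that $\agent'$ is a valid agent of the same type. The reset map $\rho\mapsto\tr[\rho]\,s_{\mathrm{init}}$ is CPTP (a trace-and-prepare channel). Composing it with the instrument $\{\mathcal{T}^{y|\Omega}\}_y$ therefore gives an instrument whose elements sum to a trace-preserving map, so the completeness relation for the Kraus operators still holds. In the classical case, the composed transition matrix $P'(y,\sigma_k|\Omega,\sigma_i)=P(y|\Omega,\sigma_i)\,[s_{\mathrm{init}}]_k$ is again a nonnegative (sub)stochastic matrix on the same basis $\setStatesBasis$. This uses the fact that $s_{\mathrm{init}}\in\setpStates$ is itself a distribution over that basis, so classicality is preserved.

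Next I will prove correctness by induction on $k$. The claim is that immediately after time $T_k$ the memory of $\agent'$ is $s_{\mathrm{init}}$, whatever outputs were emitted. This holds for $k=0$ by the convention $T_0=-1$, and the inductive step is exactly the reset. Hence the joint distribution of outputs produced by $\agent'$ while reading $\strInput^{(k)}\Omega$ equals that of $\agent$ on a stream whose first word is $\strInput^{(k)}$. In particular, the output distribution at the delimiter depends only on $\strInput^{(k)}$ and not on earlier words. By hypothesis $\agent$ has zero loss on the first string of \emph{any} stream, and $\strInput^{(k)}$ ranges over all finite words. So $\agent'$ has zero loss on every word, i.e.\ it solves $\taskTSC$.

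Finally, the memory cost does not increase. The set of states reachable by $\agent'$ consists of states reachable by $\agent$ within a first word, together with $s_{\mathrm{init}}$, so it is a subset of $\setpStates$. Any system that stores $\setpStates$ reversibly (Definition~\ref{def:memory}) therefore also stores the states of $\agent'$, which gives $M'\le M=\dimHacc$.

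The only delicate point I foresee is stating the induction precisely when outputs are stochastic and the post-$\Omega$ state of $\agent$ depends on the outcome. The trace-and-prepare form of $\mathcal{T}'^{y|\Omega}$ handles this, because the reset state is $s_{\mathrm{init}}$ for every outcome $y$, so conditioning on past outputs never matters.
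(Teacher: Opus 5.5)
Your proposal is correct and follows essentially the same route as the paper: after emitting the $\Omega$-output, reset the memory to $s_{\mathrm{init}}$, so every word is processed exactly as a first word and the reachable states are a subset of the original agent's. Your version only adds rigor the paper leaves implicit, namely the explicit trace-and-prepare form of the modified $\Omega$-instrument, the check that it remains a valid classical or quantum operation, and the induction over words that handles stochastic outputs.
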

\begin{proof}
The first statement follows trivially from the existence of an initial state $s_{init}=s_0$. Upon receiving termination character $\charInput_\intLen = \Omega$ and outputting the classification label $\charOutput_\intLen = \hat{C}(\strInput)$, the agent can reset its internal state to $s_{\intLen+1} = s_0$. The resulting agent will  classify strings $\strInput^{(k)}$ in a way that is independent of $k$ (ensuring it is correct on all strings in $\stream$). Furthermore, resetting the agent ensures it can reuse the same state space on every string $\strInput^{(k)}$. This can not increase the total memory dimension $M$.
\end{proof}

We can now easily show that there exists an optimal agent which outputs $\charOutput_t = 0$ whenever $\charInput_t \neq \Omega$.

\begin{lemma}
\label{lemma:zero}
For any quantum agent (or classical agent) $\agent'$ there exists another agent $\agent$ which always responds to any input $\charInput_\intLen \neq \Omega$ with 0, resets the internal state after receiving $\Omega$, has the same loss on every word for \taskTSC \ when initialized in $s_0$, and uses no more memory. In particular, there exists an \textit{optimal} agent $\agent$ which behaves in this way.
\end{lemma}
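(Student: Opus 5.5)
The idea is to take an arbitrary agent $\agent' = (\setInputs,\setOutputs,\setpStates',s_0',\fnTransition')$ and modify only its output behaviour on non-delimiter inputs and its behaviour right after a delimiter. The memory space and the way memory states evolve are left untouched.

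\emph{Step 1 (reset).} By Lemma~\ref{lemma:reset}, we may assume $\agent'$ resets to $s_0'$ immediately after each $\Omega$ without increasing memory. Then the loss on each word $\strInput^{(k)}$ equals the loss on a first word started from $s_0'$. So it suffices to compare agents on a single word $\strInput$ followed by $\Omega$, starting from $s_0'$.

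\emph{Step 2 (discard intermediate outputs).} For each $x\neq\Omega$, define the new memory update by tracing out (summing over) the output register:
\[
\tilde{\mathcal{T}}^{x}(s)=\sum_{y}\mathcal{T}'^{(y|x)}(s).
\]
This is CPTP by the completeness relation $\sum_y\sum_j K_j^{(y|x)\dagger}K_j^{(y|x)}=\openone$. In the classical case it is a stochastic matrix obtained by summing the substochastic ones, so it still maps classical states to classical states. The new agent $\agent$ applies $\tilde{\mathcal{T}}^{x}$ and deterministically emits $y=0$; in Kraus form this is $\ket{0}\bra{0}\otimes\tilde{\mathcal{T}}^x(s)$. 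On $\Omega$, $\agent$ uses the same output channel $\mathcal{T}'^{(y|\Omega)}$ as $\agent'$ and then resets to $s_0=s_0'$.

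\emph{Step 3 (loss unchanged).} The key observation is that the output emitted at time $\intLen$ depends only on the unconditional memory state at time $\intLen$. The loss ignores outputs at $x_t\neq\Omega$, and the agent is not adaptive on its own past outputs except through memory. Marginalising over those outputs, the averaged memory state of $\agent'$ after reading $x_0,\dots,x_{\intLen-1}$ is
\[
\tilde{\mathcal{T}}^{x_{\intLen-1}}\circ\cdots\circ\tilde{\mathcal{T}}^{x_0}(s_0),
\]
which is exactly $\agent$'s state. Hence $P(\fnGuess{\strInput}=y)$ is identical for both agents and every word, so the loss on every word coincides. The point needing the most care is this marginalisation: one must check that the conditional-state description $s'=\mathcal{T}^{(y|x)}(s)/P(y|x,s)$ averaged over $y$ reproduces the summed channel. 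This follows from linearity.

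\emph{Step 4 (memory and optimality).} All states reachable by $\agent$ are convex mixtures of states in $\setpStates'$, or images of them under channels already used by $\agent'$. So they live in the same accessible Hilbert space; classically they are distributions over the same basis $\setStatesBasis$. Therefore $M\le M'$. Applying the construction to an optimal agent $\agent'$ gives an agent with zero loss and no more memory, which is therefore also optimal.
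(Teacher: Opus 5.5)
Your proposal is correct and follows essentially the same route as the paper: for $x\neq\Omega$ the paper likewise lumps all Kraus operators $\tilde K_i^{(y|x)}$ under the single output label $0$, so the unconditioned memory channel is unchanged; it keeps the $\Omega$ measurement, appends a reset to $s_0$, and uses the support identity to show the accessible space does not grow. Your explicit marginalisation argument in Step 3 and your appeal to Lemma~\ref{lemma:reset} in Step 1 simply spell out what the paper's proof leaves implicit.
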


\begin{proof}
Consider an agent $\tilde{\mathcal{A}} = (\setInputs, \setOutputs, \setpStates, s_0, \tilde{\mathcal{T}})$ which does not necessarily respond to normal input $x_t \neq \Omega$ with $0$. Because $\tilde{\mathcal{A}}$ is a quantum agent, its transition map $\tilde{\mathcal{T}}$ is completely positive and trace-preserving (CPTP), and can be described by a set of Kraus operators $\{\tilde{K}_i^{(y|x)}\}$ acting on the memory space $\mathcal{H}$.

We construct a new agent $\mathcal{A} = (\setInputs, \setOutputs,  \setpStates, s_0, \mathcal{T})$ designed to force the output $y_t = 0$ for all normal inputs. For any input $x \neq \Omega$, we define the new transition map by lumping all possible original outcomes $y$ into the single output label $0$. We achieve this by replacing the environment index $i$ with a joint index $(i, y)$:
$$K_{(i,y)}^{(0|x)} = \tilde{K}_i^{(y|x)}$$
For any output $y \neq 0$, the set of Kraus operators is empty ($K_k^{(y \neq 0 | x)} = 0$).

Because this new map simply relabels the classical output without altering the physical operators acting on the quantum memory, the unconditioned state evolution of the memory remains mathematically identical:
$$\mathcal{E}^{(x)}(\rho) = \sum_{y, i} K_{(i,y)}^{(0|x)} \rho K_{(i,y)}^{(0|x)\dagger} = \sum_{y, i} \tilde{K}_i^{(y|x)} \rho \tilde{K}_i^{(y|x)\dagger}$$
Using Eq.\eqref{eq:support}, the subspace spanned by the accessible memory states of $\mathcal{A}$ is exactly the same as that of $\tilde{\mathcal{A}}$. Therefore, the memory cost of the agent $M = \dimHacc$ has not increased.

For the final timestep where $x_T = \Omega$, we leave the measurement operators unchanged: $K_i^{(j|\Omega)} = \tilde{K}_i^{(j|\Omega)}$. Because the intermediate memory states leading up to the termination character are identical, and the final measurement is identical, the probability of outputting the correct classification $\hat{C}(\strInput) = j$ remains exactly as before, and so the new agent achieves the same loss.

Finally, to satisfy the reset condition, we post-process the state after the measurement at $x_T = \Omega$. We apply a standard CPTP map that traces out the post-measurement internal state and prepares the initial state $s_0$. Because $s_0$ is, by definition, already an accessible state $\mathcal{H}(\agent)$, this reset operation strictly cannot increase the dimension of the subspace.

Thus an agent satisfying these strict output and reset conditions exists
with the same loss as $\tilde{\agent}$ and no greater memory cost. In particular, this is true when $\tilde{\agent}$ is an \textit{optimal} agent; thus an optimal agent $\agent$ satisfying these strict requirements always exists.
\end{proof}

Next we would like to prove the initial state can always be chosen to be pure. Before doing this we establish some useful facts about CPTP maps. First write the initial state of the quantum agent's memory as a convex combination of pure states:
\begin{equation}
\label{eq:initial-convex-decomposition}
\rho_0=\sum_a q_a \proj{\psi_a}{\psi_a},
\qquad q_a>0,\quad \sum_a q_a=1 .
\end{equation}
We then observe the following support identity. For any
CPTP map \(\mathcal E\),
\begin{equation}\label{eq:support}
\operatorname{supp}\!\left(\mathcal E(\rho_0)\right)
=
\operatorname{span}\left(
\bigcup_a \operatorname{supp}\!\left(\mathcal E(|\psi_a\rangle
\!\langle\psi_a|)\right)
\right).
\end{equation}
A simple way to prove this equation is as follows. Let $X_a = \mathcal{E}(\proj{\psi_a}{\psi_a});$ by physicality $X_a$ is positive semidefinite.
But for \textit{any} positive semidefinite operators $X_a$,
\begin{equation}
\ker\!\left(\sum_a q_a X_a\right)=\bigcap_a \ker(q_a X_a),
\end{equation}
which can be seen directly by the fact that $X_a$ are positive semi definite and $q_a > 0$, and so $\ematrix{\gamma}{\sum_a q_a X_a}{\gamma}  = 0 $ \textit{iff} $ q_a\ematrix{\gamma}{X_a}{\gamma}=0$ for all $a$. Taking orthogonal complements gives the support identity Eq. \eqref{eq:support}.

\begin{lemma}
\label{lemma:purity}
Given any agent $\agent'=(\mathcal{X},\mathcal{Y},\mathcal{S},\rho_0,\mathcal{T})$, there exists some pure state $s_{init} = s_0$ in the support of $\rho_0$ such that the pure-initial-state agent $\agent=\tupleAgent$ has the same memory cost or better, \textit{and} has the same expected loss or better. In particular, there exists an optimal quantum (or classical) agent $\agent$ solving  $\taskTSC$ that is initialized in a pure state $s_0$.
\end{lemma}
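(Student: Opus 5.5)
We exploit linearity of the agent's dynamics in the initial state together with the support identity \eqref{eq:support}. Write $\rho_0=\sum_a q_a\proj{\psi_a}{\psi_a}$ as in \eqref{eq:initial-convex-decomposition}. For any word $\strInput$, the operation the agent applies while reading $\strInput$ and then $\Omega$ is a fixed composition of CP maps $\mathcal{T}^{(0|x)}$ (summed over outputs, by Lemma~\ref{lemma:zero}) followed by the instrument $\mathcal{T}^{(y|\Omega)}$. Hence the output distribution is affine in the initial state:
\begin{equation*}
P_{\rho_0}(\hat C(\strInput)=y)=\sum_a q_a\,P_{\psi_a}(\hat C(\strInput)=y).
\end{equation*}
The expected loss $\fnELoss{\agent'}=\sum_a q_a\fnELoss{\agent_a}$ is therefore a convex combination of the losses of the agents $\agent_a$ that start in $\proj{\psi_a}{\psi_a}$ and share the same policy $\mathcal{T}$. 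We pick $a^\ast$ minimizing $\fnELoss{\agent_a}$, so that $\fnELoss{\agent_{a^\ast}}\le\fnELoss{\agent'}$. If $\agent'$ has zero loss on every word, then every $\agent_a$ also has zero loss on every word, because a nonnegative convex combination vanishes only if each term does. By Lemma~\ref{lemma:reset}, solving the first string suffices for solving \taskTSC.

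The remaining issue, which I expect to be the main one, is memory cost. I would define the accessible space of an agent as the span of the supports of all states reachable from the initial state, including the reset state, and compare these spaces. Any reachable state of $\agent_{a^\ast}$ has the form $\mathcal{E}(\proj{\psi_{a^\ast}}{\psi_{a^\ast}})$ (suitably normalized), where $\mathcal{E}$ is a composite CP map. The corresponding state of $\agent'$ is $\mathcal{E}(\rho_0)$, and by \eqref{eq:support} we have $\operatorname{supp}\mathcal{E}(\proj{\psi_{a^\ast}}{\psi_{a^\ast}})\subseteq\operatorname{supp}\mathcal{E}(\rho_0)$. Conditioning on an output only rescales by a positive probability; if that probability is zero for the pure start, the branch never occurs and contributes nothing. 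So the accessible space of $\agent_{a^\ast}$ is contained in that of $\agent'$, and its dimension $M$ does not increase.

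Two subtleties need checking. The first is the reset after $\Omega$: we redefine the reset to prepare $\proj{\psi_{a^\ast}}{\psi_{a^\ast}}$, which lies in $\operatorname{supp}\rho_0$ and hence already in the old accessible space. The second is that, per Definition~\ref{def:memory}, restricting the policy to this smaller invariant subspace yields a valid CPTP agent on a Hilbert space of dimension no larger than before; compressing the Kraus operators onto it suffices because the dynamics never leave it. For the classical case the same argument works with $\psi_a$ replaced by basis configurations $\sigma_i$ with $p_i>0$, which keeps the agent classical. Applying all of this to an optimal agent gives an optimal agent with a pure initial state.
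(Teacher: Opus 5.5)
Your proof is correct and follows essentially the same route as the paper. You decompose $\rho_0$ into pure states and use linearity to write the expected loss as a convex combination, picking the best component; every component inherits zero loss when $\agent'$ has it, and the support identity \eqref{eq:support} shows the accessible subspace can only shrink. Your explicit treatment of the reset state and of compressing the Kraus operators onto the invariant accessible subspace spells out points the paper leaves implicit, but does not change the argument.
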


\begin{proof}
To begin, by Lemma \ref{lemma:zero}, we may assume the agent $\agent'$  outputs $y =0$ on every
non-terminating input $\charInput_t \neq \Omega$.

Let $T^{(y|x)}$ be as in Eq.\eqref{eq:tyx}. We then define, for each finite word $\vec w=x_0\cdots x_{T-1}$,
\begin{equation}
  T^{(\strInput)}
=
T^{(0|x_{T-1})}\cdots T^{(0|x_0)};
\end{equation}
in particular, $T^{(\strInput)}(\rho_0)$ is precisely the memory state of the agent after input string $\vec w.$

Now, let us write the initial state of the quantum agent's memory as a convex combination of pure states,
\begin{equation}
\label{eq:convexcombination}
\rho_0=\sum_a q_a \proj{\psi_a}{\psi_a},
\qquad q_a>0,\quad \sum_a q_a=1 .
\end{equation}
If the final
measurement is described by positive semidefinite operators $\{E_j\}_j$, then the probability of
classifying $\vec w$ into class $j$ is
\begin{equation}
p(j|\vec w,\rho_0)
=
\operatorname{Tr}\!\left[E_j\, T^{(\strInput)}(\rho_0)\right].
\end{equation}

As this expression is linear in the initial state, we can equivalently re-express this probability as
\begin{equation}
p(j|\vec w,\rho_0)
= \sum_a q_a p(j|\vec w,\proj{\psi_a}{\psi_a})
\end{equation}
where  $p(j|\vec w,\proj{\psi_a}{\psi_a})=
\operatorname{Tr}\!\left[E_j\, T^{(\strInput)}(|\psi_a\rangle\!\langle\psi_a|)\right]$ is the probability an agent initially in pure state $\ket{\psi_a}\bra{\psi_a}$ emits the correct label $j$ upon input  $\vec w \in \Gamma_j$. (More precisely, such an agent $\mathcal{A}''=(\mathcal{S}, \proj{\psi_a}{\psi_a}, \mathcal{T}, \mathcal{X}, \mathcal{Y})$ is identical to $\mathcal{A}'$, except for the altered initial state.)

Thus, the loss for input word $\vec w \in \Gamma_j$ is:
\begin{align}\label{eq:affineloss}
L_{\vec w}(\rho_0)
&=
1-p(j|\vec w,\rho_0)\nonumber\\
&=
\sum_a q_a\left[1-p(j|\vec w,\proj{\psi_a}{\psi_a})\right].
\end{align}

For a fixed distribution $\setData$, define the expected loss of the
agent initialized in state $\rho$ (again, with all other parameters equal) by
\begin{equation}
\mathcal{L}_{\setData} (\rho)=
\sum_{\vec w\sim \setData}
\setData(\vec w)L_{\vec w}(\rho).
\end{equation}

By Eq.\eqref{eq:affineloss},
\begin{equation}
\mathcal{L}_D(\rho_0)=\sum_a q_a \mathcal{L}_D(\proj{\psi_a}{\psi_a});
\end{equation}
that is, the expected loss for an agent initialized in $\rho_0$ is the convex combination of the expected losses of agents initialized in the support states $\proj{\psi_a}{\psi_a}$
Therefore at least one \(a^\star\) satisfies
\begin{equation}
\mathcal{L}_D(|\psi_{a^\star}\rangle\langle\psi_{a^\star}|)
\le \mathcal{L}_D(\rho_0);
\end{equation}
choosing $s_0=|\psi_{a^\star}\rangle\langle\psi_{a^\star}|$, we find an agent $\agent=\tupleAgent$ initialized in a pure state with expected loss no worse than that of $\agent'.$

In the special case where $\agent'$ solves TSC with zero loss, this also implies that for every $a$ with
$q_a>0$, and every promised word $\vec w$,
\begin{equation}
L_{\vec w}(\proj{\psi_a}{\psi_a})=0.
\end{equation}
Thus, for \textit{any} pure initial state $s_0=\proj{\psi_a}{\psi_a}$ in the support of $\rho_0$, $\agent=\tupleAgent$ also solves $\taskTSC$ with zero loss.

More generally, for any normalized
$\ket{\phi}\in\operatorname{supp}(\rho_0)$, one may choose $q>0$
sufficiently small that
\begin{equation}
\rho_0
=
q\ket{\phi}\!\bra{\phi}
+
(1-q)\tau
\end{equation}
for some density operator $\tau$. Applying the same nonnegativity
argument to this decomposition shows that
$L_{\strInput}(\ket{\phi}\!\bra{\phi})=0$ for every promised word $\strInput$. Finally, in the case of a classical agent, we may always choose the decomposition of $\rho_0$ into
computational-basis projectors; with this choice, the pure initial state is also a valid classical memory state.

Thus, if we have an agent $\agent'$ with initial state $s_0' = \rho_0$, then we can always replace it with a counterpart $\agent$ whose initial state is $s_0 = \proj{\psi_a}{\psi_a}$ and the loss for $\taskTSC$\, will be no worse. It remains to show that this new agent $\agent$ will have memory cost no greater than that of $\agent'$.

However, this is a simple result of the support identity Eq.\eqref{eq:support}.
Thus it follows that if we consider the span of the memory subspace over all
reachable states for agent $\agent$, then this is subspace of the equivalent reachable subspace of $\agent'$. Hence memory cost of $\agent$  can not exceed that of $\agent'$.
\end{proof}

Next, we prove a result regarding the final measurement.
\begin{lemma}
\label{lemma:projective}
There exists an optimal quantum agent $\agent = \tupleAgent$ which applies a \textit{projective} measurement upon receiving input $\charInput_\intLen = \Omega$ and outputs the result (and then resets its memory).
\end{lemma}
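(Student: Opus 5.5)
The plan is to start from an arbitrary optimal agent and replace only its terminating operation $\mathcal{T}^\Omega$ by a projective measurement. All memory states and all updates on inputs $\charInput\neq\Omega$ stay as they are. Then the set of reachable memory states, and hence the memory cost $M=\dimHacc$, cannot change, and it only remains to check that zero loss survives. By Lemma~\ref{lemma:zero} and Lemma~\ref{lemma:purity}, we may first assume the optimal agent outputs $0$ on every non-terminating input, resets to $s_0$ after $\Omega$, and is initialized in a pure state. Whatever Kraus operators $K^{(j|\Omega)}_i$ implement $\mathcal{T}^\Omega$, the output statistics on a memory state $\rho$ are fixed by the POVM $E_j=\sum_i K^{(j|\Omega)\dagger}_i K^{(j|\Omega)}_i$, since $P(j|\Omega,\rho)=\tr[E_j\rho]$. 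Because the agent resets afterwards, the post-measurement state is irrelevant.

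The key step is to use the zero-loss condition to extract orthogonal projectors from $\{E_j\}$. Let $\rho_{\strInput}=T^{(\strInput)}(s_0)$ be the memory state after word $\strInput$. Zero loss means $\tr[E_j\rho_{\strInput}]=1$ for every $\strInput\in\setClass_j$. Since $0\le E_j\le \openone$, this forces $\operatorname{supp}(\rho_{\strInput})\subseteq V_j$, where $V_j$ is the eigenvalue-$1$ eigenspace of $E_j$. Indeed, $\tr[(\openone-E_j)\rho_{\strInput}]=0$ with both operators positive semidefinite implies $(\openone-E_j)\rho_{\strInput}=0$.

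Next I would show that the $V_j$ are mutually orthogonal. For $v\in V_j$ we have $\langle v|\sum_{k\neq j}E_k|v\rangle=\langle v|(\openone-E_j)|v\rangle=0$, so $E_k v=0$ for all $k\neq j$ by positivity. Then for $u\in V_k$ with $k\neq j$,
\[
\langle u|v\rangle=\langle E_k u|v\rangle=\langle u|E_k v\rangle=0 .
\]

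Now let $P_j$ be the orthogonal projector onto $V_j$ for $j=1,\dots,K$, and set $P_0=\openone-\sum_j P_j$. By the orthogonality just shown, $P_0$ is itself a projector, so $\{P_0,P_1,\dots,P_K\}$ is a projective measurement. Define the new $\mathcal{T}^\Omega$ as this measurement, outputting $j$, followed by a reset to $s_0$. The reset is a CPTP map preparing an already accessible state, so it adds no dimension.

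For any $\strInput\in\setClass_j$ we get $\tr[P_j\rho_{\strInput}]=1$, because $\operatorname{supp}(\rho_{\strInput})\subseteq V_j$. Hence the modified agent still has zero loss on every promised word, and words in no class incur no loss regardless of the outcome. Lemma~\ref{lemma:reset} then extends correctness from the first word to the whole stream. Since the memory cost is unchanged, the modified agent is still optimal.

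I expect no step to be a serious obstacle; the main care point is the orthogonality argument, together with noting that no Naimark dilation is needed, which would otherwise enlarge the memory. The construction also relies essentially on \emph{zero} loss, which holds because an optimal agent is by definition one solving $\taskTSC$ exactly. For lossy agents the statement would not follow this way.
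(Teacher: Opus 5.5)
Your proof is correct and follows essentially the same route as the paper: zero loss and $0\le E_j\le\openone$ put the supports of class-$j$ memory states inside mutually orthogonal subspaces, and you then measure the projectors onto those subspaces while leaving the reachable states, and hence $\dim\mathcal{H}$, untouched. The differences are cosmetic: you take $V_j$ to be the eigenvalue-$1$ eigenspace of $E_j$, with orthogonality coming from POVM completeness, whereas the paper takes $S_j$ to be the span of the supports of all $\rho_{\strInput}$ with $\strInput\in\setClass_j$; and you keep a separate rejection outcome $P_0$, whereas the paper absorbs $P_\perp$ into $\Pi_1$.
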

\begin{proof}

Consider two input strings $\strInput^{(1)}, \strInput^{(2)}$ with distinct classifications $\fnClass{\strInput^{(1)}}\neq \fnClass{\strInput^{(2)}}$.
Applying Lemma \ref{lemma:zero}, let us denote the \textit{update map induced by $\strInput$} as
\begin{equation}
    T^{(\strInput)}(s) = T^{(0|x_{\intLen-1})}\dots T^{(0|x_1)}T^{(0|x_0)}(s);
\end{equation}
in particular, the internal state induced by $\strInput^{(1)}$ (or $\strInput^{(2)}$) immediately before $\Omega$ is $T^{(\strInput^{(1)})}(s_0)$ (respectively $T^{(\strInput^{(2)})}(s_0)$).

Consider an optimal agent $\agent'$ which does \textit{not} use a projective measurement.
Now, because $\agent'$ classifies with no loss, it must perfectly distinguish $T^{(\strInput^{(1)})}(s_0)$ from $T^{(\strInput^{(2)})}(s_0)$. Thus, by quantum state discrimination, the supports of these states must be orthogonal to each other.

For $\strInput \in\Gamma_j$ and $\strInput'\in\Gamma_{j'}$, with
$j\neq j'$, write
\begin{equation}
\rho_{\strInput}= T^{(\strInput)}(s_0),
\qquad
\rho_{\strInput'}= T^{(\strInput')}(s_0).
\end{equation}
If $\{E_k\}_k$ is the original zero-error POVM, then
\begin{equation}
\operatorname{Tr}(E_j\rho_{\strInput})=1,
\qquad
\operatorname{Tr}(E_j\rho_{\strInput'})=0.
\end{equation}

Since $0\leq E_j\leq\mathbbm 1$, the first equality implies that
$E_j$ acts as the identity on $\operatorname{supp}(\rho_{\strInput})$,
while the second implies that it vanishes on
$\mathrm{supp}(\rho_{\strInput'})$. Hence
\begin{equation}
\mathrm{supp}(\rho_{\strInput})
\perp
\mathrm{supp}(\rho_{\strInput'}).
\end{equation}

For each class $j$, define
\begin{equation}
S_j:=
\mathrm{span}\!\left(
\bigcup_{\strInput \in\Gamma_j}
\mathrm{supp}(\rho_{\strInput})
\right).
\end{equation}

Taking spans over all words in the two classes gives
$S_j$ perpendicular to $S_{j'}$.

Let $P_j$ be the projector onto $S_j$ and set
\begin{equation}
P_{\perp}=\openone-\sum_j P_j.
\end{equation}
We can then (for example) define
\begin{equation}
\Pi_1=P_1+P_\perp,
\qquad
\Pi_j=P_j\quad (j\geq2).
\end{equation}
Then $\{\Pi_j\}_j$ is a projective measurement and is correct on
every promised word.

 Since this measurement does not affect the space of accessible states at all, an agent $\agent$ using this new measurement uses the same amount of memory as the previous agent, proving the lemma.

\end{proof}

\section{Reducing Black Box Prize Wheel game to Modular Counting}

Specializing to $\taskMSC$ and $\taskMCC$, we shorten our definition of an Agent to $\mathcal{A}=(\mathcal{S},s_{init} = s_0,\mathcal{T})$, as $\mathcal{X} = \mathbb{Z}_p \cup \{\Omega\},   \mathcal{Y} = \{0, 1,\ldots,K\}$ are fixed by the problem specification.

First note that by Lemma \ref{lemma:zero}  when $x \neq \Omega$ we only need to worry abut the behaviour $\fnTransition'^{(y =0|x)}$ and $\fnTransition^{(y =0|x)}$. We start by showing the following:
\begin{lemma}
\label{lemma:tally2}
Suppose the agent $\agent=(\mathcal{S},s_{init} = s_0,\mathcal{T})$ solves \taskMSC \, and assume also that $ \fnTransition^{(y|x)}=0$ for all cases where: $x \neq \Omega$ and $y \neq 0$ (such an agent always exists by Lemma \ref{lemma:zero}). Then $\agent'=(\setpStates, \causalstate_{init} = \causalstate_0,\fnTransition')$ \textit{also} solves \taskMCC, where
\begin{align}
    \fnTransition'^{(y|x)}(s)=\begin{cases}
    \fnTransition^{(y =0|0)} (s) & x = 0\\
    \fnTransition^{(y= 0|1)} (s) & x = 1\\
    \fnTransition^{(y|\Omega)}(s) & x = \Omega
    \end{cases}
\end{align}
i.e. $\fnTransition'{(y  |x)}(s)$ is the restriction of $\fnTransition{(y|x)}(s)$ to $x=\lbrace 0, 1, \Omega \rbrace$,
and measurement on receiving $\Omega$ proceeds as before;
moreover, $\agent'$ has memory cost no worse than that of $\agent$.
\end{lemma}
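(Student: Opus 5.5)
The argument is a restriction argument: the modular counting task \taskMCC\ is the special case of \taskMSC\ in which the game-master only ever uses rotations $x\in\{0,1\}$. The plan is to show that $\agent'$ reproduces, word for word, the behaviour of $\agent$ on this subset of words. Correctness on \taskMCC\ then follows from correctness on \taskMSC, and the memory claim follows from comparing the sets of reachable states.

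First, we observe that every word $\strInput=x_0\cdots x_{\intLen-1}$ with $x_i\in\{0,1\}$ is also a valid word for \taskMSC, since $\{0,1\}\subseteq\setZp$. Both tasks define the class of such a word by the same rule, Eq.~\eqref{sq:classesMSC}: $\strInput\in\setClass_j$ iff $\sum_i x_i \bmod p\in\setMods_j$. So the correct label of $\strInput$ is identical in the two problems, and $\strInput$ is promised in one exactly when it is promised in the other.

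Next, we compare the dynamics. By hypothesis $\fnTransition^{(y|x)}=0$ for $x\neq\Omega$, $y\neq 0$, so $\fnTransition^{(0|x)}$ is the full (trace-preserving) update channel for $x\in\{0,1\}$. The induced map $T'^{(\strInput)}=\fnTransition'^{(0|x_{\intLen-1})}\cdots\fnTransition'^{(0|x_0)}$ therefore coincides with $T^{(\strInput)}$, so $T'^{(\strInput)}(s_0)=T^{(\strInput)}(s_0)$. Upon $\Omega$, both agents apply the same instrument $\fnTransition^{(y|\Omega)}$, so their output distributions on $\strInput$ agree. Since $\agent$ has zero loss on every promised word of \taskMSC, it has zero loss on every promised $\{0,1\}$-word, and hence so does $\agent'$. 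If $\agent$ resets after $\Omega$ (Lemma~\ref{lemma:zero}), so does $\agent'$, and zero loss on the whole stream follows as in Lemma~\ref{lemma:reset}.

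Finally, for the memory cost, the set of states reachable by $\agent'$ is $\{T^{(\strInput)}(s_0):\strInput\in\{0,1\}^*\}$ together with the post-$\Omega$ states. This set is a subset of the states reachable by $\agent$. Its span of supports (the accessible space, cf.\ Eq.~\eqref{eq:support}) is therefore a subspace of $\Hacc(\agent)$, and by Definition~\ref{def:memory} the cost of $\agent'$ is at most that of $\agent$. The only point needing care, rather than a real obstacle, is this last step: we must argue that the memory cost is monotone under passing to a subset of reachable states. That holds because the smallest space storing $\mathcal{S}'\subseteq\mathcal{S}$ recoverably is no larger than one storing $\mathcal{S}$; in the classical case this simply means the number of used configurations can only drop.
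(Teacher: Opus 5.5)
Your proposal is correct and uses the same restriction argument as the paper. The paper simply notes that every \taskMCC\ word is a valid \taskMSC\ word with the same label, so the result \emph{follows immediately}; you spell out what this leaves implicit, namely that the induced channels and the $\charEnd$-instrument coincide and that the reachable states of $\agent'$, and hence its accessible space, form a subset of those of $\agent$. The reset step is not actually needed, since $\agent$ and $\agent'$ evolve identically on every stream over $\{0,1,\charEnd\}$.
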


Since a valid input word for \taskMCC\, is also a valid input word for \taskMSC, this follows immediately.

It is also fairly simple to prove the `converse':
\begin{lemma}
\label{lemma:tally1}
Suppose a classical or quantum agent $\mathcal{A}=(\mathcal{S},s_0,\mathcal{T})$ solves \taskMCC;
assume also that $ \fnTransition^{(y|x)}=0$ for all cases where: $x \neq \Omega$, and $y \neq 0$. (Such an agent always exists, by Lemma \ref{lemma:zero}.)
Then, the agent $\mathcal{A}'=(\mathcal{S},s_0,\mathcal{T}')$ solves \taskMSC, where
\begin{align}
     \fnTransition'^{(y |x)}(s)=\begin{cases}
     \fnTransition^{(y=0|0)} (s) & x = 0\\
    \left( \fnTransition^{(y = 0|1)}\right)^x (s) & x\neq 0, \Omega\\
     \fnTransition^{(y|\Omega)}(s) & x = \Omega,
    \end{cases}
\end{align}
i.e. $T'^{(y|x)}(s)$ is the map $T^{(y|1)}(s)$ applied $x$ times,
and the measurement on receiving $\Omega$ proceeds as before;
moreover, $\mathcal{A}'$ has memory cost no worse than that of $\mathcal{A}$.
\end{lemma}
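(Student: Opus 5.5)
The plan is a direct simulation argument: show that $\mathcal{A}'$ produces exactly the same memory state before each $\Omega$ as $\mathcal{A}$ would on a suitably translated \taskMCC\ word. Then use the fact that $\mathcal{A}$ solves \taskMCC\ to conclude correctness. The memory claim will follow because $\mathcal{A}'$ never leaves the set of states reachable by $\mathcal{A}$.

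\textbf{Translation of words.} Given a \taskMSC\ word $\strInput=x_0x_1\cdots x_{T-1}$ with $x_i\in\mathbb{Z}_p$, I will map it to the \taskMCC\ word $\strInput_{01}$. This word is obtained by replacing each symbol $x_i\neq 0$ with $x_i$ consecutive $1$'s, where $x_i$ is read as its representative in $\{1,\dots,p-1\}$, and each $x_i=0$ with a single $0$.

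\textbf{Identical memory evolution.} By construction,
\[
T'^{(\strInput)}=T'^{(0|x_{T-1})}\cdots T'^{(0|x_0)}=T^{(\strInput_{01})},
\]
since $T'^{(0|x)}=(T^{(0|1)})^{x}$ for $x\neq0$ and $T'^{(0|0)}=T^{(0|0)}$. Thus, starting from $s_0$, the state of $\mathcal{A}'$ just before $\Omega$ equals the state of $\mathcal{A}$ after reading $\strInput_{01}$.

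\textbf{Class membership is preserved.} The sum of $\strInput_{01}$ is $\sum_i x_i$ as integers, which agrees with $\sum_i x_i \bmod p$. By Eq.~\eqref{sq:classesMSC}, $\strInput\in\Gamma_j$ for \taskMSC\ if and only if $\strInput_{01}\in\Gamma_j$ for \taskMCC, and likewise for words lying in no class.

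\textbf{Correctness and reset.} The $\Omega$-map of $\mathcal{A}'$ is identical to that of $\mathcal{A}$, so the output distribution on $\strInput$ equals that of $\mathcal{A}$ on $\strInput_{01}$, which has zero loss. Applying Lemma~\ref{lemma:reset} (with the reset after $\Omega$ supplied by Lemma~\ref{lemma:zero}) extends this from the first word to the whole stream. The Kraus operators of $\mathcal{A}'$ are products of those of $\mathcal{A}$, so each $T'^{(y|x)}$ is still CP and trace-preserving on its sum over $y$. For a classical $\mathcal{A}$, powers of substochastic matrices remain classical, so $\mathcal{A}'$ is classical too.

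\textbf{Memory cost.} Every state reachable by $\mathcal{A}'$ is of the form $T^{(\strInput_{01})}(s_0)$, which is reachable by $\mathcal{A}$. The span of the supports of reachable states of $\mathcal{A}'$ therefore lies inside the accessible space $\mathcal{H}(\mathcal{A})$, so by Definition~\ref{def:memory} $M'\le M$. In the classical case, the set of basis configurations used by $\mathcal{A}'$ is a subset of those used by $\mathcal{A}$.

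The only step needing care is the memory claim. Memory cost is defined via the accessible space, so the point to make explicit is that $\mathcal{A}'$ may be restricted to $\mathcal{H}(\mathcal{A})$ without change; the support identity \eqref{eq:support} handles any mixedness in the intermediate states. The remaining steps are bookkeeping.
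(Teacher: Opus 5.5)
Your proposal is correct and takes the same approach as the paper. The paper's proof converts each \taskMSC\ symbol $x\neq 0$ into $x$ consecutive $1$'s, so that $\mathcal{A}'$ on $\strInput$ reproduces $\mathcal{A}$ on the translated \taskMCC\ word: same modular sum, same pre-$\Omega$ state, same measurement. It then simply declares the result immediate. Your write-up spells out the bookkeeping the paper leaves implicit, notably that every state reachable by $\mathcal{A}'$ is reachable by $\mathcal{A}$, which gives the memory bound. One small note: the stream-level claim already follows from this word-by-word simulation, so you do not need Lemma~\ref{lemma:reset}.
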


In order to convert an input timeseries for \taskMSC \, into an input timeseries for \taskMCC, we simply convert every input character $x_t \neq 0,\Omega$ into $x_t$ instances of $1$:

\begin{equation}
\strInput=243\ldots \rightarrow
\underbrace{11}_{2} \underbrace{1111}_{4} \underbrace{111}_3 \ldots
    \label{eq:repetition}
\end{equation}
Once again, the proof immediately follows.
Thus, in order to find an optimal agent solving \taskMSC, it suffices to consider \taskMCC \, instead; we can then use lemma \ref{lemma:tally1} to expand the agent into a solution for \taskMSC.

Now, since we only need consider \taskMCC, $\mathcal{T}$ can be completely specified by its behaviour on $\ket{0}\bra{0} \otimes s$, $\ket{1}\bra{1}\otimes s$ and $\ket{\Omega}\bra{\Omega}\otimes s.$ By Lemma \ref{lemma:zero}, we only need to consider agents that output $0$ on receiving any non-$\Omega$ input, and emit a class label and reset upon receiving $\Omega$.
So, for the sake of brevity, let  $\fnTransition^{(0|0)}(s) = s$ such  that inputting nothing does nothing. Meanwhile we let
\begin{eqnarray}
    \label{eq:inducedT}
    &&\fnTransOne(s)=\fnTransition^{(0|1)}(s),
\end{eqnarray}
and let $\fnMeasure$ be the measurement applied on receiving $\Omega$ input. (By Lemma \ref{lemma:projective}, we can assume $\fnMeasure$ is projective.) Then $\mathcal{T}$ is completely specified by $(\fnTransOne,\fnMeasure)$.
Thus, for brevity, in what follows we will identify the map $\mathcal{T}$ as
\begin{equation}
    \mathcal{T}=(\fnTransOne,\mathcal{M}),
\end{equation}
and identify the agent as
\begin{equation}
\label{eq:agenttuple}
    \agent=(\setpStates,\causalstate_{init} =\causalstate_0,(\fnTransOne,\mathcal{M})).
\end{equation}
(Strictly speaking, we could determine $\mathcal{S}$ from the other parameters of an \textit{optimal} agent; however, it is useful to refer to the properties of $\mathcal{S}$ during the next few results.)

\section{Optimal Classical Agent}
\label{supp:sec:classical}
We now present our first theorem, demonstrating the construction of the optimal classical agent.
First, we note that if there is only \textit{one} class, $\setSetMods = \{\setMods_1\}$ the entire problem becomes trivial, so below  we implicitly assume there is  at least 2 classes. Throughout the nontrivial case, we assume that at least two of the
sets $\setMods_j$ are nonempty. Empty classes may be removed without changing
the classification task, since no input string can have a modular sum in
an empty class. Thus, when we write $|\Xi|>1$, we mean that there are at
least two nonempty promised classes. Now an `obvious' classical construction exists: we can simply label the computational basis states $\sigma_0,\sigma_1,\dots,\sigma_{p-1}$, initialize in state $s_{init}) = \sigma_0$, and have $\fnTransOne (\sigma_j)=\sigma_{(j+1)\,\,\rm{ mod}\, \, p}$.   We will show this solution is classically optimal.

By previous results, we have reduced the problem \taskMSC \, to \taskMCC, shown that when $\charInput_t \neq \Omega$ the output can be set to $ \charOutput_t = 0$, and found that the final measurement can be assumed to be projective. In doing so, we have reduced the number of stochastic matrices we need to consider to one, namely $\fnTransOne$. So, we can use the tools of Markov process theory (see e.g. \cite{castaneda2012}) to analyze $\fnTransOne$, and prove the optimality of our construction. We begin with the following definition:
\begin{definition}
    A state $j$ is \textit{accessible} from state $i$ if there is a nonzero probability of reaching it at some time $n$: that is, $\fnTransOne^n_{ij}> 0.$ We write this relation as $i \rightarrow j$.
\end{definition}
It is easy to show that this relation is transitive: $i \rightarrow j, j \rightarrow k$ implies $i \rightarrow k$.
\begin{definition}
    States $i$ and $j$ \textit{communicate} if $i \rightarrow j$ and $j \rightarrow i$. We write this relation as $i \leftrightarrow j$.
\end{definition}
It is then fairly simple to show that this is an \textit{equivalence} relation: specifically, $\sigma \leftrightarrow \sigma'$ \textit{iff} $\sigma' \leftrightarrow \sigma,$ and $\sigma \leftrightarrow \sigma' \leftrightarrow \sigma''$ implies $\sigma \leftrightarrow \sigma''.$

Now, we can use this equivalence relation $\leftrightarrow$ to partition the set of states into \textit{communicating classes}: that is, subsets $\mathcal{C} \subseteq \{\sigma_i\}$ of the set of states such that each state in $\mathcal{C}$ communicates with each other state. In other words, if two states communicate with each other, they belong to the same communicating class. There is thus a weak ordering of communicating classes with respect to the   \textit{accessible}  relation $\rightarrow$, as two distinct communicating classes cannot, by definition, communicate with each other; in particular, if a machine leaves a communicating class, it cannot return.

We define a \textit{recurrent communicating class} as a communicating class $\mathcal{C}$ such that, when initialized in some state $\sigma \in \mathcal{C},$ the probability that we eventually return to $\sigma$ when integrated over a sufficiently long window of times where we could potentially return is one (although we do not put an upper bound on the time that this takes). It is easy to show that this must then be true for \textit{every} state in that recurrent communicating class. Furthermore, a recurrent communicating class is \textit{closed}: if we initialize in a recurrent communicating class, we will never leave. It is also simple to prove a converse statement: a finite closed communicating class is always recurrent, and in practice this is how we identify recurrent classes.

The opposite of a recurrent communicating class is a \textit{transient} communicating class: one where some state $\sigma$ (and thus every state) has a nonzero probability of \textit{never} returning. Every communicating class is either recurrent or transient. In the finite case, transient communicating classes are not closed: intuitively, there is a finite probability of escape, which means that escape is eventually certain, and returning is impossible. Figure \ref{fig:communicating} demonstrates both of these concepts.

\begin{figure}[hbtp]
    \centering
    \includegraphics[width=0.9\columnwidth]{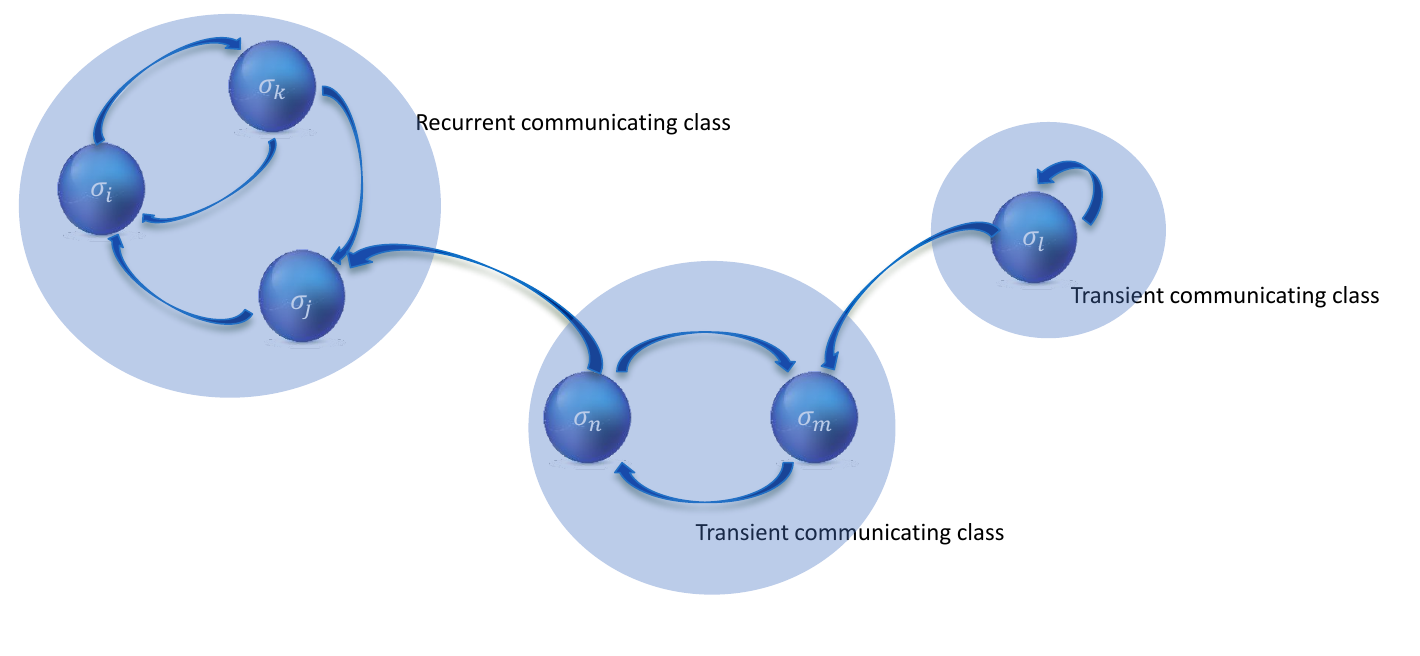}
    \caption{An example Markov process with communicating classes highlighted as blue regions -- here a communicating class corresponds to a collection of states $\mathcal{C} = \{\sigma_i\}$ such that, for any $\sigma_i, \sigma_j \in \mathcal{C}$, there exists $n,m \in \mathbb{Z}^+$ such that  $G_{ij}^n >0$ and $G_{ji}^m >0 $ (i.e. $\sigma_i$ is accessible from $\sigma_j$ and vice versa). The recurrent class is a closed communicating class, while the transient classes are not. By closed we mean that, for any $\sigma_i \in \mathcal{C}$ where $\mathcal{C}$ is a recurrent communicating class, if  $\sigma_j$ is accessible from $\sigma_i$, then $\sigma_j \in \mathcal{C}$.}
    \label{fig:communicating}
\end{figure}

Lastly, a Markov process is \textit{irreducible} if it consists of a single communicating class. For Markov processes with a finite number of states, this communicating class must be recurrent. This definition is equivalent to irreducibility of stochastic matrices; if a Markov process is irreducible, the corresponding matrix is irreducible, and vice versa.

With these definitions, we are now ready to proceed with the main result of this section. Let us identify agents using the more specific tuple as in Eq. \eqref{eq:agenttuple}. We then wish to prove a couple key properties of $\fnTransOne$:
\begin{lemma}
\label{lemma:classicalirreducible}
Suppose an \textit{optimal} classical agent $\agent = \tupleAgentShort$ solves \taskMCC. Then there exists an optimal classical agent $\agent' = (\setpStates',\causalstate'_{init} = \causalstate'_0,(\fnTransOne',\fnMeasure'))$ such that $\fnTransOne'$ is irreducible.
\end{lemma}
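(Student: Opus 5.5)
The plan is to start from the optimal classical agent $\agent = \tupleAgentShort$, which by Lemma~\ref{lemma:purity} may be taken to start in a pure classical state $\sigma_{i_0}$. We then restrict it to a single recurrent communicating class of $\fnTransOne$ that it can reach. Since the state space is finite, the set of states accessible from $\sigma_{i_0}$ contains at least one closed communicating class $\mathcal{C}$, which is therefore recurrent. Pick any $\sigma^\ast \in \mathcal{C}$ with $\sigma_{i_0} \rightarrow \sigma^\ast$. Concretely, fix $n_0$ with $\fnTransOne^{n_0}_{i_0,\ast} > 0$.

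The candidate agent is $\agent' = (\setpStates', \sigma^\ast, (\fnTransOne|_{\mathcal{C}}, \fnMeasure'))$. Here $\setpStates'$ consists of distributions supported on $\mathcal{C}$, $\fnTransOne|_{\mathcal{C}}$ is the restriction of the transition matrix to $\mathcal{C}$, and $\fnMeasure'$ is $\fnMeasure$ restricted to the span of $\mathcal{C}$. Closedness of $\mathcal{C}$ makes $\fnTransOne|_{\mathcal{C}}$ a genuine stochastic matrix. It is irreducible because $\mathcal{C}$ is a single communicating class. The memory cost is $|\mathcal{C}| \le M$, so once correctness is shown, optimality of $\agent$ forces $|\mathcal{C}| = M$, and $\agent'$ is optimal.

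The key step is showing that $\agent'$ still solves $\taskMCC$, which needs a relabelling argument. For $\taskMCC$, a word's class depends only on the number $n$ of $1$'s mod $p$, and zeros act trivially. Consider any word with $n$ ones, in class $\Gamma_j$, so $n \bmod p \in \setMods_j$. Choose $m \ge 0$ with $n_0 + m \equiv 0 \pmod p$, and consider the word consisting of $n_0 + m + n$ ones, which has the same class. The original agent's state before $\Omega$ is $\fnTransOne^{n+m}\fnTransOne^{n_0}\sigma_{i_0}$. This is a mixture in which the branch through $\sigma^\ast$ carries positive weight.

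Zero loss of $\agent$ means $\fnMeasure$ outputs $j$ with probability one on this mixture. Hence it also outputs $j$ with probability one on the sub-mixture $\fnTransOne^{n+m}\sigma^\ast$, by the same nonnegativity/convexity reasoning as in Lemma~\ref{lemma:purity}. So $\agent'$ started at $\sigma^\ast$ correctly classifies every word whose count is $\equiv n+m$, and as $n$ ranges over all residues so does $n+m$.

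Restated cleanly, started at $\sigma^\ast$, the agent is correct for every residue $r$, because the prefix offset $n_0$ is cancelled by padding with $m$ extra ones. This requires the promise to be periodic mod $p$, which it is. The expected obstacle is only this bookkeeping of the shift: making sure the padding $m$ is chosen relative to $n_0$ so that class labels are preserved, rather than shifted by $n_0$. I expect the remaining steps to be routine.
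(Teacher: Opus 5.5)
Your plan (restrict to a closed, hence recurrent, communicating class reachable from the initial state, and transfer zero loss to a state of that class by the convexity argument of Lemma~\ref{lemma:purity}) is the right one. But you resolve the shift bookkeeping, which you yourself flagged as the obstacle, incorrectly, and the key step fails as written. Here is what the sub-mixture argument actually yields. Since $\sigma^\ast$ has positive weight in $\fnTransOne^{n_0}\sigma_{i_0}$, for every $k$ the state $\fnTransOne^{k}\sigma^\ast$ is a positive-weight component of $\fnTransOne^{k+n_0}\sigma_{i_0}$. So $\fnMeasure$ must output the class of the residue $(k+n_0) \bmod p$ on it. Taking $k=n+m$ you correctly get the class of $n$. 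However, the word actually fed to $\agent'$ has $n+m\equiv n-n_0 \pmod p$ ones, whose correct label is the class of $n-n_0$. So the agent started at $\sigma^\ast$ with readout $\fnMeasure$ computes the classification shifted by $n_0$. Padding cannot cancel this: the $m$ extra ones are read by $\agent'$ as well, so its input count and the original agent's always differ by exactly $n_0$. A concrete counterexample to ``pick any $\sigma^\ast\in\mathcal{C}$'' is the optimal cyclic agent $\fnTransOne(\sigma_j)=\sigma_{(j+1) \bmod p}$ started at $\sigma_0$. It forms a single recurrent class, so your prescription allows $\sigma^\ast=\sigma_1$. The agent started at $\sigma_1$ labels a word with $k$ ones by the class of $k+1$, which is wrong, e.g.\ for $\setMods_1=\{0\}$, $\setMods_2=\{1\}$ and $k\equiv 0$.

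The repair is to enter $\mathcal{C}$ at a time that is a multiple of $p$. From $\sigma^\ast$, run $m$ further steps with $n_0+m\equiv 0 \pmod p$, and take any state $\sigma^{\ast\ast}$ in the support of $\fnTransOne^{m}\sigma^\ast$. It lies in $\mathcal{C}$ by closedness and carries positive weight in $\fnTransOne^{n_0+m}\sigma_{i_0}$, so the same argument now gives the unshifted labels. Alternatively, keep $\sigma^\ast$ and use the readout $\fnMeasure\circ\fnTransOne^{m}$ on $\Omega$, which also stays inside $\mathcal{C}$. This is exactly the paper's route: it replaces $s_0$ by $\fnTransOne^{np}(s_0)$, which is harmless because prepending $np$ ones changes no class. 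It then uses Lemma~\ref{lemma:purity} to pick a basis state in its support that lies in a recurrent class. With that correction, the rest of your argument goes through: $\fnTransOne|_{\mathcal{C}}$ is stochastic and irreducible, the memory cost is $|\mathcal{C}|\le M$, and optimality follows.
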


\begin{proof}
If $G$ is irreducible, the lemma is trivial. Let us then suppose that $\mathcal{A}$ is not irreducible. That is, assume the transition matrix $G$, acting on the computational basis states $\sigma_i \in S_b$, generates a stochastic process that has two or more communicating classes.

Since prepending the input string with $p$ $1$'s should not affect the final answer (which is computed mod $p$), the agent $(\setpStates, \rho_0 = G^{np}(s_0), (G, \mathcal{M}))$ will also solve the problem for any integer $n$.

The memory state $\rho_0$ is a classical mixed state, which can be expressed as a convex combination of basis states $\rho_0 = \sum p_i \sigma_i$. For sufficiently large $n$, the probability mass of the Markov process must eventually concentrate in the recurrent states. Thus, there is a strictly positive probability $p_j > 0$ associated with some basis state $\sigma_j$ that belongs to a recurrent communicating class $\mathcal{C}$.

By invoking the purity lemma (Lemma \ref{lemma:purity}), we can simply pick this pure state $s'_0 = \sigma_j$ in the support of $\rho_0$. Because $s'_0 \in \mathcal{C}$, and communicating classes are disjoint, the state lies entirely within a single recurrent class. Furthermore, because recurrent communicating classes are closed under the action of $G$, the agent will never transition to a basis state outside of $\mathcal{C}$.

Consider the new agent $\mathcal{A}'$ where $s_0$ is replaced by $s'_0 \in \mathcal{C}$; then $\Hacc(\agent') \subseteq \rm{span}(\mathcal{C}) \subseteq \Hacc(\agent),$
with the memory space $\Hacc(\agent').$
The resulting construction $\agent'$ has memory cost $\dimHacc(\agent')$ no greater than that of the optimal agent $\mathcal{A}$. Since it only contains one communicating class, its transition matrix $G'$ is irreducible, completing the proof.
\end{proof}
In short, while a given optimal agent may not have an irreducible $\fnTransOne,$ we may assume that an optimal agent with irreducible $\fnTransOne$ exists. We will make use of Lemma \ref{lemma:classicalirreducible} to drastically simplify the following proofs.

We now make use of another concept from Markov process theory, the \textit{period}, which will help us understand which $\fnTransOne$ are possible. By Lemma \ref{lemma:purity}, we know that an optimal classical agent with a pure initial state must exist; we now focus on the period of the initial state.
\begin{lemma}
\label{lemma:classicalperiod}
Suppose an \textit{optimal} classical agent $\agent = \tupleAgentShort$ with pure initial state $s_0 = \sigma_0$ and irreducible $\fnTransOne$ solves \taskMCC. Define the \textit{period} $h$ of the stochastic matrix $\fnTransOne$ on state $\sigma_0$ as the greatest common divisor of all $n$ such that $\fnTransOne^n(\sigma_0)$ has nonzero probability of being in $\sigma_0$.  Then $h \ge p$.
\end{lemma}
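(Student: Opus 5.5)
The plan is to show that $p$ divides every return time to $\sigma_0$, and then deduce $h\ge p$ because $h$ is a positive multiple of $p$. The argument is a contradiction built on the zero-loss requirement.

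First, note that a return time exists. Since $\fnTransOne$ is irreducible on a finite state space, $\sigma_0$ is recurrent, so there is at least one $n\ge 1$ with $(\fnTransOne^n)_{00}>0$. Hence the set of return times is nonempty and $h$ is a well-defined positive integer.

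Next, suppose some return time $n$ satisfies $n\not\equiv 0 \pmod p$. Since $p$ is prime, $n$ is invertible modulo $p$. The nontrivial case guarantees at least two nonempty classes, so pick residues $a\in\setMods_j$ and $b\in\setMods_{j'}$ with $j\neq j'$. Choose $k\in\{0,\dots,p-1\}$ with $kn\equiv b-a \pmod p$. Now compare two words:
\begin{itemize}
\item $\strInput_1=1^{a}$, whose sum is $a$, so $\strInput_1\in\setClass_j$;
\item $\strInput_2=1^{kn+a}$, whose sum is $\equiv b$, so $\strInput_2\in\setClass_{j'}$.
\end{itemize}
Because $(\fnTransOne^{kn})_{00}\ge\bigl((\fnTransOne^{n})_{00}\bigr)^k>0$, the state $\fnTransOne^{kn}(\sigma_0)$ places positive weight $q>0$ on $\sigma_0$. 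Then, by linearity,
\[
\fnTransOne^{kn+a}(\sigma_0)=q\,\fnTransOne^{a}(\sigma_0)+(1-q)\tau
\]
for some classical state $\tau$.

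This yields the contradiction. Let $\{P_\ell\}$ be the projective measurement $\fnMeasure$ (Lemma \ref{lemma:projective}). Zero loss on $\strInput_1$ gives $\operatorname{Tr}[P_j\fnTransOne^a(\sigma_0)]=1$. From the decomposition above,
\[
\operatorname{Tr}\bigl[P_j\fnTransOne^{kn+a}(\sigma_0)\bigr]\ge q>0.
\]
This means the agent outputs $j\neq j'$ on $\strInput_2$ with positive probability, contradicting zero loss. Equivalently, the supports of the two states must be orthogonal, as in the proof of Lemma \ref{lemma:projective}, yet they overlap.

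Hence every return time is a multiple of $p$, so $p\mid h$ and $h\ge p$.

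The main subtlety is justifying that repeated returns compose, $(\fnTransOne^{kn})_{00}\ge((\fnTransOne^n)_{00})^k$, which holds because the matrix entries are nonnegative. The only other requirement is that two distinct classes are both nonempty, which is the standing assumption of the nontrivial case. Irreducibility is used only to ensure $h$ is defined; the optimality hypothesis is not actually needed.
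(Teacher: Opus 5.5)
Your proof is correct, and it takes a more elementary route than the paper. The paper first turns the gcd into actual return times via Schur's theorem on numerical semigroups: every sufficiently large multiple of $h$ is a return time. Then, assuming $\gcd(h,p)=1$, it uses the Chinese Remainder Theorem to build two return times $\chi_1',\chi_2'$ whose residues mod $p$ lie in different classes. The contradiction comes from $\sigma_0$ lying in the support of both $G^{\chi_1'}(\sigma_0)$ and $G^{\chi_2'}(\sigma_0)$.

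You instead take a single return time $n\not\equiv 0 \pmod p$, using only that its multiples are return times and that $n$ is invertible mod $p$. You then append the common suffix $1^a$, so the conflict is between $G^a(\sigma_0)$ and a state that contains it with weight $q>0$. This dispenses with the Frobenius-problem machinery and the CRT, and gives $p\mid h$ directly. The paper's argument also yields $p\mid h$ implicitly, since it only uses coprimality. Your version also makes transparent that optimality plays no role.

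What the paper's formulation buys is reuse. The construction of long multiples of $h$ in prescribed residue classes is exactly what Lemma~\ref{lemma:smallclassical} recycles in the approximate setting. There, exact positivity is replaced by convergence of $G^{nh}$ to a limit projection. Your one-sided bound (output $j$ with probability at least $q$ on the second word) would only give a loss bounded away from zero, not near-random guessing.

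One minor point: you do not need Lemma~\ref{lemma:projective}, which is an existence statement about replacing the measurement and is phrased for quantum agents. Positivity of whatever effect produces output $j$ is all your inequality uses.
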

\begin{proof}

Let $\mathcal{N} = \{n_i\}_{i=1}^\infty$ be the set of all integers $n$ such that $\fnTransOne^n(s_0)$ has nonzero probability of being in $\sigma_0$. By considering the composition $\fnTransOne^{n+n'}(\sigma_0)=(\fnTransOne^n \circ \fnTransOne^{n'})(\sigma_0),$ we observe that $\mathcal{N}$ is closed under addition.

Next, let $g(j) = \textrm{gcd}(n_1, \dots, n_j);$ then $g(j)$ is a monotonically decreasing function to the positive integers, which implies that it achieves its minimum $h$ at some finite $\tilde{j}$. That is, there is some $\tilde{j}$ such that $h=\textrm{gcd}(n_1, \dots, n_{\tilde {j}})$. We have therefore found a finite subset of $\mathcal{N}$ that determines its greatest common divisor $h$.

Now, since $h$ is the greatest common divisor of $\{n_1, \dots, n_{\tilde {j}}\}$, we can divide each element by $h$ to yield a set of setwise coprime integers $\{n_1/h, \dots, n_{\tilde {j}}/h\},$ i.e. $\gcd \{n_1/h, \dots, n_{\tilde {j}}/h\}=1.$
Schur's theorem states that, for \textit{all} sufficiently large integers $k > k_{crit}$, there exist non-negative integers $c_i$ such that $\sum_{i=1}^{\tilde{j}} c_i n_i/h=k$ \cite{alfonsin2005} \footnote{While \textit{finding} $k_{crit}$ is known as the Frobenius problem \cite{alfonsin2005}, and can be computationally difficult, we only need to know that it exists; various upper bounds have been found by e.g. \cite{gasarch2015}.}. Recalling that $\mathcal{N}$ is closed under addition, it therefore follows that $hk \in \mathcal{N}$ for all $k > k_{crit}.$

Suppose, for contradiction, that $p>  h$. Since $p$ is prime,
$\gcd(h,p)=1$. Consider some distinct $\setMods_1$ and $\setMods_2$ in $\setSetMods$.
Let $\xi_1 \in \setMods_1$ and $\xi_2 \in \setMods_2$. Note also that $\intPeriod$ and $p$ are co-prime (since $p$ is prime). The Chinese Remainder Theorem states that, given a collection of pairwise coprime positive integers $\{q_i\}_i$ and remainders $0\le a_i < q_i,$ there exists exactly one solution $0 \le x < \prod_i q_i$ such that   $(x \mod q_i) = a_i$ for all $i$. Consider the coprime integers $\{h,p\}$. The Chinese Remainder Theorem therefore implies there exists a unique solution $\chi_1$ to the following system of modular equations:
\begin{align}
\label{eqn:crt1}
    \chi_1 \mod p &= \xi_1\nonumber\\
    \chi_1 \mod h &= 0;
\end{align}
We then apply the theorem again to find a unique solution to the following system:
\begin{align}
\label{eqn:crt2}
    \chi_2 \mod p &= \xi_2\nonumber\\
    \chi_2 \mod h &= 0.
\end{align}
In particular, since $h$ divides $\chi_1,\chi_2$, we may write $\chi_1=k_1 h, \chi_2 = k_2 h$ for integer $k_1, k_2.$
These equations are also satisfied if we add on any integer multiple of $hp$; specifically, choose
\begin{align}
    \chi'_1&=\chi_1+hp(k_{crit}+1) = h k_1' \\
    \chi'_2&=\chi_2+hp(k_{crit}+1) = h k_2'
\end{align}
where  $k'_1=k_1+p(k_{crit}+1), k'_2=k_2+p(k_{crit}+1).$
Since $k'_1,k'_2 > k_{crit}$, it follows from Schur's theorem that $\chi'_1,\chi'_2 \in \mathcal{N}$.

Thus, upon input $\Omega$, after receiving a sequence of $\chi'_1$ ones, or of $\chi'_2$ ones, the agent's memory will return to the exact same computational basis state $\sigma_0$ (some basis state inside $s_0$) with strictly positive probability.

Because the final measurement $\mathcal{M}$ depends only on the current state, arriving at the identical basis configuration $\sigma_{0}$ guarantees an identical conditional probability distribution over the output labels.

However,  $\chi_1' \textrm{  mod  } p  \equiv \xi_1 \in \Xi_1$ and $ \chi_2' \textrm{  mod  } p \equiv \xi_2 \in \Xi_2$; thus any agent solving $\taskMCC$ is required to deterministically discriminate between them. This contradicts the previous paragraph. As such, $h$ cannot be co-prime with $p$. This contradicts the assumption $p>h$; in particular, $h \ge p$.

\end{proof}

As a side note, by Lemma \ref{lemma:classicalirreducible}, we may assume $\fnTransOne$ is irreducible, and so \textit{every} pure state has the same period; we may therefore simply say that $h$ is \textit{the} period of $\fnTransOne$  \cite{pishronik2014}.

We can go even further. Consider \textit{any} classical agent $\mathcal{A}$ with memory cost less than $p$. We now demonstrate that there exists input words $\strInput^{(1)}, \strInput^{(2)}$ with $\fnClass{\strInput^{(1)}}\neq \fnClass{\strInput^{(2)}}$ such that the resulting memory states of $\mathcal{A}$ are arbitrarily close.

\begin{lemma}
\label{lemma:smallclassical}
Suppose a classical agent $\agent = \tupleAgentShort$ has memory cost less than $p$. Then for any choice of $\fnTransOne$ and any two distinct classes $\setMods_1, \setMods_2$, we can always find two input strings  $\strInput^{(1)}, \strInput^{(2)} $, with $\fnClass{\strInput^{(1)}}\neq \fnClass{\strInput^{(2)}}$ , and corresponding sums $n_1, n_2$ such that $\fnTransOne^{n_1}(s_0), \fnTransOne^{n_2}(s_0)$ are arbitrarily close.
\end{lemma}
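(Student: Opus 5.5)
Let $G$ be the $M\times M$ column-stochastic matrix of the agent on the basis $\{\sigma_i\}$ of \taskMCC, with $M<p$, and let $q_n=G^n s_0$ be the distribution after $n$ ones. The plan has three steps: show that $q_n$ converges along residues of the form $n\equiv r \pmod h$, then show that the limit cannot depend on $n \bmod p$, and finally pick $n_1,n_2$ in distinct classes along a common residue mod $h$.

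\emph{Step 1: periodic limits of $G^n s_0$.} By standard finite Markov chain theory (Perron--Frobenius), the mass of $q_n$ on transient states decays to $0$. On each recurrent class $\mathcal{C}_c$ with period $h_c$, the restricted chain is a cyclic $h_c$-partite chain, and $G^{h_c}$ restricted to each cyclic part is aperiodic and irreducible. Set $L=\operatorname{lcm}_c h_c$. Then for every residue $r\in\{0,\dots,L-1\}$ the limit
\[
q^{(r)}=\lim_{k\to\infty} G^{kL+r}s_0
\]
exists. Since each recurrent class has at most $M<p$ states, every period satisfies $h_c\le M<p$. Because $p$ is prime, this gives $\gcd(L,p)=1$. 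Lemma~\ref{lemma:classicalirreducible} and Lemma~\ref{lemma:classicalperiod} point the same way, but here we cannot assume optimality or irreducibility, so I will argue directly on the decomposition of an arbitrary $G$.

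\emph{Step 2: choose sums in different classes with the same residue mod $L$.} Fix $\xi_1\in\setMods_1$ and $\xi_2\in\setMods_2$. Since $\gcd(L,p)=1$, the Chinese Remainder Theorem (exactly as in the proof of Lemma~\ref{lemma:classicalperiod}) gives integers $\chi_1,\chi_2$ satisfying
\[
\chi_i\equiv \xi_i \pmod p,\qquad \chi_i\equiv 0 \pmod L .
\]
Adding multiples of $Lp$ preserves both congruences. Hence for every $\epsilon$ we can take $n_i=\chi_i+Lp\,k$ with $k$ large enough that $\|G^{n_i}s_0-q^{(0)}\|_1<\epsilon/2$ for $i=1,2$. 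The triangle inequality then gives $\|G^{n_1}s_0-G^{n_2}s_0\|_1<\epsilon$. Taking $\strInput^{(i)}=1^{n_i}$ gives words in distinct classes $\Gamma_1,\Gamma_2$ with sums $n_1,n_2$, which completes the argument.

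\emph{Main obstacle.} The delicate point is Step 1 for a reducible, possibly periodic chain with transient states, where we need convergence of $G^{kL+r}s_0$ as $k\to\infty$. I will handle this with the spectral structure of stochastic matrices. All eigenvalues of modulus one are $h_c$-th roots of unity for the recurrent classes, and these are semisimple (peripheral eigenvalues of a stochastic matrix have trivial Jordan blocks). Consequently $G^L$ has peripheral spectrum $\{1\}$, semisimple, and the rest of its spectrum lies strictly inside the unit disk, so $(G^L)^k$ converges. If some $h_c$ could equal a multiple of $p$ the argument would fail, but that requires a class with at least $p$ states, which $M<p$ forbids. If "close" is meant in trace distance instead of $\ell_1$, the two notions coincide for classical distributions.
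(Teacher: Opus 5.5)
Your proof is correct and follows essentially the same route as the paper: bound every recurrent period below $p$, use $\gcd(\mathrm{lcm}_c\, h_c, p)=1$ and the Chinese Remainder Theorem to pick large $n_1\equiv\xi_1$ and $n_2\equiv\xi_2 \pmod p$ that are both multiples of $L$, then conclude from convergence of $G^{kL}s_0$ and the triangle inequality. The only difference is that you get that convergence in one step from semisimplicity of the peripheral spectrum of $G^L$, whereas the paper first treats the irreducible case via the cyclic block form and then handles reducible $G$ separately, discarding transient mass and working block-by-block on the recurrent classes.
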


\begin{proof}
First consider the case where $\fnTransOne$ is irreducible, and has period $h$.
The Perron-Frobenius theorem for irreducible matrices is as follows (\cite{boylenotes} Thm. 5.9, paraphrasing slightly): Let $A$ be an irreducible matrix of period $h$. Then:
\begin{itemize}
    \item $A$ has a nonzero right eigenvector $r$. This eigenvector is strictly positive (i.e., every entry is positive).
    \item Its eigenvalue $\lambda$ is the spectral radius of $A$, and any nonnegative eigenvector of $A$ is a scalar multiple of $r$.
    \item The characteristic polynomial of $A$ has exactly $h$ roots on its spectral circle (the peripheral spectrum). These roots are simple, and they are precisely $\lambda, \lambda e^{2\pi i/h}, \lambda e^{4\pi i/h}, \dots, \lambda e^{2(h-1)\pi i/h}$.
\end{itemize}

In particular, the characteristic polynomial of $\fnTransOne$ has $h$ roots, and thus $\fnTransOne$ must have at least $h$ distinct eigenvalues, and so $h \le \dimHacc < p$.

We can go further. By \cite{boylenotes}, every irreducible matrix $A$ can be put into \textit{cyclic block form} via some permutation matrix $P$. For instance, if $\fnTransOne$ has period 3, then there exists some permutation $P$ such that
\begin{equation}
    PA P^{-1} = \begin{bmatrix}
        0 &\fnTransOne_1 &0 \\
        0 &0&\fnTransOne_2 \\
        \fnTransOne_3 &0 &0
    \end{bmatrix},
\end{equation}
for some (not necessarily square) blocks $\fnTransOne_1, \fnTransOne_2, \fnTransOne_3$. We call this the \textit{cyclic block form} of $\fnTransOne$.

Prop.5.5 of \cite{boylenotes} states the following: Let $A$ be an irreducible matrix of period $h$ in cyclic block form.
Then $A^h$
is a block diagonal matrix and each of its diagonal blocks is a primitive matrix, i.e. irreducible and period 1.
Moreover, the diagonal blocks have the same nonzero spectrum.

Applying this, we see that there exists some permutation matrix $P$ such that $P\fnTransOne^h P^{-1}$ is block diagonal, in the form
\begin{equation}
    P\fnTransOne^h P^{-1} = \begin{bmatrix}
        \fnTransOne_1 &0 &0 &\dots\\
        0 &\fnTransOne_2 &0 &\dots\\
        0 &0 &\fnTransOne_3 &\dots\\
        \vdots &\vdots &\vdots &\ddots
    \end{bmatrix},
\end{equation}
where each block matrix $\fnTransOne_l$ is primitive, i.e. irreducible and period 1. In particular, as $n \to \infty$, the powers of these primitive matrices inside $G^{nh}$ approach rank-one projections onto their respective stationary distributions;  thus $\fnTransOne^{nh}$  approaches a block-diagonal projection $\Pi_h$ onto the $h$-step recurrent blocks.

We then proceed as in Lemma \ref{lemma:classicalperiod}; as $h < p,$ $h$ and $p$ are coprime. Therefore, given $\xi_1 \in \setMods_1, \xi_2 \in \setMods_2$, we can once again find sufficiently large solutions $n_1, n_2$ to \eqref{eqn:crt1}, \eqref{eqn:crt2} such that $\fnTransOne^{n_1}(s_0)$ and $\fnTransOne^{n_2}(s_0) $ are arbitrarily close: specifically, we can pick $n_1$ and $n_2$ such that $\fnTransOne^{n_1} \approx \Pi_h$ \textit{and} $\fnTransOne^{n_2} \approx \Pi_h$. Then we simply set $\strInput^{(1)}$ to be $n_1$ 1's followed by $\Omega$, and $\strInput^{(2)}$ to be $n_2$ 1's followed by $\Omega$.

We did not assume agent $\agent$ is optimal, nor that it `solves' $\taskMCC$ with zero error. We must therefore consider the more general case where $\fnTransOne$ is reducible. We may again assume $\mathcal{H}=\Hacc(\agent)$; so $\dim\mathcal{H}<p$.

For $n$ sufficiently large, the probability of $\fnTransOne^n (s_0)$ remaining in a transient state approaches zero: by definition, the expected number of times one visits a transient state is finite \cite{castaneda2012}, and there are only finitely many such states. Specifically, for any $\eta > 0$, there exists $N$ such that $\fnTransOne^n(s_0)$ has probability less than $\eta$ of being in a transient state for all $n > N$.

Let $\Pi_R$ be the projector onto the recurrent states, and let $v_N = \Pi_R\fnTransOne^N(s_0).$ Then for $n > N$,
\begin{equation}
    \fnTransOne^n(s_0) = G^{n-N}v_N + G^{n-N}(1-\Pi_R)G^N(s_0),
\end{equation}
where the second term vanishes as $\eta \rightarrow 0, N \rightarrow \infty$; thus $v_N \approx \fnTransOne^N(s_0)$, and so for $n > N$, $\fnTransOne^n(s_0) \approx \fnTransOne^{n-N}(v_N)$

Now, consider the \textit{normal form of the reducible matrix} $G$; that is, for some permutation matrix $P$, there exists the block-triangular form
\begin{equation}
    PG P^{-1} \rightarrow \begin{bmatrix}
        G_1 &* &* &\dots\\
        0 &G_2 &* &\dots\\
        0 &0 &G_3 &\dots\\
        \vdots &\vdots &\vdots &\ddots
    \end{bmatrix},
\end{equation}
where each \textit{diagonal} block is either irreducible or simply a 1$\times$1 zero matrix; this form can be found by repeatedly `breaking down' reducible diagonal blocks into smaller blocks (\cite{boylenotes} Thm. 6.1).

We can constrain the normal form of this reducible matrix further.
Suppose some block, say $G_2$, has a nonzero value above it in its column.
We observe that by the first part of the Perron-Frobenius theorem, $G_2$ has a strictly positive eigenvector with eigenvalue $\lambda$ equal to the spectral radius. Obviously, that eigenvalue cannot be greater than one, since $G$ describes a stochastic process. That eigenvalue also cannot be equal to one, because the nonzero value above the block implies a nonzero probability of `escaping' the block. So if a block has a nonzero value above, that block is transient; and the converse is also true.

Let us now consider the action of $P\fnTransOne P^{-1}$ only on the recurrent states.
By the previous argument, $P\fnTransOne P^{-1}$ acting on the recurrent states (i.e. the restriction of the matrix onto those states) is block-diagonal:
\begin{equation}
    P\fnTransOne P^{-1} = \begin{bmatrix}
        D_1 &0 &0 &\dots\\
        0 &D_2 &0 &\dots\\
        0 &0 &D_3 &\dots\\
        \vdots &\vdots &\vdots &\ddots
    \end{bmatrix},
\end{equation}
where each diagonal block $D_l$ is irreducible. By previous arguments, each $D_l$ has some period $h_l < p;$ so the least common multiple $h$ of these periods is still coprime to $p$. In particular, the corresponding `blocks' of $PG^h P^{-1}$ are all aperiodic.

Let us now additionally assume $N=Lh$ for some integer $L$. (Since $N$ is large but otherwise arbitrary, this is fine.)
We then effectively repeat the previous construction over all these blocks simultaneously: as $n \rightarrow \infty$, we therefore find that the action of $P\fnTransOne^{h(n-L)} P^{-1}$  on the recurrent states approaches a projection, and so $\fnTransOne^{h(n-L)} \rightarrow \Pi_h$, at least on the recurrent states; thus $G^{hn}(s_0) \approx \Pi_h v_N.$

Given $\epsilon>0$, choose $N=Lh$ sufficiently large that
$v_N$ is within $\epsilon/4$ of $\fnTransOne^N(s_0)$. Because $\fnTransOne$ is stochastic,
for every $n>N$, $\fnTransOne^n(s_0)$ is within $\epsilon/4$ of
$G^{n-N}(v_N)$. (Specifically, because of the $l_1$ non-expansiveness of stochastic matrices, the classical analogue of the Data Processing Inequality.)

We then find  $n_1, n_2$ solving the modular equations such that $\fnTransOne^{n_1-N}(v_N)$ and $\fnTransOne^{n_2-N}(v_N)$ are both within $\epsilon/4$ of $\Pi_h(v_N)$, once again using the fact that $h$ is coprime to $p$. Combining these distances, we find that $\fnTransOne^{n_1}(s_0)$ and $\fnTransOne^{n_2}(s_0)$ are within $\epsilon$ of each other. The rest follows as in the irreducible case.
\end{proof}

Lemma \ref{lemma:smallclassical} states that given a classical $\mathcal{A} = (\mathcal{S}, s_{init} =s_0, (\fnTransOne, \mathcal{M}))$  with memory dimension $h = \dimHacc(\agent) < p$ we can always find two different input strings $\strInput^{(1)},\strInput^{(2)}$, with distinct classes $\fnClass{\strInput^{(1)}}\neq \fnClass{\strInput^{(2)}}$, where the inputs $\strInput^{(1)},\strInput^{(2)}$ consist of $n_1$ one's and $n_2$ one's respectively, such that the induced states on the agent's memory $\fnTransOne^{n_1}(s_0)$ and $\fnTransOne^{n_2}(s_0)$ are arbitrarily close. The data processing inequality implies that the distance between $\fnTransOne^{n_1}(s_0)$ and $\fnTransOne^{n_2}(s_0)$ bounds the distance on any outputs obtained by measuring these two states. As such, upon  receiving the termination character $\Omega$, $\agent$ cannot respond differently on these two inputs, and thus cannot solve $\taskMCC$. Therefore, the agent has to guess the appropriate class when it receives $\strInput^{(1)}, \strInput^{(2)}$.

Suppose that the agent's response has no correlation at all with its internal state $s$. And assume that $\setData$ is supported on  $\bigcup_{j=1}^K\Gamma_j$, so that it is possible to obtain non zero loss.
In this case, the lowest possible misclassification risk takes a simple form:
\begin{lemma}
\label{lemma:guess}
    Consider \taskTSC. Consider a distribution $\setData(\strInput)$. Then there exists an lower bound $\eRisk_G$ on the expected loss of `guessing the solution' to \taskTSC:
    that is, for any classifier agent $\agent=\tupleAgent$ such that $\fnMeasure(s)=\fnMeasure(s_0)$ for all $s \in \setpStates$, then the expected loss has lower bound
    \begin{equation}
        \eRisk_G \equiv 1-\max_j P(\vecOutput=j | \strInput \sim \setData(\strInput)).
    \end{equation}

    So an agent that guesses will incur at least this much loss.
\end{lemma}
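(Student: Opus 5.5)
We need to prove Lemma \ref{lemma:guess}: if the agent's final measurement statistics do not depend on its memory state, i.e.\ $\fnMeasure(s)=\fnMeasure(s_0)$ for every $s\in\setpStates$, then its expected loss under $\setData$ (assumed supported on $\bigcup_j\Gamma_j$) is at least $1-\max_j P(\vecOutput=j\mid \strInput\sim\setData)$. Here $P(\vecOutput = j\mid\strInput\sim\setData)$ is read as the probability that a word drawn from $\setData$ has correct label $j$, i.e.\ $\sum_{\strInput\in\Gamma_j}\setData(\strInput)$.

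The approach is a direct computation using linearity.

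\emph{Step 1: the output distribution is independent of the word.} Since the outcome distribution on $\Omega$ is the same for every memory state, let $q_y:=P(\fnGuess{\strInput}=y)$ denote the probability of outputting $y$. This $q_y$ is independent of $\strInput$, because whatever state $T^{(\strInput)}(s_0)$ the word induces, the measurement acts on it as on $s_0$. We have $q_y\ge 0$ and $\sum_y q_y= 1$.

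\emph{Step 2: rewrite the loss.} Write $\pi_j:=\sum_{\strInput\in\Gamma_j}\setData(\strInput)$. Since $\setData$ is supported on the classes, $\sum_j\pi_j=1$. For $\strInput\in\Gamma_j$ the loss is $1-q_j$, so
\begin{equation*}
\fnELoss{\agent}=\sum_j \pi_j (1-q_j) = 1-\sum_j \pi_j q_j .
\end{equation*}

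\emph{Step 3: bound the sum.} We have $\sum_j\pi_j q_j\le \max_j\pi_j\sum_{j\ge1} q_j\le\max_j\pi_j$. Outputs outside $\{1,\dots,K\}$, such as the rejection label $0$, only contribute $q_0\ge0$ of extra wasted mass, which weakens nothing. Hence $\fnELoss{\agent}\ge 1-\max_j\pi_j=\eRisk_G$.

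The bound is tight, since deterministically outputting $\arg\max_j\pi_j$ achieves it.

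The only subtle step is Step 1: making precise that ``$\fnMeasure(s)=\fnMeasure(s_0)$'' means the induced output distribution is state-independent, including when the memory is in mixed states. Linearity of $s\mapsto \tr[E_y s]$ handles this, because the hypothesis then holds on the convex hull of the states.
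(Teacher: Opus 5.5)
Your proof is correct and is essentially the paper's argument. The paper notes that the expected loss is affine in the state-independent output probabilities, so it is minimized at an extreme point, namely deterministically guessing $\arg\max_j$; your inequality $\sum_j \pi_j q_j \le \max_j \pi_j$ shows this directly, without the paper's detour through a Bayesian prior and the fundamental theorem of linear programming, and it also handles explicitly two details the paper leaves implicit: the rejection label and mixed memory states (via linearity).
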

\begin{proof}First before guessing the agent $\mathcal{A}$ will have a (Bayesian) prior $\aleph(\strInput)$ over the strings. The best-case scenario is when $\aleph(\strInput) = \setData(\strInput)$. The result then follows as a simple consequence of the fundamental theorem of linear programming: that is, since the expected risk is an affine function of the output probabilities, the optimum is achieved on an extremum, which in this case is where the agent simply picks the same thing every time. The best guess is $ \hat{C}(\strInput) =  \textrm{argmax}_j P(j | \strInput \sim \setData(\strInput))$.
\end{proof}
In particular, if there are only two possible classifications, with equal probability, this lower bound is simply 1/2.

Finally, we now prove the optimality of our simple classical construction, in an extremely strong way: any classical agent with less memory does little better than guessing, with expected loss bounded away from zero.
\begin{theorem}
\label{thm:optimalclassical}
Suppose an \textit{optimal} classical agent $\mathcal{A}=\tupleAgentShort$ solves \taskMCC, for $\Xi$ nontrivial, i.e. $|\Xi| > 1$. Then the memory cost of $\agent$ is $p.$
Furthermore, for any classical agent $\mathcal{A}'=(\setpStates',\causalstate'_{init} = \causalstate'_0,(\fnTransOne',\fnMeasure'))$ with memory cost less than $p$, and $\epsilon > 0$, there exists a distribution $\setData(\strInput)$ such that the expected loss $\eRisk_{\agent'}$ on $\taskMSC$ is at least
\begin{equation}
    \eRisk_G - \epsilon = \frac{1}{2}-\epsilon.
\end{equation}
\end{theorem}
\begin{proof}

Let $\agent$ be a memory optimal classical agent which solves \taskMCC. By Lemma \ref{lemma:classicalirreducible}, its stochastic matrix $\fnTransOne$ is irreducible. Suppose $\fnTransOne$ has period $h$: by the Perron-Frobenius theorem, $\fnTransOne$ has $h$ distinct eigenvalues $e^{2\pi\ii j/h}$  \cite{seneta2008}. In particular, the matrix must have dimension at least $h$. By lemma \ref{lemma:classicalperiod}, $h \ge p$. Since we already constructed an agent with memory cost $p$, we have thus demonstrated that the optimal classical agent has $h=p$, and memory cost $p$.

For classical agents $\mathcal{A}'$ with memory dimension less than $p$, for simplicity, we consider $\taskMCC$. Lemma \ref{lemma:smallclassical} implies the existence of $\strInput^{(1)}, \strInput^{(2)}$, such that
\begin{itemize}
    \item $\fnClass{\strInput^{(1)}} \neq \fnClass{\strInput^{(2)}}$
    \item $\sum_t \charInput^{(1)}_t = n_1, \sum_t \charInput_t^{(2)} = n_2$, and
    \item $\fnTransOne'^{n_1}(s_0), \fnTransOne'^{n_2}(s_0)$ are arbitrarily close.
\end{itemize}
More precisely, for the last point, let us pick $\strInput^{(1)}, \strInput^{(2)}$ such that the trace distance
\begin{equation}
    \mathrm{D}(\fnTransOne'^{n_1}(s_0), \fnTransOne'^{n_2}(s_0)) = \frac{1}{2}\|\fnTransOne'^{n_1}(s_0)-\fnTransOne'^{n_2}(s_0)\|_1 <\eta
\end{equation}
 where $\eta > 0$ is a constant which will be determined later. Without loss of generality, let us assume $\fnClass{\strInput^{(1)}}=1, \fnClass{\strInput^{(2)}} = 2$. Since $\fnTransOne'^{n_1}(s_0)$ is a stochastic vector
\begin{equation}
    \mathrm{D}(\fnTransOne'^{n_1}(s_0), \fnTransOne'^{n_2}(s_0)) = \frac{1}{2}\sum_j |\fnTransOne'^{n_1}(s_0)_j-\fnTransOne'^{n_2}(s_0)_j|.
\end{equation}

Now, consider the possible actions that the agent may execute using its internal state. It may apply any measurement, followed by some classical post-processing, and then output the result.  For clarity, let us write $\mathbf{P}^{(1)}$ to represent the probability vector corresponding to $\fnGuess{\strInput^{(1)}}$ (i.e. $P^{(1)}_j=Pr[\fnGuess{\strInput^{(1)}}=j]$), and similarly $\mathbf{P}^{(2)}$ for $\fnGuess{\strInput^{(2)}}$.
Writing the composition of the measurement and post-processing as a channel $\mathcal{E}$,  the data processing inequality implies the trace distance
\begin{align}
    \mathrm{D}(\mathbf{P}^{(1)}, \mathbf{P}^{(2)})&=\mathrm{D}(\mathcal{E} \circ \fnTransOne'^{n_1}(s_0),\mathcal{E}\circ \fnTransOne'^{n_2}(s_0))\nonumber\\
     & \le  \mathrm{D}(\fnTransOne'^{n_1}(s_0), \fnTransOne'^{n_2}(s_0))\nonumber\\
    &< \eta.
\end{align}
Since $\mathrm{D}(\mathbf{P}^{(1)},\mathbf{P}^{(2)}) < \eta,$ we find
\begin{align}
    \frac{1}{2}\sum_j |P^{(1)}_j-P^{(2)}_j|<\eta;
\end{align}
Furthermore, since we are optimizing over simple misclassification, the optimal guess will only include the possible classifications $j=1,2,$ and so we find that $P_2^{(k)}=1-P_1^{(k)}$, and thus
\begin{align}
    |P^{(1)}_1-P^{(2)}_1|&=|P^{(1)}_2-P^{(2)}_2|,\\
    |P^{(1)}_1-P^{(2)}_1| &< \eta.
\end{align}

Now, let us simply choose $\setData$ such that $\strInput^{(1)},\strInput^{(2)}$ occur with probability 1/2 each.
Using this dataset, the probability of misclassification of $\agent'$ on $\taskMCC$ is
\begin{align}
    \eRisk_{\agent'}&=\frac{1}{2}\left(P^{(1)}_2+P^{(2)}_1\right)
\end{align}
Since this expression is continuous (linear, even), for all $\epsilon>0$ there exists some $\eta>0$ such that whenever $\mathrm{D}(\mathbf{P}^{(1)},\mathbf{P}^{(2)}) < \eta,$ the expected loss is within $\epsilon$ of the expected loss when $\mathbf{P}^{(1)}=\mathbf{P}^{(2)},$ which is simply 1/2. That is, $\mathrm{D}(\mathbf{P}^{(1)},\mathbf{P}^{(2)}) < \eta \implies\left|\fnELoss{\agent'}-\frac{1}{2}\right| < \epsilon.$
More precisely,
\begin{align}
    \left|\fnELoss{\agent'}-\frac{1}{2}\right|
    &=\left|\frac{1}{2}\left(P^{(1)}_2+P^{(2)}_1\right) - \frac{1}{2}\left(P^{(1)}_2+P^{(1)}_1\right)\right|\\
    &=\left|\frac{1}{2}(P^{(2)}_1-P^{(1)}_1)\right|\\
    &<\frac{1}{2}\eta.
\end{align}
Thus, to ensure that this is less than $\epsilon$, $\eta = 2\epsilon$ suffices.

Finally, Lemma \ref{lemma:guess} shows that for all input-independent strategies, i.e. $\mathbf{P}^{(1)}=\mathbf{P}^{(2)}$, the expected loss is at least $\eRisk_G = 1/2$, and so $\fnELoss{\agent'} > 1/2 - \epsilon,$ completing the proof. (Strictly speaking, Lemma \ref{lemma:guess} holds for $\taskMSC$ instances, but these instances are also $\taskMCC$ instances, by Lemma \ref{lemma:tally2}.)

\end{proof}

While Lemma \ref{lemma:smallclassical} designs a distribution adversarially, i.e. against a particular $\agent$,
we note that for any $\setSetMods,$ it is also possible to construct a natural family of input distributions $\{\setData_\alpha\}$ such that, for \textit{any} classical agent with memory dimension $M < p$, for sufficiently small $\alpha$, the agent does no better than guessing. In this case, $\mathcal{L}_G = 1-1/K$  (a proof of this claim is included in  Sec. \ref{sec:naturalguess}). The main idea is to add a long random pad of $pN$ ones, selecting the pad length from a geometric distribution over $N$ with stopping probability $\alpha$; the pad must be long enough that $G^{pN}(s_0)$ is almost certainly supported on recurrent classes, and long enough to be evenly distributed within each class, for \textit{any} input. For further information, see Sec. \ref{sec:naturalguess}.

\section{Constructing a Quantum Agent}
\label{sec:construction}
We next demonstrate a construction of a quantum agent with memory cost less than $p$. This suffices to demonstrate quantum advantage. As we will demonstrate, this cost is bounded above by $(\sum_j|\setMods_j|)^2$; in particular, the quantum advantage enjoyed by these quantum agents can be arbitrarily large, for certain $\setSetMods$ and sufficiently large $p$. (This bound is by no means tight, but it is the simplest to state.)

Consider the construction outlined in Algorithm \ref{alg:construction}. Firstly note that there are superficial differences between the algorithm in the main body of the paper and Algorithm \ref{alg:construction}: in particular, instead of $\mathcal{N}_{\setSetMods}$ we use $\Delta$, and instead of $\ket{\sigma_j}$ we use $\ket{s_j}$. These differences do not change the output, they simply make it notationally easier to prove the algorithm's correctness. Secondly looking at Algorithm \ref{alg:construction}, we note that since this construction has the map updating the memory state, $\fnTransOne(\rho)$, be \textit{unitary} (i.e. Kraus rank 1), we will write $U\rho U^\dagger=\fnTransOne(\rho).$
We can then construct the quantum agent $\mathcal{A}=(\mathcal{S},\ket{s_0},(\fnTransOne,\mathcal{M}))$ solving \taskMCC \, using Algorithm \ref{alg:construction}.

\begin{algorithm}
\caption{Quantum Agent Construction} \label{alg:construction}
\SetKwInOut{Input}{input}
\SetKwInOut{Output}{output}
\SetInd{0.5em}{1.5em}
\SetAlgoHangIndent{1.5em}
\Input{$p,\Xi=\lbrace \Xi_j\rbrace_{j=1}^K$}
\Output{$\mathcal{A} =  (\mathcal{S}, s_{init} ,\mathcal{T}=(\fnTransOne  ,\mathcal{M}))$}
\Begin{
    define $\delta_0=0$\;
    construct for all classes $\setMods_i, \setMods_j$ the difference sets
    \begin{align}
    \setDiffs_{ij}=\lbrace \delta=\xi_1-\xi_2 \bmod p  \,&|\,\xi_1\in\setMods_i, \xi_2\in\setMods_j,\nonumber\\
      &0<\delta<p/2\rbrace,
    \end{align} between $\setMods_i, \setMods_j$\;
    construct the difference set
    \begin{align}
     \setDiffs = \bigcup_{i \neq j}\setDiffs_{ij}\cap \left\{0,1,\ldots,\frac{p-1}{2}\right\}.
    \end{align}
    and enumerate $\lbrace \delta_1, \delta_2,\dots,\delta_{|\Delta|}\rbrace=\Delta$\;
    let $\theta = 2\pi/p$\;
    define ($(2|\Delta|+1) \times p$) matrix $A$ such that \begin{equation}\label{eq:Amatrix}
    A_{kj}=\begin{cases}
        \cos j\delta_{k'}\theta & k=2k' \textrm{ for } k'\in \mathbb{Z},\\
        \sin j\delta_{k'}\theta & k=2k'-1 \textrm{ for } k'\in \mathbb{Z}
        \end{cases}
    \end{equation}
    for $k=0,1,\dots,2|\Delta|$\;
    let $\mathbf{e}_0 \in \mathbb{R}^{2|\Delta|+1}$ be a $(2|\Delta|+1)$-dimensional vector with $1$ on its first entry and $0$ elsewhere\;
    find nonnegative solution vector $\mathbf{b}\in \mathbb{R}^p$, $\mathbf{b}\ge 0$ to matrix equation
    \begin{equation}\label{eq:abequation}
    A\mathbf{b}=\mathbf{e}_0
    \end{equation}
    with at most $2|\Delta|+1$ nonzero entries, e.g. using simplex method \cite{stone1991simplex}; by Theorem \ref{thm:quantumdim} this is always possible\;
    let $M \le 2|\Delta|+1$ be the number of nonzero entries in $\mathbf{b}$\;
    let $l_i$ be the row-number of the $i$-th nonzero element of $\mathbf{b}$ for $i = 1,2,\ldots$\;
    let $\mathcal{H}$ be a Hilbert space with an orthonormal basis $\lbrace \ket{\phi_1},\ket{\phi_2},\dots,\ket{\phi_M}\rbrace$\;
    let $\mathcal{S}$ be the density matrices over $\mathcal{H}$\;
    let $\ket{s_0}=\sum_{i=1}^M \sqrt{b_{l_i}}\ket{\phi_i}$\;
    define $U$ such that $U \ket{\phi_j} = e^{\ii l_j \theta} \ket{\phi_j}$\;
    let $s_{init} =  \ket{s_0}\bra{s_0}$\;
    let  $\fnTransOne(\rho) = U\rho U^{\dagger}$\;
    let $\ket{s_j}=U^j\ket{s_0}$ for $j=0,1,2,\dots,p-1$ \;
    let $\mathcal{M}$ be the measurement according to the observable $\sum_{j=1}^K j P_j$, where $P_j$ is the projection operator onto $S_j=\mathrm{Span}_{k \in \Xi_j}\{\ket{s_k}\}$; complete the measurement basis with a rejection operator $P_0 = I - \sum_j P_j$ with output 0.
}
\end{algorithm}

To prove the correctness of Algorithm \ref{alg:construction}, we recall the definition from the main body:
\begin{definition}
The \textit{difference set between classes $\setMods_i, \setMods_j$}  is
\begin{align}
    \setDiffs_{ij}=\lbrace \delta=\xi_1-\xi_2 \bmod p  \,&|\,\xi_1\in\setMods_i, \xi_2\in\setMods_j,\nonumber\\
      &0<\delta<p/2\rbrace.
\end{align}
The \textit{difference set}
\begin{align}
     \setDiffs = \bigcup_{i \neq j}\setDiffs_{ij}\cap \left\{0,1,\ldots,\frac{p-1}{2}\right\}.
\end{align}
is the union of the difference sets between classes over all possible pairs of distinct classes, after filtering out `duplicates'.
\end{definition}

Note that in the main text we have also introduced a larger set $\mathcal{N}_{\setSetMods} = \{(\xi_l - \xi_m) \mod p| \forall  \, \xi_l \in \setMods_i, \xi_m \in \setMods_j,  \textrm{ and  }\forall \,\,\setMods_i, \setMods_j \in \setSetMods \textrm{ such that }i\neq j\}$. It is easy to check that $|\mathcal{N}_\setSetMods| = 2 |\Delta|$. This follows because $\xi_i - \xi_j \in \Delta$ implies  $[(\xi_j - \xi_i) \textrm{  mod  } p] > p/2 $, and thus  $\xi_j - \xi_i \notin \Delta$. And thus there are two contributions to $\mathcal{N}_\setSetMods$ for every element in $\Delta$. We leave $\Delta$ here as it is implicit in the construction of Algorithm  \ref{alg:construction}.

Using this definition, we can prove the following lemma:
\begin{lemma}
\label{lemma:delta}
Let $\Delta$ be the difference set for $\setSetMods$. Then
$\braket{s_0}{s_\delta}=0$ for \textit{all} $\delta \in \Delta$ iff $\braket{s_{\xi_1}}{s_{\xi_2}}=0$ for all $\xi_1,\xi_2$ in distinct classes $\Xi_i,\Xi_j \subseteq \mathbb{Z}_p$.
\end{lemma}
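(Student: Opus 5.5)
The argument rests on the translation-invariance of the overlaps. In Algorithm~\ref{alg:construction}, $\ket{s_j}=U^j\ket{s_0}$ with $U$ unitary. Hence for any $a,b\in\mathbb{Z}_p$,
\begin{equation*}
\braket{s_a}{s_b}=\ematrix{s_0}{U^{-a}U^{b}}{s_0}=\braket{s_0}{s_{(b-a)\bmod p}}.
\end{equation*}
This uses $U^p=\openone$: every eigenvalue of $U$ is $e^{\ii l_j\theta}$ with $\theta=2\pi/p$, so indices may be read modulo $p$. The overlap $\braket{s_a}{s_b}$ therefore depends only on the difference $b-a \bmod p$. I also need the conjugation identity $\braket{s_0}{s_{-\delta}}=\braket{s_\delta}{s_0}=\overline{\braket{s_0}{s_\delta}}$, so $\braket{s_0}{s_\delta}$ and $\braket{s_0}{s_{p-\delta}}$ vanish together.

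\emph{($\Leftarrow$)} Assume $\braket{s_{\xi_1}}{s_{\xi_2}}=0$ for all $\xi_1,\xi_2$ in distinct classes. Take $\delta\in\Delta$. By definition $\delta\in\setDiffs_{ij}$ for some $i\neq j$, so $\delta\equiv \xi_1-\xi_2 \pmod p$ with $\xi_1\in\setMods_i$ and $\xi_2\in\setMods_j$. The translation identity gives $\braket{s_0}{s_\delta}=\braket{s_{\xi_2}}{s_{\xi_1}}$. This vanishes by the hypothesis applied with the roles of the two classes swapped, which is allowed because the hypothesis covers all ordered pairs of distinct classes.

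\emph{($\Rightarrow$)} Assume $\braket{s_0}{s_\delta}=0$ for all $\delta\in\Delta$. Take $\xi_1\in\setMods_i$ and $\xi_2\in\setMods_j$ with $i\neq j$, and set $d=(\xi_2-\xi_1)\bmod p$. The classes are disjoint, so $\xi_1\neq\xi_2$ and $d\neq0$. Since $p$ is odd, exactly one of $d$ and $p-d$ lies in $\{1,\dots,(p-1)/2\}$.
\begin{itemize}
\item If $d<p/2$, then $d=\xi_2-\xi_1$ is an element of $\setDiffs_{ji}$, so $d\in\Delta$.
\item If $d>p/2$, then $p-d=(\xi_1-\xi_2)\bmod p$ is an element of $\setDiffs_{ij}$, so $p-d\in\Delta$.
\end{itemize}
In either case the translation identity gives $\braket{s_{\xi_1}}{s_{\xi_2}}=\braket{s_0}{s_d}$. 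In the first case this is $0$ directly by hypothesis. In the second case it equals $\overline{\braket{s_0}{s_{p-d}}}=0$ by the conjugation identity.

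No step here is hard. The one point that needs care is the sign and ordering convention: $\Delta$ contains only representatives in $(0,p/2)$, which is exactly why the conjugation symmetry and the parity of $p$ are needed. This is also the origin of the factor of two in $|\mathcal{N}_\setSetMods|=2|\Delta|$ noted earlier.
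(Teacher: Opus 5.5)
Your proposal is correct and follows essentially the same route as the paper. Both arguments rest on the identity $\braket{s_{\xi_1}}{s_{\xi_2}}=\ematrix{s_0}{U^{\xi_2-\xi_1}}{s_0}$ and on the observation that $\Delta$ contains exactly one of $\xi_2-\xi_1$ or $\xi_1-\xi_2$ (mod $p$), with the second case handled by complex conjugation. Your version is a bit more careful than the paper's: you state explicitly that $U^p$ is the identity (so indices can be reduced mod $p$), and you spell out why the conjugate case also vanishes.
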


\begin{proof}
For any $\xi_1, \xi_2$ in distinct classes $\Xi_j, \Xi_k$, by construction, the difference set $\Delta$ contains either this difference $\xi_2-\xi_1$ or the reverse $\xi_1-\xi_2.$ Let us first consider the former case, i.e. $\Delta \ni \delta = \xi_2 - \xi_1.$ Expanding the definition of $s_\delta$ we find
\begin{align}
\braket{s_0}{s_\delta} &= \ematrix{s_0}{U^\delta}{s_0} \\
&= \ematrix{s_0}{U^{\xi_2-\xi_1}}{s_0} \\
&= \braket{s_{\xi_1}}{s_{\xi_2}}.
\end{align}

Both the forward and reverse implications follow almost immediately: if $\braket{s_0}{s_\delta}=0$, then $\braket{s_{\xi_1}}{s_{\xi_2}}=0$; conversely if $\braket{s_{\xi_1}}{s_{\xi_2}}=0$ then $\braket{s_\delta}{s_0}=0$ as well.
The other case, $\Delta \ni \delta = \xi_1 - \xi_2,$ is similar, except everything is complex conjugate.
\end{proof}

Next, we need to show that the matrix equation in Algorithm \ref{alg:construction} always has a solution. Fortunately we can always rely on the obvious classical solution:
\begin{lemma}
\label{lemma:feasible}
Equation \eqref{eq:abequation} always has at least one feasible solution: specifically, $\mathbf{b}_{equal}=(1/p,1/p,\dots,1/p).$
\end{lemma}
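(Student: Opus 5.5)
We plan to verify directly that $\mathbf{b}_{equal}$ is nonnegative and satisfies $A\mathbf{b}_{equal}=\mathbf{e}_0$, computing row by row. Nonnegativity is immediate, since every entry equals $1/p>0$. It then remains to evaluate each row of $A$ summed against the constant vector.

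First we treat the row $k=0$. With the convention $\delta_0=0$, this row has entries $A_{0j}=\cos(0)=1$ for all $j=0,\dots,p-1$. Hence $(A\mathbf{b}_{equal})_0=\sum_{j=0}^{p-1}1/p=1$, which matches the first entry of $\mathbf{e}_0$.

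For the remaining rows $k=2k'$ and $k=2k'-1$ with $1\le k'\le|\Delta|$, the entries are $\cos(j\delta_{k'}\theta)$ and $\sin(j\delta_{k'}\theta)$. We group them into a single complex exponential:
\begin{equation*}
\frac{1}{p}\sum_{j=0}^{p-1}\left(\cos j\delta_{k'}\theta+\ii\sin j\delta_{k'}\theta\right)=\frac{1}{p}\sum_{j=0}^{p-1}\omega^{j},\qquad \omega=e^{2\pi\ii\delta_{k'}/p}.
\end{equation*}
Since $0<\delta_{k'}<p/2<p$, the number $p$ does not divide $\delta_{k'}$; primality of $p$ is not even needed for this. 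Therefore $\omega\neq1$ is a $p$-th root of unity, and the geometric series gives $\sum_j\omega^j=(\omega^p-1)/(\omega-1)=0$. Taking the real and imaginary parts, both the cosine row and the sine row vanish, so $A\mathbf{b}_{equal}=\mathbf{e}_0$.

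The only real obstacle is bookkeeping. We must confirm that the index convention in Eq.~\eqref{eq:Amatrix} makes row $0$ the all-ones row; taken literally, $k=0=2\cdot 0$ gives $\cos(j\delta_0\theta)=1$. We must also confirm that no $\delta_{k'}$ with $k'\ge1$ is $0\bmod p$, which holds because the difference set excludes $0$. Finally, we note that this feasible solution has $p$ nonzero entries. The reduction to at most $2|\Delta|+1$ nonzero entries is then a separate matter, handled by a basic feasible solution in linear programming as invoked later.
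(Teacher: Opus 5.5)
Your proof is correct and matches the paper's: the normalization row sums to $1$, and each cosine/sine pair of rows is combined into $\frac{1}{p}\sum_{j=0}^{p-1} e^{\ii j\delta_{k'}\theta}$, which vanishes as a sum of $p$-th roots of unity, so both real rows vanish. Your geometric-series step, together with the check that $p$ does not divide $\delta_{k'}$ (so $\omega\neq 1$), makes explicit what the paper compresses into a single line.
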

\begin{proof}
The first row of $A$, which checks normalization, is obviously satisfied.
As for the other rows, pick some $k' \in [1,|\Delta|].$ The corresponding rows $2k'-1, 2k'$ are all real; so let us consider adding row $2k'$ to $i$ times row $2k'-1$:
\begin{equation}
    \frac{1}{p}\sum_{j=0}^{p-1} \cos{(j\delta_{k'}\theta)} + i \sin {(j\delta_{k'}\theta)} = \frac{1}{p}\sum_{j=0}^{p-1} e^{ij\delta_{k'}\theta}.
\end{equation}
However, this right-hand side is just $1/p$ times the sum over the $\delta_{k'}$-th roots of unity of order $p$; thus, this is simply zero. Furthermore, since the rows $2k'-1, 2k'$ had all elements real, and they were the real and imaginary part of the above equation, it follows that those rows are also zero: thus, Eq.\eqref{eq:abequation} is satisfied.
\end{proof}

These lemmas are enough to prove that Algorithm \ref{alg:construction} always finds a valid solution to the classification problem:
\begin{theorem}
\label{thm:quantumexists}
The quantum agent constructed by Algorithm  \ref{alg:construction} solves $\taskMSC$.
\end{theorem}
\begin{proof}
First observe that if an agent can solve $\taskMCC$ then it can be used to solve $\taskMSC$,  and vice versa, by Lemmas \ref{lemma:tally1}-\ref{lemma:tally2}, so we prove this result in the context of solving $\taskMCC$.

Thus, we need only show that the resulting agent constructed by Algorithm \ref{alg:construction} solves $\taskMCC$. This requires us to show that for each $\Xi_j \in \Xi$, the resulting $S_j=\mathrm{Span}_{k \in \Xi_j}\{\ket{s_k}\}$ are orthogonal, and that the final measurement classifies them correctly.

Recall the unitary $U$ is constructed so that
\begin{equation}
    U\ket{\phi_j}=e^{\ii l_j \theta}\ket{\phi_j}.
\end{equation}
Computing $\braket{s_0}{s_{\delta_{k'}}}$ in this basis yields,
\begin{align} \label{eq:orthogonalityrequirements}
    \braket{s_0}{s_{\delta_{k'}}}&=\ematrix{s_0}{U^{\delta_{k'}}}{s_0}
    \nonumber\\
    &=\sum_j \sum_m \ematrix{\phi_j}{\sqrt{b_{l_j}}^* U^{\delta_{k'}}\sqrt{b_{l_m}}}{\phi_m}\nonumber\\
  &=\sum_{j=1}^{M}
e^{+i l_j\delta_{k'}\theta}b_{l_j}\nonumber\\
&=
\sum_{j'=0}^{p-1}
e^{+ij'\delta_{k'}\theta}b_{j'}.
\end{align}
where the last equation holds because $\exists j$ such that $l_j=j'$ precisely when $b_{j'}\neq 0.$

Meanwhile, Algorithm \ref{alg:construction} always returns a solution of $Ab= e_0$: by Lemma \ref{lemma:feasible}, a feasible solution always exists. Furthermore, there exists some invertible matrix $Q$ such that $QA=D$, and $Q\mathbf{e}_0=\mathbf{e}_0$. This implies  $A\mathbf{b}=\mathbf{e}_0$ if and only if $D\mathbf{b}=\mathbf{e}_0$. Consider  $2k'$-th row (where $k' \in \mathbb{Z}^+ $) of the matrix equation $D\mathbf{b}=\mathbf{e}_0$,
\begin{equation}
\label{eq:Drow}
    \sum_j  e^{+\ii j\delta_{k'}\theta}b_j = 0.
\end{equation}
This matches the left side of \eqref{eq:orthogonalityrequirements}. So if $A\mathbf{b}=\mathbf{e}_0$, then $\braket{s_0}{s_{\delta_{k'}}}=0$ for all $\delta_{k'}\in \Delta$.

Now by Lemma \ref{lemma:delta} if $\braket{s_0}{s_{\delta_{k'}}}=0$ for all $\delta_{k'}\in \Delta$ then the subspaces $S_j=\mathrm{Span}_{k \in \Xi_j}\{\ket{s_k}\}$ are mutually orthogonal to each other. Thus, we may find an orthonormal basis to each one, use it to construct a projection operator $P_j$ onto each one, and finally use them (and a rejection operator $P_0 = I-\sum_{j=1}^K P_j$) to build $\mathcal{M}$.

Finally, consider the result of the measurement $
\mathcal{M}$ on the state $U^L \ket{s_0}$ for $(L \mod p) = \xi \in \Xi_{j}$. As the spaces $S_j$ are orthogonal for distinct $j$, it follows that the result of measuring $\mathcal{M}$ on $U^L \ket {s_0}=\ket {s_\xi}$ is simply $j$. Thus, the agent $\mathcal{A}=(\mathcal{S},s_{init} = s_0,(U,\mathcal{M}))$ produced by Algorithm \ref{alg:construction} solves $\taskMCC$, completing the proof.
\end{proof}

In order to measure the memory of our construction,
we next introduce some definitions from linear programming \cite{forst2010}. Let $(FP)$ be the combination of the matrix equation $A\mathbf{x}=\mathbf{b}$ and a non-negativity constraint $\mathbf{x} \ge 0$; that is, $(FP)$ is a \textit{feasibility problem}, the problem of satisfying the constraints of some linear program. (We will not be concerned with the objective function.) We will require that $A$ has at most as many rows as columns, and that it is full rank.
Given $(FP)$, and a particular solution $\mathbf{x}$, we say that the \textit{support} $J_{\mathbf{x}}=\lbrace j | x_j\neq 0 \rbrace$ is the set of indices on which $x_j$ is nonzero. Let $J_{\mathbf{x}}(k)$ be the $k$-th element of $J_{\mathbf{x}}$. We then say that $\mathbf{x}$ is a \textit{basic feasible solution} to $(FP)$, a \textit{BFS}, if the submatrix of $A$ corresponding to $J_{\mathbf{x}}$ consists of linearly independent columns - that is, the submatrix defined by
\begin{equation}
    [A']_{ik}=[A]_{i J_{\mathbf{x}}(k)}
\end{equation}
consists of linearly independent columns. 

Next, if \textit{any} solution $\mathbf{x}$ to $(FP)$ exists, then a basic feasible solution $\mathbf{x'}$ also exists (proof taken from \cite{forst2010}).
\begin{lemma}
\label{lemma:bfs}
Suppose that there exists a solution $\mathbf{x}$ to $(FP)$  where $A$ is a $w \times p$ dimensional matrix of full rank with real entries, with $w \le p$. Then there exists a basic feasible solution $\mathbf{u}$ to $(FP);$ in particular, this basic feasible solution will have at most $w$ nonzero entries $u_j$.
\end{lemma}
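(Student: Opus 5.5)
I will use the standard support-reduction argument from linear programming. Starting from the given feasible solution $\mathbf{x}\ge 0$ with $A\mathbf{x}=\mathbf{b}$, I will repeatedly shrink its support until the corresponding columns of $A$ are linearly independent. Among all feasible solutions of $(FP)$, I choose $\mathbf{u}$ whose support $J_{\mathbf{u}}$ has minimal cardinality; this is well defined because the feasible set is nonempty and support sizes are nonnegative integers. I then claim $\mathbf{u}$ is a basic feasible solution. The bound on the number of nonzero entries follows immediately: linearly independent columns live in $\mathbb{R}^w$, so there are at most $w$ of them, giving $|J_{\mathbf{u}}|\le w$.

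The key step is to prove the claim by contradiction. Suppose the columns $\{A_{\cdot j}\}_{j\in J_{\mathbf{u}}}$ are linearly dependent. Then there is a nonzero $\mathbf{d}\in\mathbb{R}^p$, supported inside $J_{\mathbf{u}}$, with $A\mathbf{d}=0$. Replacing $\mathbf{d}$ by $-\mathbf{d}$ if needed, I may assume $\mathbf{d}$ has at least one strictly positive entry. Set
\begin{equation*}
t^\ast=\min\left\{ \frac{u_j}{d_j} \,:\, j\in J_{\mathbf{u}},\ d_j>0\right\}>0,
\end{equation*}
which is positive because $u_j>0$ on $J_{\mathbf{u}}$. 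Define $\mathbf{u}'=\mathbf{u}-t^\ast\mathbf{d}$. I then check three things. First, $A\mathbf{u}'=A\mathbf{u}=\mathbf{b}$. Second, $\mathbf{u}'\ge 0$: entries with $d_j\le 0$ only increase or stay fixed, and entries with $d_j>0$ stay nonnegative by the choice of $t^\ast$. Third, $\mathbf{u}'$ vanishes outside $J_{\mathbf{u}}$ and also at the index attaining the minimum. Hence $\mathbf{u}'$ is feasible with strictly smaller support, contradicting minimality.

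One edge case needs care: the support may be empty. This happens only if $\mathbf{b}=0$, and then the empty column set is trivially independent, so $\mathbf{u}=0$ qualifies. In our application $\mathbf{b}=\mathbf{e}_0\neq 0$, so this case does not arise.

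I do not expect any real obstacle. The only delicate point is choosing the sign of $\mathbf{d}$ and the step size $t^\ast$ so that nonnegativity is preserved while at least one coordinate is driven to zero. The full-rank and $w\le p$ hypotheses are not needed for existence; they only keep the statement consistent with the standard LP setting. For the application in Algorithm~\ref{alg:construction}, $w=2|\Delta|+1$, which gives the claimed memory bound $M\le 2|\Delta|+1$ when combined with Lemma~\ref{lemma:feasible}.
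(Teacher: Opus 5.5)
Your proof is correct and is essentially the paper's argument. Both use the same minimum-ratio step along a nonzero null vector $\mathbf{d}$ supported on the current support, which preserves $A\mathbf{u}=\mathbf{b}$ and nonnegativity while zeroing at least one coordinate, and both then note that linearly independent columns in $\mathbb{R}^w$ number at most $w$. The differences are cosmetic: you argue by contradiction from a minimal-support feasible solution (with the sign of $\mathbf{d}$ flipped), whereas the paper iterates the reduction step until it terminates.
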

\begin{proof}
If $\mathbf{x}$ is a basic feasible solution, we are done; so suppose otherwise. In particular, the submatrix $A'$ induced by the support $J_{\mathbf{x}}$ of $\mathbf{x}$ has linearly dependent columns; thus, the matrix equation $A'\mathbf{d'}=0$ has a nonzero solution $\mathbf{d'};$ extending back to the full space, we find that $A\mathbf{d}=0$ for some nonzero $\mathbf{d}$ such that $d_k=0$ if $x_k=0.$ We may assume without loss of generality that $d_k < 0$ for some $k$ (since otherwise, we can simply take the negative of $\mathbf{d},$ which also satisfies $A\mathbf{d}=0$).

Now, define the \textit{minimum ratio} as
\begin{equation}
    \alpha = \min_k \left\lbrace \left(\frac{x_k}{-d_k}\right): d_k<0 \right\rbrace.
\end{equation}
Let $\mathbf{x'}=\mathbf{x}+\alpha \mathbf{d}$. Then $\mathbf{x'} \ge 0$, $A\mathbf{x'}=\mathbf{b}$, and $J_{\mathbf{x'}} \subseteq J_{\mathbf{x}}$,
as $x_k=0 $ implies $ d_k=0$ and thus $x'_k=0$. In fact, $J_{\mathbf{x'}}$ is a \textit{strict} subset of $J_{\mathbf{x}}$,
since by construction $\exists k \in J_{\mathbf{x}}$ such that  $\mathbf{x'}_k=0$ and $ \mathbf{x}_k \neq 0$. Thus, this procedure has given us a new $\mathbf{x'}$ which has fewer nonzero components.

If $\mathbf{x'}$ is a basic feasible solution, we are done; else, we continue iterating this process, calculating a new $\mathbf{d},\alpha,$ and subtracting. Every iteration, $J_\mathbf{x'}$ becomes smaller, and thus the corresponding submatrix has fewer columns. Since any matrix consisting of a single non-zero column is linearly independent, this process must eventually terminate. Therefore, this process will always find a basic feasible solution.

Finally, if $A$ has $w$ rows, then a basic feasible solution $\mathbf{x}$ to $(FP)$ has $|J_{\mathbf{x}}|\le w$; otherwise, the submatrix $A'$ induced by $J_{\mathbf{x}}$ cannot have linearly independent columns. Thus, the basic feasible solution that our process finds has at most $w$ nonzero entries, completing the proof.
\end{proof}

Complementing this lemma, we have the following:
\begin{lemma}
\label{lemma:Dsquare}
Consider a BFS $\mathbf{b}$ to Eq.\eqref{eq:abequation} constructed by Algorithm \ref{alg:construction}. Then the number of nonzero entries $M = 2|\Delta|+1$.
\end{lemma}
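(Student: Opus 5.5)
The plan is to prove matching upper and lower bounds on the support size $M = |J_{\mathbf b}|$: Lemma~\ref{lemma:bfs} gives $M \le 2|\Delta|+1$, and an uncertainty principle for $\mathbb{Z}_p$ with $p$ prime gives $M \ge 2|\Delta|+1$.

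\textbf{Upper bound.} To apply Lemma~\ref{lemma:bfs} I must check that $A$ is full rank with at most as many rows as columns.
\begin{itemize}
\item Row count: $\Delta \subseteq \{1,\ldots,(p-1)/2\}$, so $A$ has $2|\Delta|+1 \le p$ rows.
\item Rank: by the same row operations $Q$ used in the proof of Theorem~\ref{thm:quantumexists}, the row space of $A$ over $\mathbb{C}$ is spanned by the characters $j\mapsto e^{\ii j t\theta}$ for $t\in\{0\}\cup\{\pm\delta_{k'}\}$.
\item These frequencies are $2|\Delta|+1$ distinct elements of $\mathbb{Z}_p$, because $\delta\in\Delta$ and $p-\delta\notin\Delta$.
\item Distinct characters of $\mathbb{Z}_p$ are orthogonal in $\mathbb{C}^p$, hence linearly independent, so $A$ has full row rank.
\end{itemize}
Lemma~\ref{lemma:feasible} supplies a feasible point. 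Lemma~\ref{lemma:bfs} then shows that every BFS has at most $2|\Delta|+1$ nonzero entries. (Any BFS has linearly independent support columns, so the bound applies to whatever BFS the algorithm returns.)

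\textbf{Lower bound.} Regard $\mathbf b$ as a function on $\mathbb{Z}_p$ and define its Fourier transform $\hat b(t)=\sum_{j} b_j e^{\ii jt\theta}$.
\begin{itemize}
\item The constraint $A\mathbf b=\mathbf e_0$ says exactly that $\hat b(0)=1$ and $\hat b(\delta)=0$ for all $\delta\in\Delta$. The $\cos$ and $\sin$ rows give the real and imaginary parts.
\item Since $\mathbf b$ is real, $\hat b(-\delta)=\overline{\hat b(\delta)}=0$ as well.
\item So $\hat b$ vanishes on the $2|\Delta|$ points of $\mathcal{N}_{\setSetMods}=\pm\Delta$, giving $|\operatorname{supp}\hat b|\le p-2|\Delta|$.
\end{itemize}
I would then invoke the Chebotarev--Tao uncertainty principle for cyclic groups of prime order: for nonzero $f:\mathbb{Z}_p\to\mathbb{C}$, $|\operatorname{supp} f|+|\operatorname{supp}\hat f|\ge p+1$. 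This gives $M=|\operatorname{supp}\mathbf b|\ge p+1-(p-2|\Delta|)=2|\Delta|+1$. Note that the lower bound holds for \emph{every} feasible $\mathbf b$, not only basic ones; nonnegativity is not even needed here.

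\textbf{Main obstacle.} The delicate step is justifying the uncertainty principle, since the paper has not stated it. The route I would take is Chebotarev's theorem that every square minor of the $p\times p$ DFT matrix is nonzero when $p$ is prime. Suppose $|\operatorname{supp}\mathbf b|=m\le 2|\Delta|$. Restrict the DFT to the $m$ support columns and to $m$ of the rows indexed by frequencies in $\pm\Delta$. The resulting square minor is invertible, yet annihilates the nonzero vector $\mathbf b|_{J_{\mathbf b}}$, which is a contradiction. This is exactly where primality of $p$ enters, consistent with the classical results.
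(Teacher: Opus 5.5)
Your proof is correct and takes essentially the same route as the paper. The upper bound comes from the linear independence of the BFS support columns (Lemma~\ref{lemma:bfs}); the lower bound restricts the $2|\Delta|$ homogeneous constraints at frequencies $\pm\Delta$ to the support of $\mathbf b$ and uses Chebotarev's theorem to force $\mathbf b|_{J_{\mathbf b}}=0$, contradicting $\sum_j b_j=1$. Framing this as Tao's uncertainty principle $|\operatorname{supp} f|+|\operatorname{supp}\hat f|\ge p+1$ only repackages that same minor argument, and checking full rank via orthogonality of distinct characters is a slightly more elementary substitute for the paper's appeal to Chebotarev at that step.
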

\begin{proof}
By definition, the BFS must have $M \le 2|\Delta| + 1$; thus we must now prove $M \ge 2|\Delta| + 1$.
Let us proceed by contradiction.
Suppose that for $\mathbf{b}$  to  Eq. \eqref{eq:abequation}, $M < 2|\Delta| + 1$. Let $\mathbf{b}'$ equal $\mathbf{b}$ with all the zero rows deleted; thus, every remaining entry is nonzero. (Of course, because of row 0 of matrix $A$, $\mathbf{b}'$ must have at least one nonzero row.) Then delete every column $j$ of $A$ such that $b_j=0$, and row 0 of matrix $A$. The resulting matrix $A'$ still satisfies $A' \mathbf{b'} = 0$.

However, note that the remaining rows of $A'$ are in sine-cosine pairs; by applying elementary row operations to $A'$, one can obtain a (not necessarily square) sub-matrix $D'$ of  the DFT matrix; by Chebotarev's theorem, $D'$ is full rank (e.g. any $2|\Delta| \times 2|\Delta|$ submatrix is invertible) \cite{tao2003}; so $A'$ is full rank too.
(In particular, Chebotarev's theorem states: Let $p$ be a prime and $1 \le k \le p$, and $r_1, \ldots, r_k$ be distinct elements of $\mathbb{Z}_p$. Similarly let $\mu_1,\ldots \mu_k$ also be distinct elements of $\mathbb{Z}_p$. Consider a $k\times k$ sub-matrix of the DFT matrix  of order $p$, with entries
$[e^{2 \pi i r_l\mu_j /p}]_{jl}$ where $1\le j,l \le k$. This matrix has non-zero determinant  (see \cite{tao2003} for full statement and proof).)

Because $A'$ is full rank (i.e. rank $M$), the linear system $A' \mathbf{b'} = 0$ contains $M$ linearly independent constraints, and $M \le 2|\Delta|$ variables. The trivial solution obviously solves this equation, but the system must therefore not have any other solutions.
Therefore, there is only one possible solution to this new matrix equation: $\mathbf{b'}=0$, a contradiction.
Thus, $M = 2|\Delta| + 1$.
\end{proof}

Now, consider a solution to equation \eqref{eq:abequation} with $M=2|\Delta|+1$ nonzero entries relative to the eigenbasis. By construction, the set of states $\setpStates=\{\ket{s_j}=U^j \ket{s_0}\}$ is contained in a Hilbert space of dimension $M$, so the memory cost $\dimHacc\le M$. We now prove that this upper bound is also tight.

\begin{lemma}
\label{lemma:rankisdim}
Consider the set of states $S=\{\ket{s_j}=U^j \ket{s_0}\}$ constructed by Algorithm \ref{alg:construction}. Then the memory cost $\dimHacc = M.$
\end{lemma}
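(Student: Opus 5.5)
We need to show that the span of $\{\ket{s_j}\}_{j=0}^{p-1}$ is all of $\mathcal{H}$, which has dimension $M$. The upper bound $\dimHacc \le M$ is immediate, since every $\ket{s_j}=U^j\ket{s_0}$ lies in $\mathcal{H}=\mathrm{span}\{\ket{\phi_1},\dots,\ket{\phi_M}\}$. It therefore remains to prove that the $p$ vectors $\ket{s_j}$ span an $M$-dimensional space, i.e.\ that the $M\times p$ coefficient matrix $C$ with entries $C_{ij}=\braket{\phi_i}{s_j}$ has rank $M$.

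In the eigenbasis of $U$ these coefficients are explicit:
\begin{equation*}
\braket{\phi_i}{s_j}=\sqrt{b_{l_i}}\,e^{\ii l_i j\theta}.
\end{equation*}
Hence $C=B\,F$. Here $B=\mathrm{diag}(\sqrt{b_{l_1}},\dots,\sqrt{b_{l_M}})$ is invertible, because by construction every $b_{l_i}>0$ (the $l_i$ index exactly the nonzero entries of $\mathbf{b}$). The matrix $F$ is the $M\times p$ submatrix of the order-$p$ DFT matrix with rows indexed by the distinct frequencies $l_1,\dots,l_M\in\mathbb{Z}_p$ and columns indexed by $j=0,\dots,p-1$. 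Since $B$ is invertible, $\mathrm{rank}\,C=\mathrm{rank}\,F$.

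To see that $F$ has rank $M$, note that $M\le p$, because $\mathbf{b}\in\mathbb{R}^p$ has at most $p$ nonzero entries. We may then take the $M$ columns indexed by $j=0,1,\dots,M-1$, which gives a square Vandermonde matrix in the distinct nodes $e^{\ii l_i\theta}$. Its determinant is nonzero, since the $l_i$ are distinct modulo $p$ and so the nodes are distinct $p$-th roots of unity. Alternatively, one can invoke Chebotarev's theorem exactly as in Lemma~\ref{lemma:Dsquare}, which makes every $M\times M$ minor nonzero; but the Vandermonde observation already suffices and does not need primality.

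Consequently the vectors $\ket{s_0},\dots,\ket{s_{M-1}}$ are linearly independent, so $\dim\mathrm{span}\{\ket{s_j}\}\ge M$. Combined with the upper bound, $\dimHacc=M$. Finally, there is the issue that Definition~\ref{def:memory} counts the smallest system storing all accessible states, and for a pure-state agent this is the dimension of the span of the reachable states. Here the reachable memory states before measurement are exactly the $\ket{s_j}$, together with the reset to $\ket{s_0}$. The only mildly delicate step is this identification of the memory cost with the dimension of the span, and it follows from the support identity Eq.~\eqref{eq:support} and the discussion following Definition~\ref{def:memory}.
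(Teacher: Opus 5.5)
Your proof is correct, but it gets the lower bound by a different and more elementary route than the paper.

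The paper also works in the eigenbasis of $U$ and factors out the $\sqrt{b_{l_i}}$ weights. It then picks the $M$ specific states $\ket{s_0}$, $\ket{s_{\delta}}$ and $\ket{s_{p-\delta}}$ for $\delta\in\Delta$. After rescaling columns, their coefficient matrix is a square submatrix of the order-$p$ DFT matrix, with row frequencies $\{0\}\cup\Delta\cup(-\Delta)$ and column indices $l_1,\dots,l_M$. Chebotarev's theorem then makes it invertible. That argument needs $p$ prime. It also needs Lemma~\ref{lemma:Dsquare} ($M=2|\Delta|+1$) to make the matrix square. In exchange, it shows that exactly the states appearing in the orthogonality constraints already form a basis of the accessible space.

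You instead take consecutive powers $\ket{s_0},\dots,\ket{s_{M-1}}$, so the matrix is a Vandermonde matrix in the distinct nodes $e^{\ii l_i\theta}$. This is the standard cyclic-vector fact: a vector with nonzero weight on each of $M$ eigenvectors with distinct eigenvalues generates their whole $M$-dimensional span. Your route uses neither primality nor Chebotarev, and it does not depend on Lemma~\ref{lemma:Dsquare}. It shows that the accessible dimension equals the support size of $\mathbf{b}$ for any nonnegative solution, which cleanly decouples the two lemmas.

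Your closing remark identifying memory cost with the span of the reachable pure states is also fine. The projective measurement and the reset keep every state inside $\mathcal{H}$.
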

\begin{proof}
First, by construction, the span of the set of states $S=\{\ket{s_j}=U^j \ket{s_0}\} \subseteq \mathcal{H}$ constructed by Algorithm \ref{alg:construction} clearly has dimension $\dimHacc\le M = 2|\Delta|+1$, where the latter equality follows from Lemma \ref{lemma:Dsquare}. We now prove $\dimHacc = M$.

Now, consider the matrix $A$ in equation \eqref{eq:Amatrix}. By Lemma \ref{lemma:Dsquare}, $M = 2|\Delta|+1.$ It is therefore easy to see that $A$ can be transformed by elementary row operations into an $M \times p $ sub-matrix of the Discrete Fourier Transform matrix of order $p$; call this matrix $D$. Roughly, $D$ is obtained by `pairing up' rows of $A$ with matching $k'$. More precisely,
\begin{equation}
\label{eq:DFT}
[D]_{kj}=\begin{cases}
        e^{+\ii j\delta_{k'}\theta} & k=2k' \textrm{ for } k'\in\mathbb{Z},\\
        e^{-\ii j\delta_{k'}\theta} & k=2k'-1 \textrm{ for } k'\in\mathbb{Z}
        \end{cases}
\end{equation}
and $\delta_{k'}$ is the $k'$-th element of $\Delta$, where by  convention $\delta_0 = 0.$ From each pair of rows of $D$  (indexed by $k =2k'$ and $k = 2k' -1$ for the same $k' \in \mathbb{Z}$), we get two equations: one for the real part of the constraint, and one for the imaginary.  These real and imaginary `part' equations are exactly the equations expressed by the corresponding rows of $A$.  As we require the solution to $A\mathbf{b} = \mathbf{e}_0$ to be elementwise real (and positive), writing the equations in terms of $A$ vs $D$ is implicitly like separating the real and imaginary parts of the equation vs grouping them together.

Now, let us use the $b_j$ solving Eq. \eqref{eq:abequation} to define the following $(p\times M)$ matrix:
\begin{equation}
    [B]_{jj'}=\begin{cases}
        \sqrt{b_{l_{j'}}} & \textrm{if } j=l_{j'}\\
        0 & \textrm{otherwise}
    \end{cases}
\end{equation}
where $l_j$ is the $j$-th nonzero element of the vector $\mathbf{b}$.

After multiplying $D$ by $B$, we get a new $M\times M$ matrix $DB$ whose rows are $\ket{s_{n}}$\ in the eigenbasis:
\begin{equation}
    [D B]_{kj'}=\begin{cases}
         \sqrt{b_{l_{j'}}} e^{+\ii l_{j'}\delta_{k'}\theta} & k=2k' \textrm{ for } k'\in\mathbb{Z},\\
         \sqrt{b_{l_{j'}}} e^{-\ii l_{j'}\delta_{k'}\theta} & k=2k'-1 \textrm{ for } k'\in\mathbb{Z};
        \end{cases}
\end{equation}
that is, the $k$-th row of this matrix is precisely $\braket{\phi_{j'}}{s_{\delta_{k'}}}$ if $k=2k'$ and $\braket{\phi_{j'}}{s_{p-\delta_{k'}}}$ if $k=2k'-1$, where $\{\ket{\phi_j}\}$ is the orthonormal basis of $\mathcal{H}$ found in Algorithm \ref{alg:construction}.

Now, suppose we rescaled each column of $DB$, dividing by $\sqrt{b_{l_{j'}}}$. We get an $M\times M$-dimensional matrix $D'$, which is a square sub-matrix of the DFT matrix of order $p$.
Chebotarev's theorem on roots of unity states that all square sub-matrices of the DFT matrix are invertible \cite{tao2003}; thus, so is $D'$.

Thus, we have shown $D'$ is invertible, and in particular, its rows are linearly independent.
Rescaling the columns of a matrix does not affect its rank. However, by applying this rescaling, we can transform $D'$ into $DB$. Thus, the vectors $\{U^n \ket{s_0} |\,n = 0 \textrm{ or } n \in \Delta \textrm{ or } p-n \in \Delta\}$, the rows of $DB$, are linearly independent. Therefore, our agent has memory cost $\dimHacc \ge M.$

Since we have previously shown $\dimHacc \le M$, it follows that $\dimHacc = M$, completing the proof.
\end{proof}

We are now ready to prove the main theorems of this section:

\begin{theorem}
\label{thm:quantumdim}
The quantum agent constructed by Algorithm \ref{alg:construction} has dimension $M =2|\Delta| +1$.
\end{theorem}

\begin{proof}
We observe that Eq. \eqref{eq:abequation}  represents a set of linear equality constraints, and the requirement $\mathbf{b}\ge 0$ is an inequality constraint. Together these define a linear program $P$. Furthermore, all elements of $A$ are real.

First, by Lemma \ref{lemma:feasible}, a feasible solution always exists. We must then show that  $A$ in Eq. \eqref{eq:Amatrix} is full rank, and thus Lemma \ref{lemma:bfs} applies: that is, there is always a solution to Eq. \eqref{eq:abequation} with at most $M = 2|\Delta| + 1$ non-zero entries.

By construction, $A$ has at least as many columns as rows. It is also easy to show that it is related by row operations to a submatrix of the Discrete Fourier Transform matrix of order $p$.  Specifically, there exists some invertible row-operation matrix $Q$ such that $QA=D$, where $D$ is as defined in Equation \eqref{eq:DFT}. Since $D$ is a submatrix of the DFT matrix of order $p$ prime, Chebotarev's theorem on roots of unity implies $D$ is full rank \cite{tao2003}. Therefore, $A$ is also full rank. Thus, by Lemma \ref{lemma:bfs}, a \textit{basic feasible} solution to $P$ exists, with $M \le 2|\Delta| +1$ nonzero entries; by Lemma \ref{lemma:Dsquare}, this bound is tight, so $M = 2|\Delta| + 1.$

By Lemmas \ref{lemma:rankisdim}, the state space generated has dimension $\dimHacc = M;$ this implies the memory cost of the construction recovered by Algorithm \ref{alg:construction} is $M = 2|\Delta| +1$.
\end{proof}

We conclude by proving the simple upper bound mentioned at the beginning of this section.
\begin{lemma}
\label{lemma:quantumsimple}
    There exists a quantum agent $\agent$ solving $\taskMSC$ with memory cost at most $(\sum_j|\setMods_j|)^2$.
\end{lemma}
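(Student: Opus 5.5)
We prove the bound by using the agent of Algorithm~\ref{alg:construction} and counting the size of its difference set. By Theorem~\ref{thm:quantumexists} this agent solves \taskMSC, and by Theorem~\ref{thm:quantumdim} its memory cost is exactly $2|\Delta|+1$. It therefore suffices to show that $2|\Delta|+1 \le (\sum_j|\setMods_j|)^2$.

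Write $n=\sum_j|\setMods_j|$ for the total number of marbles. Every element of $\Delta$ has the form $\xi_1-\xi_2 \bmod p$ with $\xi_1\in\setMods_i$, $\xi_2\in\setMods_j$ and $i\neq j$. We count ordered pairs $(\xi_1,\xi_2)$ of elements lying in distinct classes. Since the $\setMods_j$ are disjoint, these are ordered pairs of distinct elements of $\bigcup_j\setMods_j$, so there are at most $n(n-1)$ of them.

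The map from such ordered pairs to differences is onto $\mathcal{N}_{\setSetMods}$. Its image has size $|\mathcal{N}_{\setSetMods}|=2|\Delta|$, by the remark following the definition of $\Delta$: the differences $\delta$ and $p-\delta$ both occur, and exactly one of them lies in $\{1,\dots,(p-1)/2\}$. The value $0$ never appears, since $p$ is odd and $\xi_1\ne\xi_2$. Hence $2|\Delta|\le n(n-1)$.

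This gives $2|\Delta|+1\le n^2-n+1$. The nontrivial case has at least two nonempty classes, so $n\ge 2$, and then $n^2-n+1\le n^2$. The trivial single-class case needs a memory of dimension $1\le n^2$. Either way the memory cost is at most $(\sum_j|\setMods_j|)^2$.

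The only step that needs care is the identity $|\mathcal{N}_{\setSetMods}|=2|\Delta|$, in particular that $\delta$ and $p-\delta$ are never double counted within $\Delta$. This rests on $p$ being odd, so that $\delta\neq p-\delta$. Everything else is bookkeeping.
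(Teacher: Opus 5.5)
Your proof is correct and is essentially the paper's argument. The paper bounds $|\Delta|$ directly by the $n(n-1)/2$ unordered pairs of distinct marbles, while you bound $|\mathcal{N}_{\setSetMods}|=2|\Delta|$ by the $n(n-1)$ ordered pairs; both then conclude $2|\Delta|+1\le n(n-1)+1\le n^2$ via Theorem~\ref{thm:quantumdim}. The remaining quibbles are cosmetic: $0\notin\mathcal{N}_{\setSetMods}$ follows from disjointness alone (oddness of $p$ is needed only for $\delta\neq p-\delta$), and $n^2-n+1\le n^2$ already holds for $n\ge 1$, so your case split is unnecessary.
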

\begin{proof}
    The size of the difference set $\Delta$ can be upper bounded by the number of pairs of distinct elements in $\bigcup_j\setMods_j;$ that is, $|\Delta| \le \left(\sum_j|\setMods_j|\right)\left(\sum_j|\setMods_j|-1\right)/2$. By Thm.\ref{thm:quantumdim}, there exists a quantum agent with memory cost $M \le 2|\Delta|+1;$ so,
    \begin{align}
        M &\le 2|\Delta| + 1\\
        &\le \left(\sum_j|\setMods_j|\right)\left(\sum_j|\setMods_j|-1\right)+1\\
        &\le \left(\sum_j|\setMods_j|\right)\left(\sum_j|\setMods_j|-1\right)+\left(\sum_j|\setMods_j|\right)\\
        &= \left(\sum_j|\setMods_j|\right)^2,
    \end{align}
    proving our simple upper bound.
\end{proof}

\section{Optimality of Unitary Agents}
\label{supp:sec:unitary}

We prove the optimality of the quantum agent constructed in the previous section by proving any quantum construction for solving \taskMSC \, needs memory dimension of at least $M_Q = 2 |\Delta| +1$.

\begin{lemma}
\label{lemma:pplusone}
Suppose an agent $\mathcal{A}=(\mathcal{S},s_{init} =s_0,\mathcal{T})$ solves \taskMCC, where $\mathcal{T}=(\fnTransOne,\mathcal{M})$. Then for any positive integer $n$, $\mathcal{A}'=(\mathcal{S}',s'_{init} = s_0,\mathcal{T'})$, with $\mathcal{T'}=(\fnTransOne^{np+1},\mathcal{M})$, also solves $\taskMCC$, with memory cost no worse than $\mathcal{A}$; in particular, $\mathcal{S}' \subseteq \mathcal{S}.$
\end{lemma}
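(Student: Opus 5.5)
The plan is to view $\mathcal{A}'$ as the agent obtained from $\mathcal{A}$ by reinterpreting each input $1$ as a block of $np+1$ ones, and then reduce to the fact that $\mathcal{A}$ is correct on every word. Fix a word $\strInput$ for \taskMCC\ with $m$ ones (zeros act trivially, since $\fnTransition^{(0|0)}$ is the identity). After reading $\strInput$, agent $\mathcal{A}'$ is in the state $(\fnTransOne^{np+1})^m(s_0)=\fnTransOne^{m(np+1)}(s_0)$. This is exactly the state $\mathcal{A}$ reaches on the word $\strInput'$ consisting of $m(np+1)$ ones. Its sum satisfies $m(np+1)\equiv m \pmod p$, so $\strInput'$ lies in the same class $\setClass_j$ as $\strInput$, or in no class if $\strInput$ lies in none. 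Both agents apply the same measurement $\mathcal{M}$ on $\Omega$. Since $\mathcal{A}$ has zero loss on $\strInput'$, $\mathcal{A}'$ has zero loss on $\strInput$. After $\Omega$ both agents reset to $s_0$ (Lemma~\ref{lemma:reset}), so zero loss on every word of the stream follows.

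For the memory statement, I would note that every state reachable by $\mathcal{A}'$ has the form $\fnTransOne^{k(np+1)}(s_0)$ for some $k\ge 0$, and is therefore also reachable by $\mathcal{A}$. Hence $\mathcal{S}'\subseteq\mathcal{S}$. For the memory cost, note that the accessible Hilbert space of $\mathcal{A}'$, namely the span of the supports of its reachable states, is contained in that of $\mathcal{A}$. By Definition~\ref{def:memory} and the support identity Eq.~\eqref{eq:support}, the cost of $\mathcal{A}'$ is therefore at most that of $\mathcal{A}$. If the agent uses stochastic intermediate outputs, I would first invoke Lemma~\ref{lemma:zero}, so that $\fnTransOne$ is a single CPTP map and the compositions above are well defined.

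The main subtlety is only bookkeeping. One must make sure that $\fnTransOne^{np+1}$ is again a legitimate CPTP update, which it is as a composition of CPTP maps, and that the reset and measurement structure is unchanged. No step looks genuinely hard; the key point is simply that $np+1\equiv 1 \pmod p$.
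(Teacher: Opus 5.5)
Your proposal is correct and follows essentially the same argument as the paper. After $m$ ones, $\mathcal{A}'$ is in the state $\fnTransOne^{m(np+1)}(s_0)$, which is the state $\mathcal{A}$ reaches after $m(np+1)\equiv m \pmod p$ ones, so the shared measurement $\mathcal{M}$ classifies correctly; and the accessible space of $\mathcal{A}'$, spanned by the supports of the states $\fnTransOne^{k(np+1)}(s_0)$, lies inside that of $\mathcal{A}$ (containment of the reachable sets already gives this, so the support identity is not needed).
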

\begin{proof}
Consider an arbitrary input string $\strInput^{(1)}$ in $\setData$ that is equivalent to a sequence of $\kappa$ ones, followed by $\Omega$. After processing this word, the agent $\mathcal{A}$ is in the state $s^{(1)}=\fnTransOne^\kappa (s_0),$ after which we apply the measurement $\mathcal{M}$, which results in label $j=\fnClass{\strInput^{(1)}}$. Now, suppose we instead input the word $\strInput^{(2)}$ consisting of $(np+1)\kappa$ 1's: the resulting state would then be $s^{(2)}=\fnTransOne^{(np+1)\kappa} (s_0);$ and since $(np+1)\kappa \equiv \kappa \mod p$, and $\mathcal{A}$ solves $\taskMCC$, the measurement $\mathcal{M}$ on $s^{(2)}$ would also result in label $j.$

Now suppose we were to instead use $\mathcal{A}'=(\mathcal{S}',s_0,\mathcal{T}')$, where $\mathcal{T}'=(\fnTransOne^{np+1},\mathcal{M}).$ On input $\strInput$, the agent's pre-measurement state would again be $s^{(2)}=G^{(np+1)\kappa}(s_0).$ Since $\mathcal{A}'$ still uses the same measurement $\mathcal{M},$ by the argument in the previous paragraph, this measurement $\mathcal{M}$ still yields label $j$. Now because $\strInput^{(1)}\in \setData$ was arbitrary, we have demonstrated $\mathcal{A}'$ also solves $\taskMCC$.

As for the memory cost, the accessible space $\mathcal{H}(\agent')$
of $\agent'$ is
\begin{equation}
    \mathcal{H}(\agent') = \mathrm{span} \bigcup_\kappa \mathrm{supp}(\fnTransOne^{\kappa(np+1)}(s_0)),
\end{equation}
which is clearly a subset of the accessible space of $\agent$,
\begin{equation}
    \mathcal{H}(\agent) = \mathrm{span} \bigcup_\kappa \mathrm{supp}(\fnTransOne^{\kappa}(s_0)).
\end{equation}
Therefore, the memory cost of $\mathcal{A}'$ is no worse than that of $\mathcal{A}$, completing the proof.
\end{proof}


Now, as $\fnTransOne$ is a map between density matrices, we may also consider its properties as a \textit{linear} map. Let us use the usual definition of eigenvectors:

\begin{definition}
A matrix $\rho$ is an \textit{eigenvector} of the map $\fnTransOne$ with eigenvalue $\lambda$ if and only if
\begin{equation}
    \fnTransOne(\rho) = \lambda \rho.
\end{equation}
\end{definition}
Note that we do \textit{not} demand that the eigenvector is a \textit{density} matrix: in particular, it need not be positive semidefinite nor unit trace.

Now, consider the matrix $\hat{\fnTransOne}: \mathbb{C}^d \rightarrow \mathbb{C}^d$. While it is not necessarily Hermitian (i.e. the spectral theorem does not apply), we can use its eigenvectors to find another canonical form. There exists a matrix $Q$ such that $\hat{\fnTransOne}=  Q J Q^{-1}$, where the \textit{Jordan canonical form} $J = J_{\lambda_1}\oplus J_{\lambda_2}\oplus \dots$ is block diagonal, such that each \textit{Jordan block}
\begin{equation}
    J_{\lambda_j}=\begin{bmatrix}
    \lambda_j & 1 & 0 & \dots & 0\\
    0 & \lambda_j & 1 & \dots & 0\\
    0 & 0 & \lambda_j & \dots & 0\\
    \vdots \\
    0 & 0 & 0 & \dots & \lambda_j
    \end{bmatrix}
\end{equation}
is an off-diagonal $d_{\lambda_j} \times d_{\lambda_j}$ square matrix, with the associated eigenvalue $\lambda_j$ on the diagonal, 1's directly above the diagonal, and zeroes everywhere else. (We will not assume the $\lambda_j$ are distinct.) In particular, if $\hat{\fnTransOne}$ is \textit{diagonalizable}, all Jordan blocks will be $1 \times 1$, although we will not assume this is the case.

Additionally, following \cite{wolf2012}, we will split each block into a projection $\Pi_j$ and a nilpotent part $N_j$. Therefore,
\begin{align}
\label{eq:projnil}
    \hat{\fnTransOne}&=\sum_j \left(\lambda_j \Pi_j + N_j\right), \\
    N_j^{d_j}&=0, \,\, \, N_j \Pi_j = \Pi_j N_j = N_j, \\
    \Pi_j \Pi_k &= \delta_{jk} \Pi_j,\,\,\, \tr \Pi_j = d_j, \\
    \sum_j \Pi_j &= \mathbbm{1}.
\end{align}
In effect, $\lambda_j\Pi_j$ is the diagonal part of the $j$-th Jordan block, while $N_j$ is the off-diagonal part.
Thus, $\Pi_j$ represents the \textit{spectral projection} into the generalized eigenspace of $\lambda_j.$

Now we have defined $\fnTransOne$ to be a CPTP map. When $\fnTransOne$ is CPTP, it can be shown that  all the eigenvalues of $\fnTransOne$ have magnitude at most 1 (for proof see \cite{watrous2018} Prop 4.26.).

We next prove constraints on the eigenspace of $\fnTransOne$, Let us define the following:

\begin{definition}
The \textit{peripheral spectrum} of $\fnTransOne: \mathcal{M}_M(\mathbb{C}) \rightarrow \mathcal{M}_M(\mathbb{C})$ is the set of eigenvalues of unit magnitude, i.e. $|\lambda|=1$.
\end{definition}

 For a CPTP map $\fnTransOne$ we will show the peripheral spectrum consists of only $1\times 1 $ Jordan blocks. To establish this we directly adapt Proposition 6.2 from \cite{wolf2012}, and include the proof from the same source below:
\begin{lemma}
\label{lemma:peripheraldiag}
Let $\lambda$ be an eigenvalue of a positive trace-preserving map $\fnTransOne :\mathcal{M}_M(\mathbb{C}) \rightarrow \mathcal{M}_M(\mathbb{C})$, such that $|\lambda|=1$. Then all Jordan blocks corresponding to $\lambda$ are 1 $\times$ 1.
\end{lemma}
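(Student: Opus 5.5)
The plan is to show that the powers of the linear map $\fnTransOne$ are uniformly bounded, and then check that a nontrivial Jordan block at a unimodular eigenvalue would force unbounded growth.

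\textbf{Uniform bound on powers.} Since $\fnTransOne$ is positive and trace-preserving, so is every power $\fnTransOne^n$. We equip $\mathcal{M}_M(\mathbb{C})$ with the trace norm. Any Hermitian $X$ decomposes as $X = X_+ - X_-$ with $X_\pm \ge 0$ and $\|X\|_1 = \tr X_+ + \tr X_-$. Positivity and trace preservation then give
\begin{equation}
\|\fnTransOne^n(X)\|_1 \le \tr\fnTransOne^n(X_+)+\tr\fnTransOne^n(X_-)=\|X\|_1 .
\end{equation}
A general $X$ is handled by splitting it into Hermitian and anti-Hermitian parts, $X = H_1 + \ii H_2$, which gives $\|\fnTransOne^n(X)\|_1 \le 2\|X\|_1$. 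Since all norms on a finite-dimensional space are equivalent, there is a constant $C$ such that $\|\fnTransOne^n\| \le C$ for all $n \in \mathbb{N}$, in any fixed operator norm. This step is the only place where positivity and trace preservation are used. It is also the main point of the argument, because for a positive but not completely positive map we cannot appeal to Kraus operators and must rely on the Jordan decomposition instead.

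\textbf{Growth from a nontrivial block.} Work in the Jordan form $\hat{\fnTransOne} = Q J Q^{-1}$, or equivalently the decomposition in Eq.~\eqref{eq:projnil}. Suppose that some Jordan block for an eigenvalue $\lambda$ with $|\lambda| = 1$ has size $d \ge 2$. Then $N_j \neq 0$, and because $\Pi_j$ commutes with $N_j$, we have $\hat{\fnTransOne}^n \Pi_j = (\lambda \Pi_j + N_j)^n$. Expanding with the binomial theorem gives
\begin{equation}
\hat{\fnTransOne}^n\Pi_j=\sum_{k=0}^{d-1}\binom{n}{k}\lambda^{n-k}N_j^k .
\end{equation}
Choose a vector $v$ with $N_j^{d-1} v \neq 0$ and $N_j^{d} v = 0$, namely the top generalized eigenvector of the block. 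The component of $\hat{\fnTransOne}^n v$ along $N_j^{d-1}v$ is then $\binom{n}{d-1}\lambda^{n-d+1}$. Its modulus is $\binom{n}{d-1}$, since $|\lambda| = 1$, and this grows like $n^{d-1} \to \infty$.

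\textbf{Conclusion.} The coordinate along $N_j^{d-1}v$ in the Jordan basis is a fixed linear functional, so it is bounded in absolute value by a constant times $\|\hat{\fnTransOne}^n\|$. Polynomial growth of that coordinate therefore contradicts the uniform bound from the first step. Hence $d = 1$ for every Jordan block belonging to a peripheral eigenvalue.

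The only detail to verify is the passage from the trace-norm bound on Hermitian inputs to a bound on all inputs, which is exactly the constant-factor argument given in the first step.
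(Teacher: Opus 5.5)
Your proof is correct and follows the same strategy as the paper: positivity and trace preservation make the powers $\fnTransOne^n$ uniformly bounded, and this contradicts the $\binom{n}{d-1}$ growth of any nontrivial Jordan block at a unimodular eigenvalue. The only difference is how you get the bound. You use trace-norm contractivity on Hermitian inputs via $X=X_+-X_-$, whereas the paper bounds $\tr[A\fnTransOne^n(B)]\le M\|A\|_\infty\|B\|_\infty$ for positive $A,B$ using $\tr\fnTransOne^n(\mathbbm{1})=M$; both arguments work.
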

\begin{proof}
Suppose $A, B \in \mathcal{M}_M (\mathbb{C})$, and let $\fnTransOne: \mathcal{M}_M (\mathbb{C}) \rightarrow \mathcal{M}_M (\mathbb{C})$ be a linear map.  Then the expression $\tr [A\fnTransOne(B)]$ can be upper bounded with respect to the norms of $A$ and $B$ and the linear dimension $d$ of these matrices. In particular, every matrix can be expressed as a linear combination of positive matrices, and for \textit{positive} $A, B$ we have $\tr [A\fnTransOne(B)] \le ||A||_\infty ||B||_\infty \tr [\mathbbm{1} \fnTransOne (\mathbbm{1})]=M ||A||_\infty ||B||_\infty$. A  similar result holds for $\fnTransOne^n$ for all $n \in \mathbb{N}$ (we say $\fnTransOne^n$ is uniformly bounded).

However, suppose $\hat{\fnTransOne}$ has a nontrivial (i.e. larger than $1 \times 1$) Jordan block $J_{\lambda}$ corresponding to $\lambda$.
A brief calculation shows that the top row of the $n$-th power of that block is
\begin{equation}\label{eq:nthpowerjordanblock}
    [J_{\lambda}^n]_{1k}=\lambda^{n-k+1} \binom{n}{k-1}, \quad k=1,..., d_{\lambda}
\end{equation}
which grows without bound due to the expression for  the binomial coefficients. However $\fnTransOne^n$ is uniformly bounded; thus, all Jordan blocks $J_{\lambda}$ must be $1 \times 1$.
\end{proof}
As a result of this lemma, we find that the Jordan blocks of the peripheral spectrum are all trivial; thus, the `peripheral part' of $\fnTransOne$ is diagonalizable.

We are now ready to prove the first major result of this section, by modifying a construction from \citep{wolf2012}.
\begin{lemma}
\label{lemma:quantum01b}
Suppose an agent $\mathcal{A}=(\mathcal{S},s_{init} =s_0,\mathcal{T})$ solves \taskMCC, where $\mathcal{T}=(\fnTransOne,\mathcal{M})$. Let
\begin{equation}
\fnTransOne_{\varphi}=\sum_{j:|\lambda_j|=1}\lambda_j \Pi_j.
\end{equation}
Let $\mathcal{T}'=(\fnTransOne_{\varphi}, \mathcal{M}).$
Then $\mathcal{A}'=(\mathcal{S},s'_{init} = s_0,\mathcal{T}')$ also solves \taskMCC, with memory cost no worse than $\mathcal{A}$. Furthermore, $\fnTransOne_{\varphi}$ is diagonalizable, and all its nonzero eigenvalues have unit magnitude.
\end{lemma}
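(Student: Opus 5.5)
The plan follows the standard ``peripheral part'' construction of Wolf. First, by Lemma~\ref{lemma:peripraldiag}'s analogue Lemma~\ref{lemma:peripheraldiag}, every peripheral eigenvalue of $\fnTransOne$ has only $1\times1$ Jordan blocks. Hence in the decomposition \eqref{eq:projnil} the nilpotent parts $N_j$ vanish for $|\lambda_j|=1$. It follows that $\fnTransOne_\varphi=\sum_{|\lambda_j|=1}\lambda_j\Pi_j$ is diagonalizable with nonzero eigenvalues of unit modulus, which is the ``furthermore'' clause. The remaining claims are that $\fnTransOne_\varphi$ is a legitimate (CPTP) update, that $\mathcal{A}'$ still solves \taskMCC, and that its accessible space is contained in that of $\mathcal{A}$.

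The key step is to realise $\fnTransOne_\varphi$ as a limit of powers of $\fnTransOne$. Since the peripheral eigenvalues $\lambda_j$ are finitely many points on the unit circle, simultaneous Dirichlet approximation gives a sequence $n_i\to\infty$ with $\lambda_j^{n_i}\to 1$ for all peripheral $j$. Meanwhile the non-peripheral parts $(\lambda_j\Pi_j+N_j)^{n}\to 0$ because $|\lambda_j|<1$ and the blocks are finite. Hence $\fnTransOne^{n_i}\to\fnTransOne_\infty:=\sum_{|\lambda_j|=1}\Pi_j$, a CPTP idempotent. I then want $\fnTransOne_\varphi=\fnTransOne\circ\fnTransOne_\infty=\lim_i \fnTransOne^{n_i+1}$. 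To make this compatible with Lemma~\ref{lemma:pplusone}, I would additionally force $n_i\equiv 0\pmod p$. This is done by including $e^{2\pi\ii/p}$ among the phases being approximated, or simply by replacing $n_i$ with $p\,m_i$ where $\lambda_j^{p m_i}\to1$. With that choice $\fnTransOne^{n_ip+1}\to\fnTransOne_\varphi$. Being a limit of CPTP maps in finite dimension, $\fnTransOne_\varphi$ is CPTP.

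For correctness, Lemma~\ref{lemma:pplusone} says each $\mathcal{A}_i=(\cdot,s_0,(\fnTransOne^{m_ip+1},\mathcal{M}))$ solves \taskMCC with zero loss. So for every $\kappa$ with $\kappa \bmod p\in\setMods_j$, $\tr[P_j\fnTransOne^{(m_ip+1)\kappa}(s_0)]=1$. This is a closed condition in the map, so letting $i\to\infty$ gives $\tr[P_j\fnTransOne_\varphi^{\kappa}(s_0)]=1$, and $\mathcal{A}'$ solves \taskMCC. For memory, each state $\fnTransOne_\varphi^\kappa(s_0)$ is a limit of states $\fnTransOne^{N}(s_0)$, all supported in the accessible space $\mathcal{H}(\agent)$. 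Supports of limits stay inside this closed subspace, since positive operators supported on a subspace converge to one supported there. Hence $\mathcal{H}(\agent')\subseteq\mathcal{H}(\agent)$, and the memory cost does not increase.

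The main obstacle is the simultaneous approximation step together with the congruence $\bmod\ p$. The peripheral phases may be irrational, so exact periodicity is unavailable, and I need the compactness argument on the torus $\mathbb{T}^r$. The sequence $(\lambda_1^{pm},\dots,\lambda_r^{pm})_m$ has a convergent subsequence, and differences of its indices give returns arbitrarily close to $(1,\dots,1)$. I would also check that Jordan blocks with $|\lambda|<1$ decay despite polynomial growth factors, which is routine.
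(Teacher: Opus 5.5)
Your proposal is correct and follows essentially the same route as the paper. Both arguments use the same steps:
\begin{itemize}
\item Dirichlet approximation applied to the $p$-th powers of the peripheral eigenvalues.
\item Vanishing of the peripheral nilpotent parts via Lemma~\ref{lemma:peripheraldiag}, so that $\fnTransOne^{n_ip+1}\to\fnTransOne_\varphi$.
\item Lemma~\ref{lemma:pplusone} plus closedness, to carry zero loss and the CPTP property over to the limit.
\item Containment of supports of limits, for the memory bound.
\end{itemize}
Your compactness-on-the-torus justification of the simultaneous approximation, which also secures $n_i\to\infty$, is a slightly more careful version of the step the paper simply attributes to Dirichlet's theorem.
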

\begin{proof}
    Consider the peripheral spectrum of $\fnTransOne$, i.e. the eigenvalues of  $\fnTransOne$, satisfying $ |\lambda_j|=1.$ These peripheral spectrum values also trivially satisfy $|\lambda_j^p| = 1$. Construct  the set of values $\{\lambda_j^p\}$, for $\lambda_j$ belonging to the peripheral eigenspectrum of $\fnTransOne$.
    By Dirichlet's approximation theorem applied to this set $\{\lambda_j^p\}$, for all $\epsilon > 0$ there exists $n \in \mathbb{N}$ such that $|\lambda_j^{np}-1|\le \epsilon$ for all $\lambda_j$. Thus, we may construct an ascending sequence $n_i \rightarrow \infty$ such that $\lambda_j^{n_i p} \rightarrow 1$, and this limit is approached uniformly in $j$.

    Let us now consider the block decomposition of $\hat{\fnTransOne}^{n_ip +1}.$ Using the Jordan normal form, and the decomposition in Eq.\eqref{eq:projnil}, we must have
    \begin{equation}
        \hat{\fnTransOne}^{n_i p+1} = \sum_j (\lambda_j \Pi_j + N_j)^{n_i p+1}.
    \end{equation}

    However, since $\fnTransOne$ is a trace-preserving CPTP map, we have  $|\lambda_j| \le 1$ for all $j$ (for a proof see \cite{watrous2018} Prop 4.26.). Clearly, if this inequality is strict such that $|\lambda_j| < 1$, then in the limit  $n \rightarrow \infty$, the blocks $(\lambda_j \Pi_j + N_j)^{np+1} \rightarrow 0$. Meanwhile for the peripheral spectrum eigenspaces, by Lemma \ref{lemma:peripheraldiag}, $|\lambda_j|=1$ implies $N_j=0.$ In conclusion,
    \begin{align}
    \label{limit_G}
          \lim_{i \rightarrow \infty} \fnTransOne^{n_i p + 1} &= \lim_{i \rightarrow \infty} \sum_{j:|\lambda_j|=1}\lambda_j^{n_i p + 1} \Pi_j  \notag \\& \rightarrow  \sum_{j:|\lambda_j|=1} \lambda_j \Pi_j = \fnTransOne_\varphi.
    \end{align}

    Meanwhile for each $n_i \in \mathbb{N}$, we can use Lemma \ref{lemma:pplusone} to argue that $\fnTransOne^{n_ip+1}$ will solve \taskMCC.

    Thus there exists a sequence of operators $\fnTransOne^{n_i p + 1}$ that solve \taskMCC, and that also  converge to $\fnTransOne_\varphi$. Since the space of CPTP maps is closed, $\fnTransOne_\varphi$ is also CPTP. In the next paragraph, we use this fact, in conjunction with continuity of  measurement outcome probabilities in terms of the operators $\fnTransOne, \mathcal{M}$, to show that $\mathcal{T}'$ also solves \taskMCC.

    Consider an arbitrary classification $j$, and an arbitrary word $\strInput$ such that $\fnClass{\strInput}=j$. Let $\mathrm{Pr}_{\tilde{\fnTransOne}, M}(j,\strInput)$ be the probability of measuring $j$ after input $\strInput$ using transition map $\tilde{T} =(\tilde{\fnTransOne}, M).$ This probability is clearly continuous in $\tilde{\fnTransOne}$. In particular, since
    \begin{equation}
        \fnTransOne_\varphi=\lim_{i \rightarrow \infty} \fnTransOne^{n_i p + 1},
    \end{equation}
    then
    \begin{equation}
        \mathrm{Pr}_{\fnTransOne_\varphi, M}(j,\strInput)=\lim_{i \rightarrow \infty} \mathrm{Pr}_{\fnTransOne^{n_i p + 1},M}(j|\strInput).
    \end{equation}

    However, recall that $\fnTransOne^{n_i p + 1}$ solves $\taskMCC$. Since $\strInput \in \Gamma_j$, it immediately follows that $\mathrm{Pr}_{\fnTransOne^{n_i p + 1}}(j,\strInput)=1$ for \textit{all} $i$. Thus, $\mathrm{Pr}_{\fnTransOne'}(j,\strInput)=1$, and so $ \fnTransOne_\varphi$ classifies $\strInput\in\Gamma_j$ correctly. Finally, since $j$ and $\strInput$ were arbitrary, $ \fnTransOne_\varphi$ therefore solves the problem $\taskMCC$.

For $\kappa=0$, the state $s_0$ is already reachable under $\agent$.
For every integer $\kappa \geq 1$,
\begin{equation}
\fnTransOne_\varphi^\kappa(s_0)
=
\lim_{i\to\infty}G^{n_i p+\kappa}(s_0),
\end{equation}
where each term on the right-hand side is reachable under $\agent$, i.e. in $\mathcal {R}(\mathcal{A})$. For \textbf{any} converging sequence of PSD states $\rho_i \rightarrow \rho$, $\operatorname{supp}(\rho) \subseteq \operatorname{span}\bigcup_i \operatorname{supp}(\rho_i);$ and so, $\operatorname{supp}\fnTransOne_\varphi^\kappa(s_0) \subseteq \mathcal{H}(\mathcal{A})$. However, $\kappa$ was arbitrary; and thus $\mathcal{H}(\mathcal{A'}) \subseteq \mathcal{H}(\mathcal{A})$ Thus the memory cost of $\agent'$ is no worse than $\agent$.

\end{proof}

On its own, Lemma \ref{lemma:quantum01b} means that we can assume $\hat{\fnTransOne}$ is diagonalizable. However we have not proven that the eigenstates of the operator in Eq. \eqref{limit_G} are pure.

However, we can apply the results of \cite{wolf2010} to solve this problem. Define
\begin{equation}
    \mathcal{X}_\fnTransOne = \textrm{span}\lbrace X\in \mathcal{M}_M (\mathbb{C}) | \exists \phi \in \mathbb{R} : \fnTransOne(X)=e^{\ii \phi}X \rbrace.
\end{equation}
That is, $\mathcal{X}_\fnTransOne$ is the span of all eigenvectors with eigenvalues of magnitude one. By Lemma \ref{lemma:quantum01b}, we can assume \textit{all} nonzero eigenvalues of $\fnTransOne$ have magnitude one, so $\mathcal{X}_\fnTransOne$ is actually the row space or coimage of $\fnTransOne$; in particular, $\fnTransOne$ is specified by its behaviour on $\mathcal{X}_\fnTransOne$.

We can also identify extremal states \footnote{These are called ``pure states in $\mathcal{X}_\fnTransOne$" in \cite{wolf2010}, but for clarity we will not use that term here because they need not correspond to pure states in the density matrix picture.} inside this subspace, following \cite{wolf2010}:
\begin{definition}\label{def:peripheral_pure_state}
An \textit{extremal state in $\mathcal{X}_\fnTransOne$} is a state which cannot be expressed as a nontrivial convex combination of states in $\mathcal{X}_\fnTransOne$.
\end{definition}

With these definitions, we may proceed with the main result of this section:
\begin{theorem}
\label{thm:unitary}
Consider an agent $\mathcal{A}=(\mathcal{S},s_{init} = s_0,\mathcal{T})$ solving $\taskMCC$.
Then there exists a new agent $\mathcal{A}'=(\mathcal{S}',s'_{init} = s'_0,\mathcal{T}')$ also solving \taskMCC, where $\mathcal{T}'=(\fnTransOne',\mathcal{M}')$, and $\fnTransOne'$ is a unitary (i.e. Kraus rank 1) quantum process. Furthermore $\mathcal{A}'$ has memory cost no larger than that of $\mathcal{A}$.
\end{theorem}
\begin{proof}
Let $\mathcal{T}=(\fnTransOne,\mathcal{M}).$
By Lemma \ref{lemma:quantum01b}, we may replace $\fnTransOne$ by a new function $\fnTransOne_\varphi$ which is diagonalizable, and such that \textit{all} the nonzero eigenvalues of $\fnTransOne_\varphi$ are `phases', i.e. of magnitude one.
Therefore, $\fnTransOne_\varphi$ is specified by its behaviour on $\mathcal{X}_{\fnTransOne_\varphi}$.
Then by theorem 8 of \cite{wolf2010} we can decompose the Hilbert space
\begin{equation}
    \mathcal{H}=\mathbb{C}^M=\mathcal{H}_0 \oplus \bigoplus_{j=1}^{J}\mathcal{H}_{j},
\end{equation}
with $\mathcal{H}_j=\mathcal{H}_{j,1}\otimes \mathcal{H}_{j,2}$ and $M = \sum_{j=0}^J {\rm dim}(\mathcal{H}_j)$. Furthermore we can find density matrices $\rho_j$ acting on $\mathcal{H}_{j,2}$, such that
\begin{equation}
    \mathcal{X}_{\fnTransOne_\varphi}=0 \oplus \bigoplus_{j=1}^J \mathcal{M}_{d_j}\otimes \rho_j
\end{equation}
where each $\mathcal{M}_{d_j}$ is a full matrix algebra (i.e. $\mathcal{M}_{d_j}=B(\mathcal{H}_{j,1})$ \cite{lindblad1999}) over the Hilbert space $\mathcal{H}_{j,1}$ of dimension $d_j=\textrm{dim}(\mathcal{H}_{j,1})$. This implies that for all $X\in\mathcal{X}_{\fnTransOne_\varphi},$ there exist matrices $\chi_j \in B(\mathcal{H}_{j,1})$ such that
\begin{equation}
\label{eq:xrhodecomp}
    X=0 \oplus \bigoplus_{j=1}^J \chi_j\otimes \rho_j.
\end{equation}

Next, we note that $\fnTransOne_\varphi$ is invertible in $\mathcal{X}_{\fnTransOne_\varphi}$. That is, there exists a map $\fnTransOne_\varphi^{-1}$ such that for all $X \in \mathcal{X}_{\fnTransOne_\varphi},$ $\fnTransOne_\varphi^{-1} \fnTransOne_\varphi (X) = X.$
We may construct $\fnTransOne_\varphi^{-1}$ using the methods found in Lemma \ref{lemma:quantum01b}, except this time, instead of constructing the limit $\fnTransOne_{\varphi}=\lim_{i\rightarrow \infty}\fnTransOne_\varphi^{n_i p + 1}$, we construct $\fnTransOne_\varphi^{-1}=\lim_{i\rightarrow \infty}\fnTransOne_\varphi^{n_i p - 1}$.

We now lift a result from the proof of Theorem 8 in \cite{wolf2010}: the existence of this inverse $\fnTransOne_\varphi^{-1}$ implies that $\fnTransOne_\varphi$ (and thus $\fnTransOne_\varphi^{-1}$) map extremal states in $\mathcal{X}_{\fnTransOne_\varphi}$ to other extremal states. In particular we give a brief sketch of the proof here. Suppose that $\fnTransOne_\varphi$ does not map extremal states to extremal states: that is, suppose there exists some pure state $s = \chi \otimes \rho_j \in \mathcal{X}_{\fnTransOne_\varphi}$ which gets mapped to $\fnTransOne_\varphi(s) = \alpha_1 x_1 \otimes \rho_1 + \alpha_2 x_2 \otimes \rho_2$, $\alpha_1 + \alpha_2 = 1.$ Then applying $\fnTransOne_\varphi^{-1}$, we find that $s$ may be expressed as the nontrivial convex combination $s = \alpha_1 \fnTransOne_\varphi^{-1}(\chi_1 \otimes \rho_1) + \alpha_2 \fnTransOne_\varphi^{-1}(\chi_2 \otimes \rho_2)$, which is a contradiction. (Note that as $\fnTransOne_\varphi^{-1}$ is itself invertible, $\fnTransOne_\varphi^{-1}(\chi_1 \otimes \rho_1) \neq \fnTransOne_\varphi^{-1}(\chi_2 \otimes \rho_2)$.) Thus $\fnTransOne_\varphi$ and $\fnTransOne_\varphi^{-1}$ map extremal states to extremal states.

Now, any state in $\mathcal{X}_{\fnTransOne_\varphi}$ may be written as in Eq. \eqref{eq:xrhodecomp}. Clearly, the projection of $X$ onto each $j$-space is in $\mathcal{X}_{\fnTransOne_\varphi}$. It immediately follows that for any \textit{extremal} state in $\mathcal{X}_{\fnTransOne_\varphi}$, $\chi_j$ must vanish for all but one $j$. In fact, we can go further: for every extremal state in $\mathcal{X}_{\fnTransOne_\varphi}$, the nonzero $\chi_j$ is extremal in $B(\mathcal{H}_{j,1})$; that is, $\chi_j$ is a rank-one projection, and is a pure state in the usual sense \cite{wolf2010}.

First, we may assume $s_0$ is contained within $\mathcal{X}_{\fnTransOne_\varphi}$. Suppose otherwise: since prepending $p$ ones to a string should not affect its classification, we can construct a new agent $\agent'' = (\mathcal{S}', \fnTransOne_\varphi^{p}(s_0), \mathcal{T}'),$ and the resulting agent will still solve $\taskMCC$. Since $\fnTransOne_\varphi$ has all eigenvalues magnitude either 0 or 1, the new initial state $\fnTransOne_\varphi^p(s_0)$ must be entirely contained within $\mathcal{X}_{\fnTransOne_\varphi}$. Thus assuming $s_0 \in \mathcal{X}_{\fnTransOne_\varphi}$ is justified.

Now, suppose $s_0$ is not extremal in $\mathcal{X}_{\fnTransOne_\varphi}$. In particular, it has a nontrivial expression as a convex combination of extremal states, a ``decomposition", and each extremal state in the decomposition can be written as $\chi \otimes \rho_j$ for some $j$ and some pure state $\chi \in B(\mathcal{H}_{j,1})$. Let us argue similarly to Lemma \ref{lemma:purity}. We first assume, using Lemma \ref{lemma:zero}, that our original agent $\agent$ emits 0 every timestep except when $x_T=\Omega$. In that case, for any input word in the classes we must classify, the pointwise loss of $\agent$ (i.e. the probability of the measurement on a possibly \textit{mixed} memory state getting the wrong classification), expressed as a function of the initial state as in Eq.\eqref{eq:affineloss}, is affine in the initial state $s_0$; in particular, if the loss is zero for $s_0$, it must also be zero for every extremal state $\chi \otimes \rho_j$ in its decomposition, and so an agent initialized in state $\chi \otimes \rho_j$ \textit{also} solves $\taskMCC$. Thus, we may assume that $s_0$ is extremal: that is, as a matrix,
\begin{equation}
    s_0=\chi \otimes \rho_{\hat{j}}
\end{equation}
for some rank-one projection $\chi \in B(\mathcal{H}_{j,1})$ and some particular $\hat{j}$.

Again by Theorem 8 in \cite{wolf2010}, there exists unitaries $U_j$, and a permutation $\pi$ permuting between indices $1,\dots,J$, such that for all $\psi\in \mathcal{X}_{\fnTransOne_\varphi}$,
\begin{equation}
    \fnTransOne_\varphi(\psi)=0 \oplus \bigoplus_{j=1}^J U_j \chi_{\pi(j)} U_j^\dagger \otimes \rho_j.
\end{equation}
That is, the permutation $\pi$ and the unitaries $U_j$ do not depend on $\psi$. Note that $\pi$ only permutes between Hilbert spaces of matching dimension.

Now, let $\nu$ be the length of the cycle containing $\hat{j}$ in the cyclic decomposition of $\pi$; that is, the smallest $n$ such that $\pi^n(\hat{j})=\hat{j}.$ Since this is the smallest such $n$, $B(\mathcal{H}_{\pi^n(\hat{j}),1}) \otimes \rho_{\pi^n (\hat{j})} \subseteq \mathcal{X}_{\fnTransOne_\varphi}$ must be distinct for $n=0,1,2,\ldots,\nu-1$. Counting dimensions, we must have $d_{\hat{j}} \nu \le M,$ (as $\pi$ can only permute between Hilbert spaces of matching dimension).  In particular this implies $\nu \le M$.

Suppose our agent $\mathcal{A}$ has memory cost $M\ge p$: we can then always build a \textit{classical} agent $\mathcal{A}'$ with memory cost $p$. Now, the classical agent's update map $\fnTransOne_\varphi'$ is unitary, and the memory cost of $\mathcal{A}'$ is no larger than the memory cost of $\mathcal{A}$. This implies our theorem is automatically true in the case $M\ge p$. So, in what follows, we assume $M < p$, and in particular $\nu < p$.

Since $\nu < p$, it is coprime with $p$. In particular, by the Chinese remainder theorem, there exists some $N$ such that
\begin{align}
    &N \,\,{\rm mod }\,\, \nu \equiv 0 \\
    &N \,\,{\rm mod }\,\, p \equiv 1.
\end{align}
Suppose we apply $\fnTransOne_\varphi$ to the extremal state $s_0 = \chi \otimes \rho_{\hat{j}}$ \textit{once}. Since the only nonzero component of $s_0$ is inside $B(\mathcal{H}_{\hat{j},1}) \otimes \rho_{\hat{j}}$, we find that
\begin{equation}
    \fnTransOne_\varphi(s_0) = U_{\pi^{-1}(\hat{j})} \chi U^\dagger_{\pi^{-1}(\hat{j})} \otimes \rho_{\pi^{-1}(\hat{j})}.
\end{equation}

Now, consider $\fnTransOne_\varphi^N (s_0)=\fnTransOne_\varphi^N(\chi\otimes \rho_{\hat{j}})$.  Since $\Nmodx{\nu}\equiv 0$, $\pi^N (\hat{j})=\hat{j},$ and thus $\fnTransOne_\varphi^N$ maps $B(\mathcal{H}_{\hat{j},1})\otimes \rho_{\hat{j}}$ to \textit{itself.} Therefore,
\begin{align}
    \fnTransOne_\varphi^N &(s_0)\nonumber\\
    &=(U_{\hat{j}} U_{\pi(\hat{j})}\dots U_{\pi^{N-1}(\hat{j})} \chi U_{\pi^{N-1}(\hat{j})}^\dagger \dots U_{\pi(\hat{j})}^\dagger U_{\hat{j}}^\dagger)\otimes \rho_{\hat{j}}.
\end{align}
Note that here we have used the fact that $\pi^N (\hat{j})=\hat{j},$ and thus $\pi^{-1}(\hat{j}) = \pi^{N-1} (\hat{j})$.

However, by Lemma \ref{lemma:pplusone}, since $N\,\, {\rm mod}\,\, p \equiv 1$, $\mathcal{A}_N=(\mathcal{S},s_{init} = s_0,(\fnTransOne_\varphi^N,\mathcal{M}))$  should lead to a new agent that solves \taskMCC. In particular, for \textit{any} $k$, $\fnTransOne_\varphi^{k N}(s_0)$ is contained within a single $j$; so, for $k_1$, $k_2$ in distinct remainder classes $\Xi_1, \Xi_2$, we have
\begin{align}
    \fnTransOne_\varphi^{\xi_1 N}(s_0) &= \chi_1 \otimes \rho_{\hat{j}}, \\
    \fnTransOne_\varphi^{\xi_2 N}(s_0) &= \chi_2 \otimes \rho_{\hat{j}}, \\
    \chi_1 \otimes \rho_{\hat{j}} &\perp \chi_2 \otimes \rho_{\hat{j}};
\end{align}

Next, define $U'=U_{\hat{j}} U_{\pi(\hat{j})}\dots U_{\pi^{N-1}(\hat{j})}$; then, for every $k\in\mathbb N_0$,
\[
\fnTransOne_\varphi^{kN}(s_0)=\chi_k\otimes\rho_{\hat\jmath},
\qquad
\chi_k=(U')^k\chi(U'^\dagger)^k .
\]
Since the agent with update $\fnTransOne_\varphi^N$ solves $\mathrm{MCC}(p,\Xi)$,
if $k\bmod p\in\Xi_a$ and $\ell\bmod p\in\Xi_b$ with
$a\neq b$, then these two states must be perfectly
distinguishable. Hence $\chi_k\chi_\ell=0$.

Define
\[
\mathcal K_a
 =\operatorname{span}\bigl\{
   \operatorname{supp}(\chi_k):
   k\geq0,\ k\bmod p\in\Xi_a
  \bigr\}.
\]
Then $\mathcal K_a\perp\mathcal K_b$ whenever $a\neq b$.
Let $\mathcal M'$ be the projective measurement onto the
subspaces $\mathcal K_a$, together with the projector onto
their orthogonal complement. This measurement classifies
every promised input length correctly.

Finally, let $\mathcal{S}'=D(\mathcal{H}_{\hat{j},1})$, let $s'_0=\chi$, and let
\begin{equation}
    \fnTransOne_\varphi'(s)=U' s U'^\dagger
\end{equation}
for all $s \in \mathcal{M}_{d_{\hat{j}}}$. $\fnTransOne_\varphi'$ is unitary, and is essentially the restriction of $\fnTransOne_\varphi^N$ to $\mathcal{M}_{d_{\hat{j}}}\otimes \rho_{\hat{j}}$.

Thus, defining the new agent $\mathcal{A}'=(\mathcal{S}',s'_{init} = s'_0,\mathcal{T}')$, where $\mathcal{T}'=(\fnTransOne_\varphi',\mathcal{M}'),$ it is easy to see that $\mathcal{A}'$ will also solve \taskMCC.

Finally, the embedding $\chi \mapsto\chi \otimes \rho_{\hat{j}}$ identifies the reduced accessible states with a subset of the original accessible states, and so $\mathcal{A}'$ must have memory cost no more than $\mathcal{A}$, completing the proof.
\end{proof}

\section{Optimality of Quantum Agent Construction}
\label{supp:sec:optimal}

In this section, we wish to prove that our construction is optimal over all agents solving $\taskMCC$, i.e. that our construction has lowest memory cost. In the previous section, we proved that it suffices to consider agents with \textit{unitary} (i.e. Kraus rank 1) update operations. We now show that among all agents with unitary update operations, our construction is optimal.

Now, since the $\fnTransOne$ we have constructed in the previous section is unitary, there must exist a unitary matrix $U$ such that $\fnTransOne(\rho)=U\rho U^\dagger.$ We now prove $U$ is $p$-periodic, up to some total phase:
\begin{lemma}
\label{lemma:periodicity}
Consider an agent $\mathcal{A}=(\mathcal{S},s_{init} = s_0,\mathcal{T})$ solving \taskMCC, where $\mathcal{T}=(\fnTransOne,\mathcal{M})$ and $\fnTransOne = U\rho U^{\dagger}$ is unitary. If $\mathcal{A}$ is optimal over all unitary agents, then (on the accesible Hilbert space) $U^p
=
e^{i\zeta}I$ for some $\zeta$ real.
\end{lemma}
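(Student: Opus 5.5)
We aim to show that if $U$ is not $p$-periodic up to a global phase on the accessible space, then a strictly smaller unitary agent also solves \taskMCC, contradicting optimality among unitary agents. First, we restrict $U$ to the accessible Hilbert space $\mathcal{H}=\mathrm{span}\{U^k\ket{s_0}\}_{k\ge0}$. By Lemma~\ref{lemma:purity} we may take $s_0=\ket{s_0}\bra{s_0}$ pure. Since $U$ maps $\mathcal{H}$ into itself and is injective on a finite-dimensional space, it restricts to a unitary on $\mathcal{H}$. We then diagonalize $U=\sum_m e^{\ii\theta_m}\ket{\phi_m}\bra{\phi_m}$ on $\mathcal{H}$. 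Because $\mathcal{H}$ is cyclic for $U$ with cyclic vector $\ket{s_0}$, two facts follow. The eigenvalues $e^{\ii\theta_m}$ are distinct, since a repeated eigenvalue would allow a direction orthogonal to the orbit. Also every component $c_m=\braket{\phi_m}{s_0}$ is nonzero. In this picture $M=\dim\mathcal{H}$ equals the number of distinct eigenphases.

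The next step is to write the orthogonality constraints in spectral form. Zero-loss classification with a projective measurement (Lemma~\ref{lemma:projective}) requires $\braket{s_a}{s_b}=0$ whenever $a \bmod p$ and $b \bmod p$ lie in distinct classes. Here $a,b$ range over all nonnegative integers, not only residues. Writing $w_m=|c_m|^2>0$ with $\sum_m w_m=1$, this says $\sum_m w_m e^{\ii\theta_m(b-a)}=0$ for all such pairs. Replacing $a$ by $a+p$ shows that each phase $z_m=e^{\ii\theta_m}$ enters constraints that are invariant under shifting exponents by $p$.

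The key step is to replace each $z_m$ by a $p$-th root of unity while keeping all the constraints. We group the eigenphases by the class of $z_m^p$ on the unit circle. For fixed $\delta\in\mathcal{N}_{\setSetMods}$, the constraint must hold for all $d\equiv\delta \pmod p$ with $d$ ranging over an infinite arithmetic progression. Taking $d=\delta+np$ gives $\sum_m w_m z_m^{\delta}(z_m^p)^n=0$ for all $n\ge0$. We group terms by the distinct values $\omega$ of $z_m^p$. The sequences $n\mapsto\omega^n$ for distinct $\omega$ are linearly independent, which is the standard Vandermonde argument. Hence for each value $\omega$ separately, $\sum_{m:z_m^p=\omega} w_m z_m^\delta=0$ for every $\delta\in\mathcal{N}_{\setSetMods}$.

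Now we choose any group $\omega$, normalize its weights, and keep only the eigenvectors in that group. The resulting restricted state together with $U$ still satisfies all orthogonality constraints, so it solves \taskMCC\ via the measurement constructed as in Lemma~\ref{lemma:projective}. Its memory equals the size of the group, which is strictly less than $M$ unless there is only one group. By optimality there is therefore exactly one group, so $z_m^p=\omega=e^{\ii\zeta}$ for all $m$, and hence $U^p=e^{\ii\zeta}I$ on $\mathcal{H}$.

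The main obstacle is this decoupling step. We must check carefully that the constraints really apply to all exponents $\delta+np$ and not only to residues, which holds because $U$ is iterated on unreduced input lengths. We also need the linear-independence argument over distinct $\omega$, including the positivity of the weights, to be rigorous. The cyclicity facts from the first paragraph ensure that the restricted agent's accessible space is exactly the group's eigenspace, so its dimension is strictly smaller.
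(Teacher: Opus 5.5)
Your proof is correct, and it reaches the conclusion by a different mechanism than the paper. The underlying idea is the same: input lengths differing by $p$ share a class, so valid initial vectors are closed under $U^p$ and under linear combinations.

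The paper works directly with the projective measurement. From $\Pi_k U^n\ket{s_0}=U^n\ket{s_0}$ and $\Pi_k U^{n+p}\ket{s_0}=U^{n+p}\ket{s_0}$ it concludes, by linearity, that every superposition $\alpha\ket{s_0}+\beta U^p\ket{s_0}$ is classified correctly by the \emph{same} measurement. Assuming two eigenvalues with $\lambda_1^p\neq\lambda_2^p$, it takes the combination proportional to $(U^p-\lambda_1^p)\ket{s_0}$. This vector is nonzero because the $\ket{\phi_2}$-component of $\ket{s_0}$ is nonzero, and it is orthogonal to $\ket{\phi_1}$, so the accessible dimension drops by at least one.

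You instead translate zero loss into the Gram constraints $\sum_m w_m z_m^{d}=0$, note that these hold along the whole progression $d=\delta+np$, and decouple them across the distinct values of $z_m^p$ by Vandermonde independence. Restricting to one group amounts to replacing $\ket{s_0}$ by $P_\omega\ket{s_0}$, where $P_\omega$ is the spectral projector of $U^p$ and is itself a polynomial in $U^p$. So your argument is a full-strength version of the paper's one-factor trick.

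Your route costs the Vandermonde step and a freshly constructed measurement as in Lemma~\ref{lemma:projective}. In return it shows at once that each $z^p$-group separately supports a zero-loss agent. It also sets up the weight constraints $\sum_m w_m z_m^\delta=0$ that the paper only introduces later, in Theorem~\ref{thm:optimalquantum}. The paper's route is shorter, keeps $\mathcal{M}$ unchanged, and needs only linearity of the condition $\Pi_k U^n\ket{\psi}=U^n\ket{\psi}$ in $\ket{\psi}$. Finally, you explicitly justify the cyclic-vector facts (nondegenerate spectrum on the accessible space, all components of $\ket{s_0}$ nonzero), which the paper simply asserts.
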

\begin{proof}
First, by Lemma \ref{lemma:purity}, we may assume $s_0$ is pure. We will abuse notation slightly and write (one possible) ket state as $\ket{s_0}.$

Now, recall from Lemma \ref{lemma:projective} that $\mathcal{M}$ is projective. That is,
\begin{equation}
    \mathcal{M}(\rho)=\sum_{k=1}^K \Pi_k \rho \Pi_k
\end{equation}
for some projections $\Pi_k$, i.e. $\Pi_k^2=\Pi_k.$ Specifically, whenever the measurement result is $k$, the agent reports the input string is in class $k.$ To be capable of solving $\taskMCC$, when the agent is given an input string of $n$ ones followed by $\Omega$, and $n \equiv \xi \, (\textrm{mod}\, p) \in \Xi_k,$ the measurement $\mathcal{M}$ \textit{must} deterministically return $k$; that is,
\begin{equation}
    \tr \left(\Pi_k U^{n}\proj{s_0}{s_0}{U^{n}}^{\dagger}\Pi_k \right)=1
\end{equation}
Using the cyclic property of the trace this implies that
\begin{equation}\label{eq:unityinnerproduct}
    \ematrix{s_0}{{U^{n}}^{ \dagger}\Pi_k U^{n}}{s_0}=1.
\end{equation}
Thus the inner product of $U^{n}\ket{s_0}$ and $\Pi_k U^n \ket{s_0}$ is 1. Furthermore since $|U^{n}\ket{s_0}|=|\Pi_k U^n \ket{s_0}| = 1$, by the Cauchy-Schwarz inequality applied to Eq. \eqref{eq:unityinnerproduct} we must therefore have
\begin{equation}
\label{eqn:ProjUL}
    \Pi_k U^{n} \ket{s_0} = U^{n} \ket{s_0}.
\end{equation}
Now, clearly $n + p \equiv n \equiv \xi \, (\textrm{mod}\, p)$, so if $\mathcal{A}$ solves $\taskMCC$, we must also have
\begin{equation}\label{eq:sufficency}
    \tr \left(\Pi_k U^{n+p}\proj{s_0}{s_0}{U^{n+p}}^{ \dagger}\Pi_k \right)=1
\end{equation}
and thus
\begin{equation}
\label{eqn:ProjULp}
    \Pi_k U^{n+p} \ket{s_0} = U^{n+p} \ket{s_0}.
\end{equation}
In fact, since $U^n$ and $\Pi_k U^n$ are linear operators, we find that for \textit{any} linear superposition described by $\alpha, \beta,$
\begin{equation}
\label{eqn:ProjULab}
    \Pi_k U^n (\alpha \ket{s_0} + \beta U^p\ket{s_0})=
    U^n (\alpha \ket{s_0} + \beta U^p\ket{s_0}).
\end{equation}

Now, suppose $\agent$ is optimal, with memory dimension $M$, but $U^p\neq e^{i\zeta} I$ for any real $\zeta$. In particular, this implies that there are two eigenvalues $\lambda_1, \lambda_2$ of $U$ (with respective eigenvectors $\ket{\phi_1},\ket{\phi_2} \in \mathcal{H}(\agent)$, as the accessible space is $U$-invariant and equal to the memory space) such that $\lambda_1^p \neq \lambda_2^p.$ $\mathcal{H}(\agent)$ is spanned by the cyclic orbit of $\ket{s_0}$, so every eigenvector in it has nonzero amplitude. Thus, $\ket{s_0}$ and $U^p\ket{s_0}$ are linearly independent. Choose a normalized nonzero linear combination
\begin{equation}
    \ket{s'_0} = \alpha \ket{s_0}+\beta U^p\ket{s_0}
\end{equation}
orthogonal to $\ket{\phi_1}$, and construct an agent $\agent'=(\mathcal{S}',\proj{s'_0}{s'_0},\mathcal{T})$;
\begin{equation}
\mathcal{H}(\agent')
=\mathrm{span}\{U^n\ket{s'_0}:n\geq0\}.\end{equation}
Since $|\phi_1\rangle$ is an eigenvector of $U$,
\begin{equation}
\mathcal{H}(\agent')
\subseteq
\mathcal{H}(\agent)\cap\ket{\phi_1}^\perp,
\end{equation}
so $\dimHacc(\agent')\leq M-1$.
Then equation \eqref{eqn:ProjULab} implies that $\agent'$ will output the same outputs as $\agent$ for any $k$, and will therefore solve $\taskMCC$ with lower memory cost. This is a contradiction: thus, an optimal unitary agent must have $U^p= e^{i\zeta} I$ for some real $\zeta$,  on $\mathcal{H}(\agent)$.

\end{proof}

\begin{lemma}
\label{lemma:diag}
Consider an agent $\mathcal{A}=(\mathcal{S},s_0,\mathcal{T})$ solving \taskMCC, where $\mathcal{T}=(\fnTransOne,\mathcal{M}),$ and $\fnTransOne = U\rho U^{\dagger}$. If $\mathcal{A}$ is optimal over unitary agents,
$U$ must be unitarily equivalent to $e^{i\zeta/p}\mathrm{diag}(\omega^0, \omega^1, \ldots, \omega^{(p-1)})$ or a restriction to a subspace thereof, where $\omega=e^{\ii 2\pi/p}$ is the $p$-th primitive root of unity. i.e. $U$ has no degenerate eigenvalues on the accessible space.
\end{lemma}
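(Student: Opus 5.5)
We will combine Lemma \ref{lemma:periodicity} with a degeneracy-removal argument in the same spirit as that lemma's proof. By Lemma \ref{lemma:purity} we take $s_0=\proj{s_0}{s_0}$ pure, and by optimality among unitary agents the accessible space is $\mathcal{H}(\agent)=\mathrm{span}\{U^n\ket{s_0}:n\ge 0\}$. This space is $U$-invariant, so $U$ restricted to it is a unitary and hence diagonalizable. Lemma \ref{lemma:periodicity} gives $U^p=e^{\ii\zeta}I$ on $\mathcal{H}(\agent)$. Every eigenvalue $\lambda$ therefore satisfies $\lambda^p=e^{\ii\zeta}$, so $\lambda=e^{\ii\zeta/p}\omega^{m}$ for some $m\in\mathbb{Z}_p$. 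Consequently $U$ is unitarily equivalent to $e^{\ii\zeta/p}\,\mathrm{diag}(\omega^{m_1},\dots,\omega^{m_M})$ with $m_i\in\mathbb{Z}_p$, and it remains to show that the $m_i$ are distinct.

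For distinctness, we will use the cyclic-vector property. Decompose $\ket{s_0}=\sum_{m}P_m\ket{s_0}$, where $P_m$ is the spectral projector of $U$ for eigenvalue $e^{\ii\zeta/p}\omega^m$ on $\mathcal{H}(\agent)$. Then $U^n\ket{s_0}=e^{\ii n\zeta/p}\sum_m\omega^{nm}P_m\ket{s_0}$. Hence every vector in the orbit lies in $\mathrm{span}\{P_m\ket{s_0}\}_m$, whose dimension is at most the number of \emph{distinct} eigenvalues, which is at most $p$.

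If some eigenspace $P_m\mathcal{H}(\agent)$ had dimension at least $2$, the orbit would not span $\mathcal{H}(\agent)$, contradicting the definition of the accessible space. So every eigenvalue is simple, and its eigenvector is proportional to $P_m\ket{s_0}\neq 0$. Choosing the orthonormal basis $P_m\ket{s_0}/\|P_m\ket{s_0}\|$ shows that $U$ is unitarily equivalent to the restriction of $e^{\ii\zeta/p}\mathrm{diag}(\omega^0,\dots,\omega^{p-1})$ to the coordinates $m$ that occur.

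Alternatively, if one prefers to argue against an agent whose memory space is larger than the orbit span, the same contradiction follows from optimality. One replaces the memory by $\mathrm{span}\{P_m\ket{s_0}\}$, which has strictly smaller dimension while leaving every reachable state, and hence every classification, unchanged. This mirrors the dimension-reduction step in Lemma \ref{lemma:periodicity}.

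The main subtlety is the bookkeeping of what is meant by ``on the accessible space'': the argument needs the fact, already used in Lemma \ref{lemma:periodicity}, that the memory space coincides with the cyclic span of $\ket{s_0}$. Once that is in place, the argument reduces to the Vandermonde-type observation that a cyclic vector for a diagonalizable operator forces simple eigenvalues. We also note that a global phase $e^{\ii\zeta/p}$ can be removed without changing the induced channel $\rho\mapsto U\rho U^\dagger$.
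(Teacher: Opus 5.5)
Your proof is correct and is essentially the paper's argument. Lemma~\ref{lemma:periodicity} forces the eigenvalues to be $e^{\ii\zeta/p}\omega^m$, and a degenerate eigenspace would confine the orbit of $\ket{s_0}$ to a strictly smaller subspace, contradicting minimality of the accessible space; your spectral-projector phrasing (where the normalized $P_m\ket{s_0}$ is exactly the paper's rotated vector $\ket{\psi_1}$) handles all degenerate eigenspaces and zero-amplitude eigenvectors at once rather than by an explicit pairwise rotation.
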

\begin{proof}
By the lemma \ref{lemma:periodicity}, $U^p\big|_{\mathcal H_{\rm acc}(\agent)}=e^{\ii\zeta}I.$
Thus, all the eigenvalues of $U\big|_{\mathcal H_{\rm acc}(\agent)}$ must be powers of $\omega$ multiplied by $e^{\ii\zeta/p}$.
It remains to be shown that the multiplicity of each eigenvalue is 1; that is, that the eigenvalues $\lambda_i$ are \textit{distinct}. Suppose we were to pick an orthonormal basis of eigenvectors $\ket{\phi_1}, \ket{\phi_2}, \ldots, \ket{\phi_M}$ with eigenvalues $\lambda_1, \lambda_2, \ldots, \lambda_M$. Then we can always expand $\ket{s_0}$ in this basis,
\begin{equation}
\ket{s_0}=\sum_{i=1}^M \alpha_i \ket{\phi_i}.
\end{equation}
In fact, since this is an eigenbasis,
\begin{equation}
\ket{s_j}=U^j\ket{s_0}=\sum_{i=1}^M \alpha_i \lambda_i^j \ket{\phi_i}.
\end{equation}
Note that this implies that if there is no support over $\ket{\phi_m} $ in the initial state $\ket{s_0}$ then there will be no support over $\ket{\phi_m}$ in any $\ket{s_j} = U^j \ket{s_0}$ achieved by the agent.

Now, suppose without loss of generality that $\lambda_1=\lambda_2$, but $\alpha_1\neq 0, \alpha_2 \neq 0$. We may then rotate $\ket{\phi_1}, \ket{\phi_2}$ into:
\begin{align}
    \ket{\psi_1} &= \frac{\alpha_1\ket{\phi_1}+\alpha_2\ket{\phi_2}}{\sqrt{|\alpha_1|^2+|\alpha_2|^2}} \nonumber\\
    \ket{\psi_2} &= \frac{\alpha_2^*\ket{\phi_1}-\alpha_1^*\ket{\phi_2}}{\sqrt{|\alpha_1|^2+|\alpha_2|^2}}
\end{align}
The initial state evidently has zero overlap on $\ket{\psi_2};$ and because this is an eigenvector decomposition, we can simply remove it and generate a new set of $p$ quantum states in a Hilbert space of dimension $M-1$, contradicting our earlier assumption that our states had minimal dimension. Thus the eigenvalues all have multiplicity 1, proving the lemma.
\end{proof}

\begin{theorem}
\label{thm:optimalquantum}
The quantum agent constructed in Algorithm \ref{alg:construction} is optimal over \textit{all} agents: in particular, no agent solving $\taskMSC$ can have memory dimension less than $M=2|\Delta|+1.$
\end{theorem}
\begin{proof}
First observe that if an agent can solve $\taskMCC$ then it can be used to solve $\taskMSC$ with the same memory cost (and vice versa) by Lemmas \ref{lemma:tally1}-\ref{lemma:tally2}. So we prove these results in the context of solving $\taskMCC$.

Suppose we had some other optimal agent $\mathcal{A}=(\mathcal{S},s_0,\mathcal{T})$ solving $\taskMCC$, where $\mathcal{T}=(\fnTransOne,\mathcal{M})$.
First, by Theorem \ref{thm:unitary}, we may assume $\fnTransOne$ is unitary; so, $\fnTransOne(s)=UsU^\dagger$ for some  unitary $U$.
Using Lemma \ref{lemma:diag}, we can diagonalize $U$, and write $\ket{s_0}$ in its eigenbasis:
\begin{equation}
\ket{s_0}=\sum_{j=0}^{p-1}a_j\ket{\phi_j}
\end{equation}
where $\ket{\phi_j}$ is the eigenvector with eigenvalue $e^{i\zeta/p}\omega^j$ where $\omega = e^{i 2\pi/p}$, and $a_j=0$ if that $e^{i\zeta/p}\omega^j$ is not an eigenvalue of $U$. Let $\ket{s_n}=U^n \ket{s_0};$ by lemma \ref{lemma:delta}, $\braket{s_0}{s_\delta}=\braket{s_0}{s_{p-\delta}}=0$ for all $\delta\in \Delta.$ However, since all the states are generated by the initial state, it is easy to show
\begin{align}
\braket{s_0}{s_\delta}&=\sum_{j=0}^{p-1}|a_j|^2 e^{i\zeta\delta/p}\omega^{j\delta}\\
\braket{s_0}{s_{p-\delta}}&=\sum_{j=0}^{p-1}|a_j|^2 e^{i\zeta(p-\delta)/p}\omega^{-j\delta}
\end{align}
Letting $b_j=|a_j|^2,$ we have a collection of $2|\Delta|$ \textit{complex} equations which $b_j$ must satisfy in order for $\ket{s_0}$ and $\ket{s_\delta}$ to be orthogonal, for all $\delta \in \Delta$. Writing these equations out explicitly, and canceling common factors, we have
\begin{align}
0&=\sum_{j=0}^{p-1}b_j \omega^{j\delta},\label{eq:consdelta}\\
0&=\sum_{j=0}^{p-1}b_j \omega^{-j\delta}.\label{eq:conspmdelta}
\end{align}

Let us concatenate these constraints \eqref{eq:consdelta}--\eqref{eq:conspmdelta} into a matrix equation, $C\mathbf{b}=\mathbf{0}.$ The resulting $2|\Delta|$--by--$p$ matrix of coefficients $C$ is special: by construction, it is equal to a submatrix of the DFT matrix.

Now, suppose only $m\le 2|\Delta|$ of the $b_j$ are nonzero. Then we can identify $(p-m)$ columns of $C$ labeled by indices $\{j \,|\, b_j = 0\}$; let us call these {\it zero-weight columns}. We then construct a new matrix $C'$ by removing $p-2|\Delta|$ of these {\it zero-weight columns} from the matrix $C$. The new matrix $C'$ describes a system of $2|\Delta|$ orthogonality constraints (arising from Equations \eqref{eq:consdelta} and \eqref{eq:conspmdelta}),  being applied to a subspace of $m \le 2|\Delta|$ non-zero variables $b_j$.

By Chebotarev's theorem on roots of unity, since this new matrix of equations is a submatrix of a DFT matrix of  prime order $p$, it must be invertible. Thus, the solution is $b_j = 0$ for \textit{all} $j$. (For an elementary proof of Chebotarev's theorem see \cite{tao2003}.)

However there is one more constraint on our coefficients. In order for our state to be normalized, we require
\begin{equation}
1=\sum_{j=0}^{p-1}b_j\label{eq:btra};
\end{equation}
This implies not all the $b_j$ can be zero, such that we have a contradiction.
 Therefore at least $2|\Delta| + 1$ of the $b_j$ are nonzero.

By Lemma \ref{lemma:diag}, the vectors $\ket{\phi_j}$ appearing with
$a_j\neq0$ are distinct orthonormal eigenvectors in
$\mathcal{H}(\agent)$. Since $b_j=|a_j|^2$, the preceding
bound therefore implies
\begin{equation}
\dim\mathcal{H}(\agent)
\geq \#\{j:b_j\neq0\}
\geq 2|\Delta|+1.
\end{equation}
completing the proof.

\end{proof}

\section{A natural distribution that small classical agents misclassify}
\label{sec:naturalguess}
In theorem \ref{thm:optimalclassical}, we demonstrated that an input distribution exists such that classical agents with memory less than $p$ must guess. We now construct a \textit{natural} distribution with this property. The idea is simple: select from a finite set of different words $\strInput^{(k)}$, then pad the input with a large, random multiple of $p$ 1's.
\begin{theorem}
\label{thm:naturalguess}
Consider the geometric probability distribution $X_{\alpha}$ parameterized by the `stopping probability' $0<\alpha<1$:
\begin{equation}
    X_{\alpha}(n) = \alpha(1-\alpha)^{n}
\end{equation}
for non-negative integer $n \in \mathbb{N}$.  We call $X_{\alpha}(n)$ the `prefix distribution.'

In order to construct our natural distribution $\setData$, define a set of words $\Upsilon$ by selecting a single representative $\strInput^{(k)}=1^{l_k}$ where $l_k>0$ from each of $\intOutputs$ classes, i.e. $\fnClass{\strInput^{(k)}}=k$ for $k=1,2,\ldots,\intOutputs$.
(Of course, every remainder can be represented this way.)
Choose a representative $\strInput'$ from $\Upsilon$ at random. Draw $n$ from the prefix distribution $X_{\alpha}(n)$.  Then create the new word $\strInput$ by concatenating $np$ 1's followed by $\strInput'$.

More precisely,
let $1^n$ denote the word formed by repeating $1$ $n$ times, and let $1^n \strInput$ represent $n$ ones followed by $\strInput$. Then, we define $\setData_\alpha$ such that, when drawing $\strInput \sim \setData_\alpha$,
\begin{equation}
    Pr(\strInput)= \begin{cases}
        \frac{\alpha(1-\alpha)^{n}}{\intOutputs} & \strInput=1^{np}\strInput', n \in \mathbb{N}, \strInput'\in \Upsilon\\
        0 & \mathrm{otherwise}
    \end{cases}
\end{equation}
(Note that this is well-defined, since $\Upsilon$ only contains one word from each class, and distinct classes cannot have words with the same remainder mod $p$.)

Consider some classical agent $\mathcal{A}$ with memory less than $p$. The expected loss per word of $\mathcal{A}$ must be at least $1-1/\intOutputs - \epsilon_{\agent}(\alpha)$, where $\epsilon_{\agent}(\alpha) \rightarrow 0$ as $\alpha \rightarrow 0.$
\end{theorem}

\begin{proof}
Let us consider the expected distribution of internal states of $\mathcal{A},$ conditioned on a particular representative $\strInput' = \strInput^{(k)} \sim \Upsilon.$ That is, we condition on $\strInput=1^{np}\strInput^{(k)}$.  This is equivalent to taking the expectation value over prefix lengths. Consider an input word produced from the natural distribution: its sum will be $np+ \kappa$, where $\kappa=l_k$ is the sum of $\strInput^{(k)}$. Therefore, conditioned on selecting $\strInput^{(k)}$, the expected internal states of $\mathcal{A}$ are
\begin{align}
    \eta_k &\equiv \mathbb{E}_{n\sim X_\alpha}[\fnTransOne^{np+\kappa}(s_0)]\nonumber\\
    &=\alpha\sum_{n=0}^\infty (1-\alpha)^n
    \fnTransOne^{np+\kappa}(s_0).
\end{align}

Now, $\fnTransOne$ is stochastic, so $\fnTransOne^p$ is also; thus, $\fnTransOne^p$ is power-bounded, and thus its peripheral eigenvalues are semisimple;
or, in other words, the peripheral `part' of $\fnTransOne^p$ is diagonalizable.
Let us define
\begin{align}
    R_\alpha = \alpha \sum_{n=0}^\infty (1-\alpha)^n (\fnTransOne^p)^n;
\end{align}
then, the limit $\lim_{\alpha \rightarrow 0} R_\alpha$ exists.

Now, let us consider the limit of $R_\alpha$ as $\alpha \rightarrow 0$.
Furthermore, consider an eigenvector $\psi$ of $\fnTransOne^p$ with eigenvalue $\lambda$.
Consider the action of $R_\alpha$ on this eigenvector as $\alpha \rightarrow 0$.
Let $\beta = 1 - \alpha$; then, $R_\alpha \psi = \frac{1-\beta}{1-\beta \lambda}\psi$.
Thus, as $\alpha \rightarrow 0$, $R_\alpha \psi \rightarrow \psi$ if $\lambda = 1$ and 0 otherwise.
Other generalized eigenspaces of $\fnTransOne^p$ vanish likewise.
In other words, $R_\alpha \rightarrow \Pi$, the spectral projection onto $\rm{Fix}(\fnTransOne^p)$.

Next, let the periods of the communicating class(es) of $\fnTransOne$ be $h_i$.
By the Perron-Frobenius structure of Lemma \ref{lemma:smallclassical}, the peripheral spectrum of (possibly reducible) $\fnTransOne$
is the union over recurrent classes of the $h_i$-th roots of unity, since distinct recurrent classes do not communicate.
Now, we may assume that the entire space on which $\fnTransOne$ acts is accessible.
But each communicating class must sit in the accessible space, and so each eigenvalue $\lambda$ is
an $h_i$-th root of unity for some $h_i \le \dimHacc < p$.

Now, consider a state inside $\mathrm{Fix}(\fnTransOne^p)$. Consider its decomposition into
eigenvectors of $\fnTransOne$. By the previous paragraph, the corresponding eigenvalues must satisfy $\lambda^{h_i}=1$
for some $h_i < p$. However, because the state lies within $\mathrm{Fix}(\fnTransOne^p)$, they
must \textit{also} satisfy $\lambda^p = 1$. Because $p$ is prime, this immediately implies
$\lambda = 1$. Therefore  $\mathrm{Fix}(\fnTransOne^p)=\mathrm{Fix}(\fnTransOne)$, and in particular
$\fnTransOne\Pi = \Pi$.

Now, let $\eta_k = \fnTransOne^{l_k} R_\alpha s_0$. Since $\fnTransOne\Pi = \Pi$,
$\fnTransOne^{l_k}\Pi$ = $\Pi$, and so $\eta_k \rightarrow \Pi s_0$ for \textit{every} $k$.
In other words, in the limit $\alpha \rightarrow 0$, the `final state' is independent of $k$.

The remainder follows as in Theorem \ref{thm:optimalclassical}: by the data processing inequality and Lemma \ref{lemma:guess}, the success probability is at most $1/K + o(1)$ as $\alpha \rightarrow 0$, and so
\begin{equation}
    \fnELoss{\agent} \ge 1 - \frac{1}{K} - \epsilon_{\agent}(\alpha)
\end{equation}
\end{proof}

Thus, we have demonstrated that for a natural family of input distributions, classical agents \textit{must} use at least $p$ memory states to perform better than guessing on the modular counting classification problem, and thus the Black box prize wheel problem.

\vspace{10pt}

\section{Worked example}

Suppose we wished to distinguish sums 0, 1, and 12 from each other mod 13. In this case, $\Xi_1=\{0\}, \Xi_2=\{1\}, $ and $\Xi_{3}=\{12\}$. The set of differences is $\Delta=\lbrace 1, 2\rbrace$. The matrix equation $A\mathbf{b}=\mathbf{e}_0$ can be treated as the equality constraints on a linear program, together with the positivity constraints on $\mathbf{b}$. In order to find a solution with as few nonzero entries as possible, we may proceed using e.g. the simplex method \cite{stone1991simplex} for finding extremal values. After determining which eigenvalues are needed, we can solve the resulting system exactly. One possibility is as follows:

\begin{widetext}
\begin{subequations}
\label{eq:bnexample}
\begin{align}
b_0&=\frac{\cos\bfrac{3\pi}{13} - \sin\bfrac{3\pi}{26}}{3\cos\bfrac{3\pi}{13} + 2\sin\bfrac{\pi}{26} +
    \sin\bfrac{3\pi}{26} + 2\sin\bfrac{5\pi}{26}},\\
b_2&=b_{11}=\frac{\cos\bfrac{3\pi}{13} + \sin\bfrac{\pi}{26}}{3\cos\bfrac{3\pi}{13} + 2\sin\bfrac{\pi}{26} +
    \sin\bfrac{3\pi}{26} + 2\sin\bfrac{5\pi}{26}},\\
b_5&=b_8=\frac{4\cos^2\bfrac{\pi}{26}\cos\bfrac{\pi}{13}
      \sin\bfrac{\pi}{26}}{(1 - \sin\bfrac{\pi}{26})(2 - 2\cos\bfrac{\pi}{13} -
        \cos\bfrac{2\pi}{13} + 5\cos\bfrac{3\pi}{13} + 4\sin\bfrac{\pi}{26} -
              2\sin\bfrac{5\pi}{26})},\\
b_n&=0 \textrm{ otherwise.}
\end{align}
\end{subequations}
\end{widetext}

Thus, we may set
\begin{equation}
\ket{s_0}=\sqrt{b_0}\ket{0} + \sqrt{b_2}\ket{1} + \sqrt{b_5}\ket{2} + \sqrt{b_8}\ket{3} +\sqrt{b_{11}}\ket{4} \notag
\end{equation}
And $U=\textrm{diag}(1,\omega^2,\omega^5,\omega^8,\omega^{11})$ where $\omega=e^{2\pi\ii/13}.$
(Note that while the eigenvalues appear in conjugate pairs in this solution, there are other optimal solutions where these pairs are `broken'.)

Finally, since in this case each $\Xi_j$ corresponds to a single state, for the final measurement $\mathcal{M}$ we can simply construct our measurement basis with $\ket{s_0}, \ket{s_1}, \ket{s_{12}},$ where $\ket{s_n} = U^n \ket{s_0}$. We can then add in two more orthonormal states to make a complete basis.
Since we have already shown these measurement basis states are orthogonal to each other, this completes our description of the quantum agent: a 5-dimensional system distinguishing three classes of words mod 13. This is a significant savings over the classical case, which needs all 13 dimensions.

In fact, this generalizes: distinguishing between $\Xi_1 = \{p-1\}$, $\Xi_2 = \{0\}$, and $\Xi_3 = \{1\}$ $\textrm{mod} \, p$ requires a five-dimensional quantum memory for every odd prime
$p\geq5$, and is therefore an example of unbounded quantum advantage.

\end{appendix}

\end{document}